\newcommand{\myhash}{%
  {\settoheight{\dimen0}{C}\kern-.05em\, \resizebox{!}{\dimen0}{\raisebox{\depth}{\#}}}}
\def\snr{\mathsf{SNR}}
\def\range{\clR}
\def\mindex#1{\index{#1}}
\def\sq{\hbox{\rlap{$\sqcap$}$\sqcup$}}
\def\qed{\ifmmode\sq\else{\unskip\nobreak\hfil
\penalty50\hskip1em\null\nobreak\hfil\sq
\parfillskip=0pt\finalhyphendemerits=0\endgraf}\fi\medskip}
\long\def\defbox#1{\framebox[.9\hsize][c]{\parbox{.85\hsize}{%
\parindent=0pt
\baselineskip=12pt plus .1pt      
\parskip=6pt plus 1.5pt minus 1pt 
 #1}}}
\long\def\beginbox#1\endbox{\subsection*{}%
\hbox{\hspace{.05\hsize}\defbox{\medskip#1\bigskip}}%
\subsection*{}}
\def\endbox{}
\newsavebox{\junk}
\savebox{\junk}[1.6mm]{\hbox{$|\!|\!|$}}
\def\det{{\mathop{\rm det}}}
\def\argmin{\mathop{\rm arg\, min}}
\newcommand{\field}[1]{\mathbb{#1}}
\def\ind{\field{I}}
\def\bP{{\mathbb P}}
\def\bR{{\mathbb R}}
\def\bfA{{\bf A}}
\def\bfB{{\bf B}}
\def\bfC{{\bf C}}
\def\bfD{{\bf D}}
\def\bfG{{\bf G}}
\def\bfI{{\bf I}}
\def\bfS{{\bf S}}
\def\bfd{{\bf d}}
\def\bfg{{\bf g}}
\def\bft{{\bf t}}
\def\bfu{{\bf u}}
\def\bfv{{\bf v}}
\def\bfw{{\bf w}}
\def\bfx{{\bf x}}
\def\bfy{{\bf y}}
\def\bfz{{\bf z}}
\def\frkB{{\mathfrak{B}}}
\def\bfmath#1{{\mathchoice{\mbox{\boldmath$#1$}}%
{\mbox{\boldmath$#1$}}%
{\mbox{\boldmath$\scriptstyle#1$}}%
{\mbox{\boldmath$\scriptscriptstyle#1$}}}}
\def\bfmY{\bfmath{Y}}
\def\bfmhhaY{\bfmath{\hhaY}} 
\def\bfmhhaY{\hbox to 0pt{$\widehat{\bfmY}$\hss}\widehat{\phantom{\raise 1.25pt\hbox{$\bfmY$}}}}
\def\til={{\widetilde =}}
\def\clB{{\cal B}}
\def\clI{{\cal I}}
\def\clN{{\cal N}}
\def\clO{{\cal O}}
\def\clR{{\cal R}}
\def\clS{{\cal S}}
\def\clV{{\cal V}}
\def\clW{{\cal W}}
\def\clX{{\cal X}}
\def\clZ{{\cal Z}}
 \def\FRAC#1#2#3{\genfrac{}{}{}{#1}{#2}{#3}}
\def\ddtp{{\mathchoice{\FRAC{1}{d^{\hbox to 2pt{\rm\tiny +\hss}}}{dt}}%
{\FRAC{1}{d^{\hbox to 2pt{\rm\tiny +\hss}}}{dt}}%
{\FRAC{3}{d^{\hbox to 2pt{\rm\tiny +\hss}}}{dt}}%
{\FRAC{3}{d^{\hbox to 2pt{\rm\tiny +\hss}}}{dt}}}}
\def\half{{\mathchoice{\FRAC{1}{1}{2}}%
{\FRAC{1}{1}{2}}%
{\FRAC{3}{1}{2}}%
{\FRAC{3}{1}{2}}}}
\def\average#1,#2,{{1\over #2} \sum_{#1}^{#2}}
\def\eye(#1){{\bf(#1)}\quad}
\newtheorem{theorem}{{\bf Theorem}}[section]
\newtheorem{proposition}[theorem]{{\bf Proposition}}
\newtheorem{lemma}[theorem]{{\bf Lemma}}
\def\eq#1/{(\ref{e:#1})}
\newcommand{\inp}[2]{{\langle #1, #2 \rangle}}
\newcommand{\beqn}[1]{\notes{#1}%
\begin{eqnarray} \elabel{#1}}
\newcommand{\eeqn}{\end{eqnarray} }
\newcommand{\beq}[1]{\notes{#1}%
\begin{equation}\elabel{#1}}
\newcommand{\eeq}{\end{equation}}
\def\bdes{\begin{description}}
\def\edes{\end{description}}
\newcounter{rmnum}
\newcounter{anum}
\def\ass(#1:#2){(#1\ref{#1:#2})}
\def\ritem#1{
\item[{\sf \ass(\current_model:#1)}]
}
\newenvironment{recall-ass}[1]{%
\begin{description}
\def\current_model{#1}}{
\end{description}
}
\long\def\comment#1{}
\newfont{\bbb}{msbm10 scaled 700}
\newfont{\bb}{msbm10 scaled 1100}
\renewcommand{\det}{{\hbox{det}}}
\newcommand{\transp}{{\sf T}}
\newlength{\noteWidth}
\long\def\notes#1{\ifinner
             {\tiny #1}
             \else
              \marginpar{\parbox[t]{\noteWidth}{\raggedright\tiny #1}}
               \fi}
\pgfplotsset{compat=newest}
\newcounter{ass}
\newenvironment{assumption}{\begin{list}{{\upshape (P\arabic{ass}) \ }}{\usecounter{ass}
\setlength{\leftmargin}{14pt}
\setlength{\rightmargin}{12pt}
\setlength{\itemindent}{-1pt}
}}{\end{list}}
\def\Ass#1{{(P\ref{ass:#1})}}
\begin{document}

\title{Unlabeled Sensing with Random Linear Measurements}

\author{\IEEEauthorblockN{Jayakrishnan Unnikrishnan,
Saeid Haghighatshoar and Martin Vetterli\footnote{J.U. (dr.j.unnikrishnan@ieee.org) was with the Audiovisual Communications Lab, Ecole Polytechnique F\'{e}d\'{e}rale de Lausanne, Lausanne, Switzerland, and is now with the Software Science \& Analytics Group,
General Electric Global Research, Niskayuna, NY, USA. S.H. (saeid.haghighatshoar@tu-berlin.de) was with the Information Processing Group, Ecole Polytechnique F\'{e}d\'{e}rale de Lausanne, Lausanne, Switzerland, and is now with Communications and Information Theory Group, Technische Universit\"{a}t Berlin, Berlin, Germany. M.V. (martin.vetterli@epfl.ch) is with the Audiovisual Communications Lab, Ecole Polytechnique F\'{e}d\'{e}rale de Lausanne, Lausanne, Switzerland.\\ Portions of the results presented here were presented in an abridged form in \cite{unnhagvet15}.\\ This research was supported by Swiss National Science Foundation under grant number 200021-146423 and by ERC Advanced Investigators Grant 247006 on Sparse Sampling Theory, Algorithms and Applications (SPARSAM).
}}}


\maketitle

\begin{abstract}
We study the problem of solving a linear sensing system when the observations are unlabeled.
Specifically we seek a solution to a linear system of equations $\bfy =\bfA\bfx$ when the order of the observations in the vector $\bfy$ is unknown.
Focusing on the setting in which $\bfA$ is a random matrix with i.i.d. entries, we show that if the sensing matrix $\bfA$ admits an oversampling ratio of $2$ or higher, then with probability $1$ it is possible to recover $\bfx$ exactly without the knowledge of the order of the observations in $\bfy$.
Furthermore, if $\bfx$ is of dimension $K$, then any $2K$ entries of $\bfy$ are sufficient to recover $\bfx$.
This result implies the existence of deterministic unlabeled sensing matrices with an oversampling factor of $2$ that admit perfect reconstruction.
The result is universal in that recovery is guaranteed for all possible choices of $\bfx$. 
While the proof is constructive, it uses a combinatorial algorithm which is not practical, leaving the question of complexity open. 
We also analyze a noisy version of the problem and show that local stability is guaranteed by the solution.
{In particular, for every $\bfx$, the recovery error tends to zero as the signal-to-noise-ratio tends to infinity.}
The question of universal stability is unclear.
We also obtain a converse of the result in the noiseless case: If the number of observations in $\bfy$ is less than $2K$, then with probability $1$, universal recovery fails, i.e., with probability $1$, there exists distinct choices of $\bfx$ which lead to the same unordered list of observations in $\bfy$. 
In terms of applications, the unlabeled sensing problem is related to data association problems encountered in different domains including robotics where it is appears in a method called ``simultaneous localization and mapping" (SLAM), multi-target tracking applications, and in sampling signals in the presence of jitter.

\end{abstract}


\section{Introduction}\label{sec:intro}


Linear sensing and monitoring systems in several disciplines \cite{pravet08,robtre80,aka12,vetkovgoy14} rely on solving a linear system of equations of the form
$
\bfy = \bfA \bfx
$
where $\bfy \in \bR^N$ is an observation vector, $\bfA$ is an $N \times K$ measurement matrix and $\bfx \in \bR^K$ is an unknown system state. 
If these equations represent the true relationship between $\bfy$ and $\bfx$, we know from the basic results of linear algebra that $\bfx$ can be retrieved exactly provided $\bfA$ has rank equal to $K$.
The minimum value for $N$ under which this condition is  satisfied is $N = K$.
We know also that, in the absence of further information on $\bfx$, it is impossible to recover $\bfx$ from $\bfy$ without $\bfA$ having full rank $K$.

In some linear sensing systems it may be practically difficult or impossible to register the entries of $\bfy$ in the correct order. 
In the extreme scenario, one might have access to all the entries of $\bfy$ but not their labels, i.e., one might not know which values correspond to which locations within the vector $\bfy$.
Equivalently, one only has access to $\bfy= {\bf \Pi} \bfA \bfx $ where ${\bf \Pi}$ is an unknown permutation matrix.
In this paper we focus on such unlabeled sensing systems and discuss conditions under which $\bfx$ can be recovered from the unlabeled entries of $\bfy$.

An illustration of unlabeled sensing and a comparison with compressed sensing \cite{don06,cantao06} is provided in Fig. \ref{fig:csvus}.
In the compressed sensing framework, there is no unknown  permutation ${\bf \Pi}$  and thus no ambiguity in the order of the entries in $\bfy$.
The challenge in these problems is to identify the positions and entries of the non-zero entries of $\bfx$, assuming that $\bfx$ has a fixed known number of non-zero entries.
In unlabeled sensing, there is no assumption on the sparsity of $\bfx$, but the order of the entries in $\bfy$ is unknown.
The challenge is to recover $\bfx$ from the unlabeled entries of $\bfy$. 

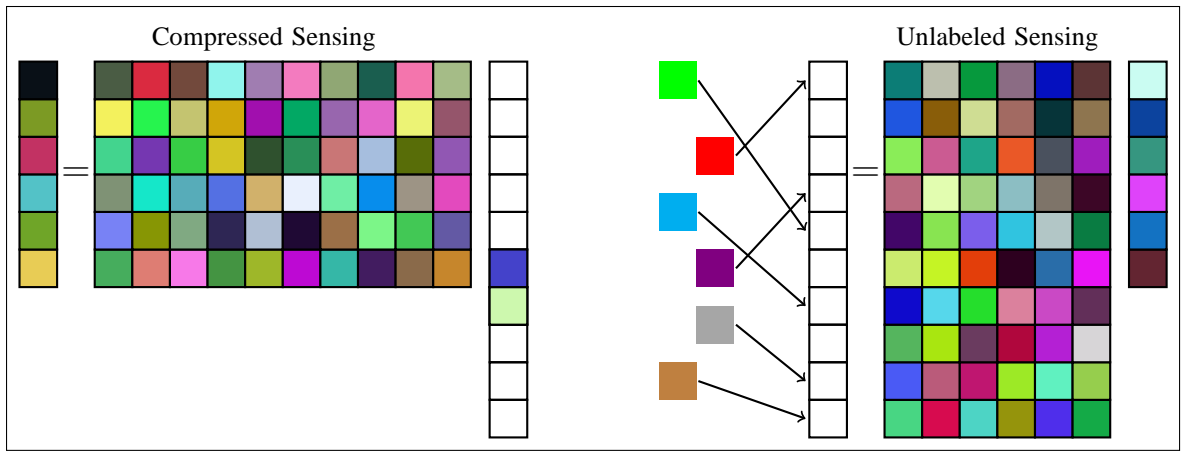
\begin{figure}[h]
\centering
\begin{tikzpicture}[scale=2, show background rectangle]

\begin{scope}[scale=0.25]

\foreach \y in {-3, -2, ..., 2}{
	 \edef\R{\pdfuniformdeviate 255}
    	 \edef\G{\pdfuniformdeviate 255}
    	 \edef\B{\pdfuniformdeviate 255}
     	\xdefinecolor{MyColor}{RGB}{\R,\G,\B}
	\filldraw[draw=black, thick, fill=MyColor] (0,\y) rectangle (1,\y+1);
}

\draw (1.5,0) node {{\Large $=$}};

\foreach \x in {2,3,...,11}{
	\foreach \y in {-3,-2,..., 2}{
	 	\edef\R{\pdfuniformdeviate 255}
    		 \edef\G{\pdfuniformdeviate 255}
    		 \edef\B{\pdfuniformdeviate 255}
     		\xdefinecolor{MyColor}{RGB}{\R,\G,\B}
		\filldraw[draw=black, thick, fill=MyColor] (\x,\y) rectangle (\x+1,\y+1);
	}
}
		
\foreach \y in {2,1,...,-7}{
	\draw[thick] (12.5,\y) rectangle (13.5,\y+1);
}

\foreach \num in {1,2,3}{
	\edef\y{\pdfuniformdeviate 9}
	
	\edef\R{\pdfuniformdeviate 255}
    	\edef\G{\pdfuniformdeviate 255}
    	\edef\B{\pdfuniformdeviate 255}
     	\xdefinecolor{MyColor}{RGB}{\R,\G,\B}
	
	\filldraw[draw=black, thick, fill=MyColor] (12.5,\y-8) rectangle (13.5,\y-7);
}

\draw (6.5,3.6) node {Compressed Sensing};

\end{scope}


\begin{scope}[xshift=5cm, scale=0.25]

\foreach \y in {-3, -2, ..., 2}{
	 \edef\R{\pdfuniformdeviate 255}
    	 \edef\G{\pdfuniformdeviate 255}
    	 \edef\B{\pdfuniformdeviate 255}
     	\xdefinecolor{MyColor}{RGB}{\R,\G,\B}
	\filldraw[draw=black, thick, fill=MyColor] (9.5,\y) rectangle (10.5,\y+1);
}

\draw (2.5,0) node {{\Large $=$}};

\foreach \y in {-7,-6,...,2}{
	\foreach \x in {3,4,...,8}{
	 	\edef\R{\pdfuniformdeviate 255}
    		 \edef\G{\pdfuniformdeviate 255}
    		 \edef\B{\pdfuniformdeviate 255}
     		\xdefinecolor{MyColor}{RGB}{\R,\G,\B}
		\filldraw[draw=black, thick, fill=MyColor] (\x,\y) rectangle (\x+1,\y+1);
	}
}
		
\foreach \y in {2,1,...,-7}{
	\draw[thick] (1,\y) rectangle (2,\y+1);
}

\draw (6,3.6) node {Unlabeled  Sensing};

\edef\X{-2}
\fill[fill=red] (\X,0) rectangle (\X+1,1);
\draw[->, thick] (\X+1.05,0.5) -- (0.9,2.5);

\fill[fill=green] (\X-1,2) rectangle (\X,3);
\draw[->, thick] (\X+0.05,2.5) -- (0.9,-1.5);

\fill[fill=cyan] (\X-1,-1.5) rectangle (\X,-0.5);
\draw[->, thick] (\X+0.05,-1) -- (0.9,-3.5);

\fill[fill=violet] (\X,-3) rectangle (\X+1,-2);
\draw[->, thick] (\X+1.05,-2.5) -- (0.9,-0.5);

\fill[fill=gray!70] (\X,-4.5) rectangle (\X+1,-3.5);
\draw[->, thick] (\X+1.05,-4) -- (0.9,-5.5);

\fill[fill=brown] (\X-1,-6) rectangle (\X,-5);
\draw[->, thick] (\X+0.05,-5.5) -- (0.9,-6.5);

\end{scope}
\end{tikzpicture} 
\caption{Comparison of compressed sensing and unlabeled sensing.}
\label{fig:csvus}
\end{figure}

Unlabeled sensing has potential applications in a number of different fields.
Consider the following example.
You are blindfolded in a room, and the floor is not flat but a 3 dimensional terrain model. 
You can ÔÕsampleÕÕ the height, but you donÕt know where you take the samples.
Is it possible, under some assumption about the terrain model, to recover the location of the samples and the shape of the terrain? 
This is related to a celebrated problem in robotics called ÔÕsimultaneous location and mappingÕÕ (SLAM) \cite{thrleo08}. 
Similar data-association problems also arise in the task of assigning observations to targets in multi-target tracking problems that arise in radar applications \cite{bla86}.
More generally, consider the problem of reconstructing a spatial field from samples.
Let $\bfx$ denote the representation of the field in some $K$-dimensional basis.
Each measurement can be interpreted as an inner product of $\bfx$ with a ``sampling vector'' unique to the location where the sample was taken.
Consider a mobile sensing scheme \cite{unnvet13,unnvet13b} where a moving sensor samples the field at $N$ different locations. 
Further suppose that the mobile sensor does not have access to accurate spatial measurements, although the set of $M$ potential sampling locations and the sampling vectors corresponding to the potential locations are known a priori.
The field reconstruction problem one faces in this situation is precisely the unlabeled sensing problem studied in this paper.
A similar situation arises in time-domain sampling in the presence of clock jitter \cite{bal62} which makes it impossible to associate sampled observations to the correct time indices. 
There is some prior work on reconstruction of bandlimited signals from samples at unknown locations.
In \cite{bro07} an approximate solution to this problem is proposed under the setting of continuous-time measurements and bandlimited signals.
In \cite{marvet00}, an iterative procedure to reconstruct discrete-time bandlimited signals is proposed.
Our work differs from that of these papers in that we do not restrict ourselves to a bandlimited signal model.
Our main results are focused on the setting in which the sampling vectors are randomly distributed.
In such settings we show that an exact solution to the unlabeled sensing problem is possible when we take twice as many samples as required in classic labeled sensing. 

The basic unlabeled sensing problem can be mathematically stated as follows.
Suppose
\begin{equation}
\bfy =   \bfB \bfx \mbox{ with } \bfB = {\bf \Pi} \bfA, \label{eqn:permobs}
\end{equation}
where $\bfA$ is a known matrix and ${\bf \Pi}$ is an unknown permutation matrix.
The goal is to recover $\bfx$ given the observation vector $\bfy$, or equivalently, to recover $\bfx$ given the unlabeled entries of the vector $\bfA \bfx$.

As a simple illustration of unlabeled sensing consider the case of $K=2$ and $N=3$ with 
$
\bfA = \left[ \begin{array}{ccc}
1 &0  & 0\\
0 & 1 &1\end{array} \right]^\transp.
$ 
In this case, the entries of $\bfy$ are $\bfx_1$, $\bfx_2$, $\bfx_2$, and thus $\bfx_2$ can be identified as the entry of $\bfy$ that occurs twice and $\bfx_1$ as the remaining entry of $\bfy$.\footnote{For $K > 2$, an analogous construction of $\bfA$ can be constructed by combining $i$ repetitions of the $i$-th row of $I_K$, the $K \times K$ identity matrix, for all $1 \leq i \leq N$. 
In this case $N = K(K+1)/2$.
This again leads to a system of equations such that $\bfx$ can be recovered from the unlabeled entries of $\bfy$.
In this case the labels of the observations are recovered from the number of repetitions of the observations.} 
This example can be further generalized by replacing the third row with an unbalanced convex combination of the first two rows.
Consider the matrix 
\begin{eqnarray} \label{eqn:Bconvexample}
\bfA = \left[ \begin{array}{cc}
1 &0  \\
0 & 1 \\
\alpha & (1-\alpha) \end{array} \right].
\end{eqnarray}
where $\alpha \in [0, 1] \setminus \{0.5\}$.
In this case the entries of $\bfy$ are $\bfx_1$, $\bfx_2$, $\alpha \bfx_1 + (1-\alpha) \bfx_2$.
Thus we know immediately that if the entries in $\bfy$ are sorted in increasing order, the middle entry of $\bfy$ is $\alpha \bfx_1 + (1-\alpha) \bfx_2$, i.e., if $\bfy_{(1)} \leq \bfy_{(2)} \leq \bfy_{(3)}$ represent the ordered entries of $\bfy$, then
$
\bfy_{(2)} = \alpha \bfx_1 + (1-\alpha) \bfx_2.
$
Moreover, if $\alpha \in (0.5,1]$, then $\bfx_1$ is the remaining entry of $\bfy$ that is closer to $\bfy_{(2)}$ and $\bfx_2$ is the remaining entry of $\bfy$, i.e.,
$
\bfx_1 = \bfy_{(i)} \mbox{ and } \bfx_2 = \bfy_{(j)} 
$
where  $i,j \mbox{ satisfy } |\bfy_{(i)} - \bfy_{(2)}| \leq |\bfy_{(j)} - \bfy_{(2)}|, i,j \in \{1,3\}.$
Similarly if $\alpha \in [0, 0.5) $, then 
$
\bfx_1 = \bfy_{(j)} \mbox{ and } \bfx_2 = \bfy_{(i)} 
$
$\mbox{where } i,j \mbox{ satisfy } |\bfy_{(i)} - \bfy_{(2)}| \leq |\bfy_{(j)} - \bfy_{(2)}|, i,j \in \{1,3\}.$
Thus whenever $\bfA$ has the form of \eqref{eqn:Bconvexample}, the vector $\bfx$ can be uniquely recovered from the unlabeled entries of $\bfy$.
However, in general this property ceases to hold if the last row of $\bfA$ is replaced with a random vector in $\bR^2$.
As an example suppose 
$
\bfA = \left[ \begin{array}{ccc}
1 &0 &1 \\
0 & 1 &2\end{array} \right]^\transp.
$
In this case, it can be verified that
\[
\bfA  \left[ \begin{array}{c}
1  \\
-3 \end{array} \right] =  \left[ \begin{array}{c}
1   \\
-3  \\
-5 \end{array} \right] \mbox{ and } \bfA  \left[ \begin{array}{c}
-5 \\
1 \end{array}\right] =  \left[ \begin{array}{c}
-5   \\
1  \\
-3 \end{array} \right].
\]
Thus the choice of vectors  
$\left[ \begin{array}{cc}
1 & -3 \end{array} \right]^\transp$ 
and 
$ \left[ \begin{array}{cc}
-5& 1 \end{array}\right]^\transp$ 
for $\bfx$ lead to the same entries $\{-5,-3,1\}$ in $\bfy$. 
Thus if the true $\bfx$ were one of these vectors, it would be impossible to determine the value of $\bfx$ given the unlabeled entries of $\bfy$.
However, now suppose we append another randomly chosen row to $\bfA$.
For example, if 
$
\bfA = \left[ \begin{array}{cccc}
1 &0  & 1 &1\\
0 & 1 &2&-1\end{array} \right]^\transp
$
then it can be easily verified that if 
$
\bfA \bfx =  {\bf \Pi} \bfA \bfx'
$
holds for any $\bfx, \bfx' \in \bR^4$ and any permutation matrix ${\bf \Pi}$, then $\bfx = \bfx'$.
Thus in this case $\bfx$ is uniquely determined by the unordered entries of $\bfy = {\bf \Pi} \bfA \bfx$.
This property continues to be true w.p. $1$ if the entries in the last two rows of $\bfA$ are replaced with i.i.d. random variables drawn from a continuous probability distribution. 
Furthermore, this property continues to hold w.p. $1$ if all entries of all rows of $\bfA$ are replaced with i.i.d. random variables drawn from a continuous probability distribution. 
Note that in this setting the number of rows in $\bfA$ is equal to $4$, which is exactly twice the number of entries in $\bfx$.
In fact the unique recovery of $\bfx$ from the unlabelled entries of $\bfy$ continues to hold for vectors $\bfx$ of arbitrary length $K$ provided $\bfA$ is a $2K \times K$ random matrix with i.i.d. entries drawn from a continuous probability distribution.

The main result in this paper is on a more general setting in which the $N$ rows of $\bfB$ are randomly drawn without replacement from the rows of a known $M \times K$ matrix $\bfA$ with i.i.d. random entries for some $M \geq N$.
This leads to a variant of \eqref{eqn:permobs}. 
Suppose
\begin{equation}
\bfy = \bfB \bfx \mbox{ where }\bfB=\bfS \bfA \label{eqn:permobs2}
\end{equation}
and  $\bfS$ is a $N \times M$ selection matrix, i.e., the rows of $\bfS$ comprise some $N$ distinct rows of the $M \times M$ identity matrix, arranged in an arbitrary order.
While $\bfA$ is known by assumption, the selection matrix $\bfS$ is unknown. 
The goal is to recover $\bfx$ without the explicit knowledge of $\bfS$.
It is further assumed that the entries of $\bfA$ are random i.i.d. variables drawn from some continuous probability distribution on $\bR$.
Our main result is that for $N \geq 2K$, it is possible to uniquely recover $\bfx$ from $\bfy$ w.p. $1$.
For the particular case of $M=N=2K$ this result implies that, if one is given unlabeled random projections of the vector $\bfx$ with $2K$ random vectors, it is possible to recover $\bfx$ exactly w.p. $1$.
The ratio of $N$ to $K$ can be viewed as an oversampling factor as we essentially use $N$ (unlabeled) samples to solve for $K$ unknowns.
Thus the result for random matrices implies the existence of sampling matrices $\bfB$ with the oversampling factor $\frac{N}{K}=2$ such that perfect recovery of the signal is possible without any knowledge about the order of the samples.
As an illustration of this result consider the case with $K=1$.
In this case, $\bfB$ is a $2 \times 1$ vector with two i.i.d. entries.
Let
$\bfB = \left[\begin{array}{cc}b_1&b_2\end{array}\right]^\transp$. 
With probability $1$ the ratio $\frac{|b_1|}{|b_2|}$ is different from $1$.
Without loss of generality let us assume $|b_1| > |b_2| $.
Then it is clear that the entry in $\bfy$ with the larger magnitude is equal to $b_1 \bfx$ and thus $\bfx$, which is now a scalar, can be recovered by dividing this entry by $b_1$.

In a practical implementation of unlabeled sensing there are other problems that one might need to solve which we do not address here.
For instance, although we showed that recovery of the unknown $\bfx$ is possible from unlabeled measurements, we do not study the problem of designing an efficient algorithm to recover $\bfx$. 
Our solution is to consider all possible permutations of the unlabeled observations which might be prohibitively complex in large dimensional problems. 
We also do not address the question of the computational complexity of the optimal algorithm.
Another practical problem is the requirement of learning the sensing matrix $\bfA$.
An interesting question is whether it is possible to recover $\bfA$ using a training phase with known $\bfx$'s but with unknown selection matrices $\bfS$ that might change arbitrarily from one observation to the next.

Although the problem of unlabeled sensing in the present form is new, similar problems have been studied by various authors in the past.
For example,  in \cite{emidaugri14}, a variant of the problem in \eqref{eqn:permobs} with repeated observations was studied under a sparsity assumption on $\bfx$ in the context of dictionary learning.
The authors assume that the permutation matrix ${\bf \Pi}$ remains invariant for multiple observations so that the effective sensing system is of the same form as  \eqref{eqn:permobs} but with vectors $\bfx$ and $\bfy$ replaced with matrices with $T$ columns each, where $T$ represents the number of observations.
They propose a branch-and-bound algorithm for solving the problem.
The main difference in the framework of that work from ours is the fact that in \cite{emidaugri14} multiple observations are available under the same permutation matrix ${\bf \Pi}$ which, together with the sparsity assumption, simplifies the task of estimating the unknown permutation.
More generally a number of different problems involve the inversion of the effects of an unknown permutation.
A family of such examples is that of de-anonymization attacks \cite{narshm08,naiunnthivet15} on anonymized databases, which is a well-studied problem in privacy applications. 
In special cases computationally tractable solutions to these problems are possible \cite{naiunnthivet15} and in some others relaxation methods are adopted to approximate the solution \cite{fogjenbacdas13}.
Another application that requires inverting the effect of a permutation occurs in communication channels with deletions and transpositions \cite{schzuc99}.
In such applications the decoding task of determining what message was sent based on the received messages, is a finite-alphabet version of the unlabeled sensing problem studied in this paper.



\medskip



\noindent \textit{Contributions}: 
The main contributions of this paper are summarized below. 

\begin{itemize}
\item We introduce the problem of unlabeled sensing and demonstrate that for signals of arbitrary dimensions, there exist sensing matrices that allow perfect recovery of a signal from unlabeled linear measurements. 
Moreover, an oversampling factor of $2$ is sufficient, i.e., there exist sensing matrices with twice as many rows as columns which gurantee perfect recovery from unlabeled measurements.
\item We identify a property of random matrices that implies that random sensing matrices with oversampling factor of $2$ or higher, w.p. $1$, allow universal recovery of signals from unlabeled measurements. Furthermore, we obtain a converse showing that oversampling by at least $2$ is necessary for this property to hold.
\end{itemize}

\noindent \textit{Paper organization}: 
The rest of the paper is organized as follows.
We introduce the formal problem statement and main result in Section \ref{sec:problem}, and then present the recovery algorithm in Section \ref{sec:rec_alg}, 
and the proof of the main result in Section \ref{sec:proof}.
We present a converse result in Section \ref{sec:converse} and analyze local stability in the presence of noise in Section \ref{sec:stability}.
We conclude with some discussion in Section \ref{sec:conc}.

\medskip

\noindent \textit{Notations and terminology}: 
For any positive integer $N$, we use $[N]$ to denote the set $\{1,2,\ldots,N\}$.
We denote matrices using bold capital letters, e.g., $\bfA$, and vectors using bold lower case letters, e.g., $\bfx$.
By default every vector is a column vector, unless explicitly defined to be a row vector.
The usage will be clear from context.
The $i$-th component of any vector $\bfx$, is denoted $\bfx_i$.
For $N$-dimensional row or column vectors $\bfu$ and $\bfv$, the inner product is denoted by $\langle \bfu, \bfv \rangle = \sum_{i=1}^N \bfu_i \bfv_i$.
The notation $\bfv^i$ is used to denote the $i$-th vector in a sequence of vectors.
A similar notation is used for matrices.
With a slight abuse of terminology, for any $P \times K$ matrix $\bfA$, we refer to any matrix obtained by an arbitrary permutation of $Q$ rows of $\bfA$ as a $Q \times K$ submatrix of $\bfA$.
If $\bfA$ and $\bfB$ are $N \times K$ matrices, we denote by $[\bfA,\bfB]$ the $N \times 2K$ matrix obtained by appending the columns of $\bfB$ to the columns of $\bfA$.
If $\bfu$ and $\bfv$ are column vectors, we use $[\bfu; \bfv]$ to denote the column vector obtained by appending the rows of $\bfv$ to the rows of $\bfu$.
For any $M \times N$ matrix $\bfA$ we denote by $\clN(\bfA) := \{\bfx \in \bR^N: \bfA \bfx = 0\}$ the null-space, and by $\clR(\bfA) := \{\bfA \bfx: \bfx \in \bR^N\}$ the range-space of $\bfA$. 
We denote the $N\times N$ identity matrix with $\bfI_N$. {We use calligraphic letters $\clX, \clV, \dots$ for sets, and the abbreviation {``w.p."} for ``with probability".}

\section{Problem Statement and Main Result}\label{sec:problem}
Let $\bfA$ denote an $M \times K$ random matrix whose entries are i.i.d. random variables drawn from a continuous probability distribution\footnote{In other words, the probability density function $f$ is absolutely continuous with respect to the Lebesgue measure on $\bR$.} over $\bR$ with probability density function $f$. 
Assume that $\bfA$ is known.
Let $\bfx \in \bR^K$ be an arbitrary unknown vector and suppose that the inner products of $\bfx$ with $N$ distinct unknown rows of $\bfA$ are known. 
As the rows selected for computing the inner products are not known, the measurements can be viewed as being unlabeled, for we do not know which measurement corresponds to which row of $\bfA$.
In this setting, we consider the problem of recovering $\bfx$ from these $N$ unlabeled measurements.
The main result of this paper is that it is possible to recover $\bfx$ from these unlabeled measurements provided $M \geq N \geq 2K$ as summarized in the following theorem.

\begin{theorem}[Main Result]\label{thm:main_theorem}
Let $\bfA$ denote a known $M \times K$ matrix with i.i.d. random entries drawn from an arbitrary continuous probability distribution $f$ over $\bR$. 
Let $\bfB$ denote an unknown $N \times K$ sub-martix of $\bfA$, i.e., $\bfB$ consists of some $N \leq M$ rows of $\bfA$.
If $N \geq 2K$, then, w.p.  $1$, every $\bfx \in \bR^K$ can be uniquely recovered from the measurements $\bfy = \bfB \bfx$ without the explicit knowledge of $\bfB$. \qed
\end{theorem}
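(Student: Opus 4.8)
The plan is to reduce the claim to finitely many measure‑zero events. First I would reduce to the case $N=2K$: since $M\ge N\ge 2K$, and since any coincidence of unordered observation lists coming from $N$‑row submatrices of $\bfA$ restricts to a coincidence of the sublists indexed by any $2K$ of the rows involved, a failure for some $N\ge 2K$ forces a failure for $N=2K$ with the same $\bfA$; hence it suffices to prove the theorem for $N=2K$ (and $M\ge 2K$), which also yields the ``any $2K$ entries suffice'' statement. With $N=2K$, recovery of every $\bfx$ fails only if there are $\bfx\ne\bfx'$ in $\bR^K$, distinct indices $i_1,\dots,i_{2K}\in[M]$ and distinct indices $j_1,\dots,j_{2K}\in[M]$ with $\langle \bfa^{i_\ell},\bfx\rangle=\langle \bfa^{j_\ell},\bfx'\rangle$ for all $\ell$, where $\bfa^i$ denotes the $i$-th row of $\bfA$. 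Call $P=\{(i_\ell,j_\ell)\}_\ell$ a \emph{pairing}; writing $\bfz=[\bfx;\bfx']$ and letting $\bfC_P(\bfA)$ be the $2K\times 2K$ matrix with rows $(\bfa^{i_\ell},\,-\bfa^{j_\ell})$, the failure condition reads $\bfz\in\clN(\bfC_P(\bfA))$ with $\bfz\notin\Delta$, where $\Delta:=\{[\bfw;\bfw]:\bfw\in\bR^K\}$. There are finitely many pairings, so by a union bound (and absolute continuity of $f$) it suffices to show that for each fixed $P$ the set of $\bfA$ with $\clN(\bfC_P(\bfA))\not\subseteq\Delta$ has Lebesgue measure zero in $\bR^{MK}$.

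Next I would reformulate this as a rank statement. Since $\Delta=\clN([\bfI_K,-\bfI_K])$, one has $\clN(\bfC_P(\bfA))\subseteq\Delta$ if and only if every row of $[\bfI_K,-\bfI_K]$ lies in the row space of $\bfC_P(\bfA)$, i.e.\ if and only if $\Delta^-:=\{(\bfv,-\bfv):\bfv\in\bR^K\}\subseteq\mathrm{rowspace}(\bfC_P(\bfA))$. Associate to $P$ the directed graph $G_P$ on $\{i_\ell\}\cup\{j_\ell\}$ with edges $i_\ell\to j_\ell$; because all tails are distinct and all heads are distinct, $G_P$ is a disjoint union of simple directed paths and cycles, a diagonal pair $i_\ell=j_\ell$ being a length‑one cycle. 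Let $c$ be the number of cycles. If $c\ge K$: summing the rows of $\bfC_P(\bfA)$ around a cycle with vertex set $S$ yields $(\bfg_S,-\bfg_S)$ with $\bfg_S=\sum_{v\in S}\bfa^v$; distinct cycles have disjoint vertex sets, so w.p.\ $1$ any $K$ of these $c$ vectors $\bfg_S$ are linearly independent, whence the corresponding $(\bfg_S,-\bfg_S)$ lie in $\mathrm{rowspace}(\bfC_P(\bfA))$ and span $\Delta^-$, and we are done for this $P$.

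The remaining case $c<K$ is where the real work lies: here I would show that $\bfC_P(\bfA)$ is generically invertible, so that its row space is all of $\bR^{2K}\supseteq\Delta^-$. Since $\det\bfC_P(\bfA)$ is a polynomial in the entries of $\bfA$, it is enough to exhibit a single matrix $\bfA^\star$ (with the relevant rows distinct) for which $\det\bfC_P(\bfA^\star)\ne 0$, for then the exceptional set is a proper algebraic subvariety and hence $f$‑null. The construction of $\bfA^\star$ should be organized along the path/cycle decomposition of $G_P$ — the source of a path component contributes a row of $\bfC_P$ whose ``left half'' appears in no other row, which lets one peel rows off while keeping the rank maximal, and a short cycle contributes rows that are generically independent of the rest — but making this robust for an arbitrary pattern of paths and cycles, equivalently proving a clean formula such as $\mathrm{rank}\,\bfC_P(\bfA)=\min(n,2K,K+n-c)$ for the analogous $n$‑row pairing matrices, is the main obstacle. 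A subtlety to watch is that naive Cauchy--Binet does not settle invertibility, since the minor expansion of $\det\bfC_P$ can vanish by cancellation of non‑zero terms (as already happens for the all‑diagonal pairing). Once a fixed $P$ is handled in both regimes, the union over the finitely many pairings finishes the argument; the matching necessity of the factor $2$ is the content of the converse in Section~\ref{sec:converse}.
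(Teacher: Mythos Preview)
Your strategy is exactly the paper's: reduce to finitely many ``pairings'', decompose each into cycles and paths of the functional digraph $G_P$, dispose of the case $c\ge K$ by the telescoping sums $(\bfg_S,-\bfg_S)$ (this is Proposition~\ref{complete>=K}), and in the remaining case $c<K$ show that $\bfC_P(\bfA)$ is generically invertible by exhibiting one assignment with nonzero determinant. Your upfront reduction to $N=2K$ is a clean way to organize things; the paper instead carries a general $N$ and passes to the first $2K$ rows (its matrix $\bfD$) only inside the ``fewer than $K$ complete cycles'' branch, but the two are equivalent.

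The gap you flag as ``the main obstacle'' is real, and it is precisely where almost all of the paper's work sits. Producing an $\bfA^\star$ with $\det\bfC_P(\bfA^\star)\neq 0$ for every pairing with $c<K$ is not a routine peeling argument: the paper carries it out by induction on $K$ (Propositions~\ref{total>=K_complete=K-1}--\ref{total<K_complete<K}), first splitting on whether the \emph{total} number of components (your cycles plus paths) is $\ge K$ or $<K$, and then sub-casing on whether there is a cycle of length $1$, $2$, or $\ge 3$. In each sub-case one picks a distinguished row $\bfv^i$, zeros all of its entries except the last (left as a parameter $\lambda$), and expands the determinant so that the coefficient of $\lambda^2$ equals $\pm\det(\widetilde{\bfD})$ for a $2K\times 2K$ matrix $\widetilde{\bfD}$ that falls under the induction hypothesis (or under the already-handled ``$\ge K$ components'' case). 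The case where every component has length $1$ requires an extra trick of merging two paths by assigning equal values to their endpoints. Your conjectured closed form $\mathrm{rank}\,\bfC_P(\bfA)=\min(n,2K,K+n-c)$ would subsume all of this case analysis, but the paper neither proves nor needs it; if you pursue that route you will still have to supply essentially the same inductive construction to certify the lower bound on the rank.
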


While the result of Theorem \ref{thm:main_theorem} holds for an arbitrary $M \geq N$, let us now consider the result for $M=N \geq 2K$.
In this case the result states that if $\bfA$ is a $N \times K$ random matrix, then w.p. $1$, $\bfx$ can be recovered from the unlabeled entries of 
$\bfA \bfx$.
We refer to this property of $\bfA$ by saying $\bfA$ admits \emph{universal unlabeled recovery.}
The adjective of universal is to emphasize the fact that the recovery  is guaranteed for all $\bfx \in \bR^K$. 
In other words, once $\bfA$ is designed using a random selection, w.p. $1$ every $\bfx$ can be recovered, even for adversarial choices of $\bfx$ that use the knowledge of $\bfA$.

%
%

Another key parameter of interest in the theorem is the oversampling ratio or the ratio of $N$ to $K$.
The result essentially implies that an oversampling factor of $2$ is sufficient for guaranteeing perfect recovery of $\bfx$ from unlabeled random observations.
In other words, random matrices with oversampling factor of $2$ or higher admit universal unlabeled recovery.
In analogy to the random coding argument from information theory \cite{sha48}, this result implies the existence of deterministic matrices $\bfA$ with oversampling factor $2$ and higher that admit universal unlabeled recovery.
However, the nature of our result differs from that of \cite{sha48} in that our result holds for all $K$ and is not an asymptotic one.

The result of Theorem \ref{thm:main_theorem} captures an inherent geometric property of random matrices.
Essentially the result is that when $\bfA$ is an $M \times K$ random matrix, $\bfx$ is uniquely determined given $\bfB \bfx$ where $\bfB = \bfS \bfA$ with $\bfS$ a selection matrix comprising any $N$ distinct rows of the $N \times N$ identity matrix.
Mathematically this can be expressed as:
\[
\bfS^1 \bfA \bfx^1 = \bfS^2 \bfA \bfx^2 \quad \Rightarrow \quad \bfx^1 = \bfx^2
\]
where $ \bfx^1, \bfx^2 \in \bR^K$, and $\bfS^1, \bfS^2$ are any choices of the selection matrices.
In other words, if $ \bfx^1, \bfx^2$ satisfy $\bfS^1 \bfA \bfx^1 = \bfS^2 \bfA \bfx^2$ for any choices of the selection matrices $\bfS^1, \bfS^2$  we must have $\bfx^1 = \bfx^2$.
The conclusion is immediate if $\clR(\bfS^1 \bfA)$ and $\clR(\bfS^2 \bfA)$ intersect only at the origin.
However, this may not be true for all $\bfS^1$ and $\bfS^2$. 
Nevertheless, we show that even if $\clR(\bfS^1 \bfA) \cap \clR(\bfS^2 \bfA)$ is a non-trivial subspace, then the inverse-images of the operators $\bfS^1 \bfA$ and $\bfS^2 \bfA$ evaluated at any point within $\clR(\bfS^1 \bfA) \cap \clR(\bfS^2 \bfA)$ are identical singletons.
Thus, one might not recover $\bfS$ given $\bfy = \bfS \bfA \bfx$, but one can recover $\bfx$.
Another way to state the same result is that 
\begin{equation}
\clN([\bfS^1 \bfA, \bfS^2 \bfA]) \subset \clN([\bfI_K, \bfI_K]) \label{eqn:nullspace}
\end{equation}
where $\bfI_K$ represents the $K \times K$ identity matrix and $\clN$ represents the null-space operator.
%
%

We will now describe the recovery algorithm and then proceed to the proof of the main result.

\section{The Recovery Algorithm}\label{sec:rec_alg}
For simplicity, we ignore the computational complexity and consider a brute force approach to recover the initial signal $\bfx$. 
The recovery algorithm works as follows. 
It sequentially considers every possible arrangement of $N$ rows of $A$ to form candidate matrices $\widehat \bfB$ and tests whether or not there exists a solution $\widehat{\bfx}$ to the equation $\widehat{\bfB} \widehat{\bfx}=\bfy$.
Equivalently it considers all possible candidate selection matrices $\widehat \bfS$ for the true selection matrix given in \eqref{eqn:permobs2}.
The algorithm returns $\widehat{\bfx}$ when it finds a candidate $\widehat \bfB$ that admits a solution to the equation $\widehat{\bfB} \widehat{\bfx}=\bfy$.
Fig.~\ref{samp_fig} shows a block diagram of the unlabeled sampling and the recovery algorithm.
Here $\bfv^i$ is a row vector representing the $i$-th row of the matrix $\bfA$ for $i \in [M]$.


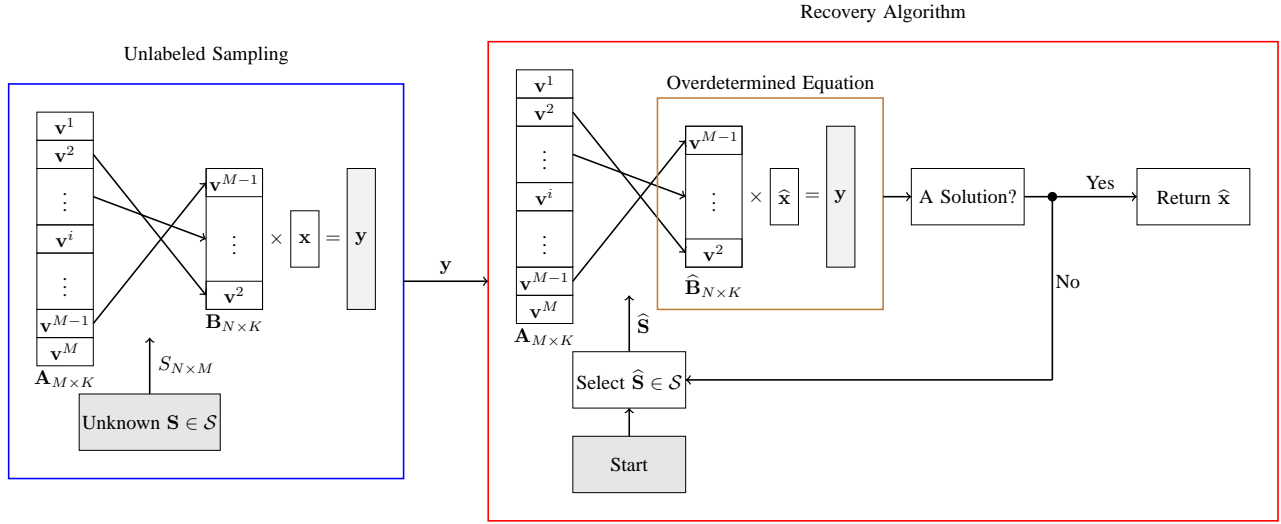
\begin{figure}[h]
\centering
\scalebox{0.75}{
\begin{tikzpicture}

\pgfmathsetmacro{\wid}{0.5}
\pgfmathsetmacro{\len}{1}


\draw [blue, thick] (-\wid, -2*\len) rectangle (6*\len+\wid, 5*\len);
\node at (3*\len, 5*\len+\wid) {Unlabeled Sampling};

\draw [thick, ->] (6*\len+\wid, 1.5*\len) -- (7.5*\len+\wid, 1.5*\len);
\node at (6.75*\len+\wid, 1.5*\len+0.5*\wid) {$\bfy$};

\draw (0,0) rectangle (\len,\wid);
\node at (\len/2,\wid/2) {$\bfv^M$};

\draw (0,\wid) rectangle (\len,2*\wid);
\node at (\len/2,3*\wid/2) {$\bfv^{M-1}$};

\draw (0, 2*\wid) rectangle (\len, 4*\wid);
\node at (\len/2, 3*\wid) {$\vdots$};

\draw (0, 4*\wid) rectangle (\len, 5*\wid);
\node at (\len/2, 4.5*\wid) {$\bfv^i$};

\draw (0, 5*\wid) rectangle (\len, 7*\wid);
\node at (\len/2, 6*\wid) {$\vdots$};

\draw (0, 7*\wid) rectangle (\len, 8*\wid);
\node at (\len/2, 7.5*\wid) {$\bfv^2$};

\draw (0, 8*\wid) rectangle (\len, 9*\wid);
\node at (\len/2, 8.5*\wid) {$\bfv^1$};

\node at (\len/2, -0.5*\wid) {$\bfA_{M\times K}$};


\draw (3*\len, 2*\wid) rectangle (4*\len, 7*\wid);

\draw (3*\len, 2*\wid) rectangle (4*\len, 3*\wid);
\node at (3.5*\len, 2.5*\wid) {$\bfv^{2}$};

\draw (3*\len, 6*\wid) rectangle (4*\len, 7*\wid);
\node at (3.5*\len, 6.5*\wid) {$\bfv^{M-1}$};

\node at (3.5*\len, 4.5*\wid) {$\vdots$};

\draw [thick, ->] (\len,3*\wid/2) -- (3*\len, 6.5*\wid);

\draw [thick, ->] (\len, 7.5*\wid) -- (3*\len, 2.5*\wid);

\draw [thick, ->] (\len, 6*\wid) -- (3*\len, 4.5*\wid);

\filldraw [gray!20, draw=black] (3*\len+\wid/2, -\len-\wid) rectangle (\len-\wid/2, -\len+\wid);
\node at (2*\len, -\len) {Unknown $\bfS \in \clS$};
\draw[->, thick] (2*\len, -\len+\wid) -- (2*\len, -\len+\wid+\len);
\node at (2*\len+1.3*\wid, -\len+2*\wid) {$S_{N\times M}$};

\node at (3.5*\len, 1.5*\wid) {$\bfB_{N \times K}$};

\draw (4.5*\len, 3.5*\wid) rectangle (4.5*\len+\wid, 3.5*\wid+\len);
\node at (4*\len+0.5* \wid, 4.5*\wid) {$\times$};
\node at (4.5*\len+\wid/2, 3.5*\wid+\len/2) {$\bfx$};
\node at (4*\len+2.5* \wid, 4.5*\wid) {$=$};

\filldraw [gray!10, draw=black] (3*\len+\len +3*\wid, 2*\wid) rectangle (4*\len+\len + 3*\wid -\len/2, 7*\wid);
\node at (4.5*\len+5*\wid/2, 3.5*\wid+\len/2) {$\bfy$};

\begin{scope} [xshift=8.5cm, yshift=0.75cm]

\draw[red, thick] (-\wid, -3.5*\len) rectangle  (13.5*\len, 5*\len);
\node at (6.5*\len, 5*\len+\wid) {Recovery Algorithm};


\draw (0,0) rectangle (\len,\wid);
\node at (\len/2,\wid/2) {$\bfv^M$};

\draw (0,\wid) rectangle (\len,2*\wid);
\node at (\len/2,3*\wid/2) {$\bfv^{M-1}$};

\draw (0, 2*\wid) rectangle (\len, 4*\wid);
\node at (\len/2, 3*\wid) {$\vdots$};

\draw (0, 4*\wid) rectangle (\len, 5*\wid);
\node at (\len/2, 4.5*\wid) {$\bfv^i$};

\draw (0, 5*\wid) rectangle (\len, 7*\wid);
\node at (\len/2, 6*\wid) {$\vdots$};

\draw (0, 7*\wid) rectangle (\len, 8*\wid);
\node at (\len/2, 7.5*\wid) {$\bfv^2$};

\draw (0, 8*\wid) rectangle (\len, 9*\wid);
\node at (\len/2, 8.5*\wid) {$\bfv^1$};

\node at (\len/2, -0.5*\wid) {$\bfA_{M\times K}$};


\draw (3*\len, 2*\wid) rectangle (4*\len, 7*\wid);

\draw (3*\len, 2*\wid) rectangle (4*\len, 3*\wid);
\node at (3.5*\len, 2.5*\wid) {$\bfv^{2}$};

\draw (3*\len, 6*\wid) rectangle (4*\len, 7*\wid);
\node at (3.5*\len, 6.5*\wid) {$\bfv^{M-1}$};

\node at (3.5*\len, 4.5*\wid) {$\vdots$};

\draw [thick, ->] (\len,3*\wid/2) -- (3*\len, 6.5*\wid);

\draw [thick, ->] (\len, 7.5*\wid) -- (3*\len, 2.5*\wid);

\draw [thick, ->] (\len, 6*\wid) -- (3*\len, 4.5*\wid);

\node at (3.5*\len, 1.3*\wid) {$\widehat{\bfB}_{N \times K}$};

\draw (4.5*\len, 3.5*\wid) rectangle (4.5*\len+\wid, 3.5*\wid+\len);
\node at (4*\len+0.5* \wid, 4.5*\wid) {$\times$};
\node at (4.5*\len+\wid/2, 3.5*\wid+\len/2) {$\widehat{\bfx}$};

\node at (4*\len+2.5* \wid, 4.5*\wid) {$=$};

\filldraw[gray!10, draw=black] (3*\len+\len +3*\wid, 2*\wid) rectangle (4*\len+\len + 3*\wid -\len/2, 7*\wid);
\node at (4.5*\len+5*\wid/2, 3.5*\wid+\len/2) {$\bfy$};

\draw [color=brown, thick] (2.5*\len, 0.5*\wid) rectangle (6.5*\len, 8*\wid);
\node at (4.5*\len, 8.5*\wid) {Overdetermined Equation};
\draw[->, thick] (6.5*\len, 3.5*\wid+\len/2) -- (7*\len, 3.5*\wid+\len/2);
\draw (7*\len, 2.5*\wid+\len/2) rectangle (9*\len, 4.5*\wid+\len/2);
\node at (8*\len, 3.5*\wid+\len/2) {A Solution?};
\draw[->, thick] (9*\len, 3.5*\wid+\len/2) -- (11*\len, 3.5*\wid+\len/2);
\node at (10.5*\len-0.3*\wid, 4*\wid+\len/2) {Yes};

\draw (11*\len, 2.5*\wid+\len/2) rectangle (13*\len, 4.5*\wid+\len/2);
\node at (12*\len, 3.5*\wid+\len/2) {Return $\widehat{\bfx}$};

\filldraw[thick] (9.5*\len, 3.5*\wid+\len/2) circle (2pt) -- (9.5*\len, -\len);
\node at (9.5*\len+0.55*\wid, 1.5*\wid) {No};

\draw[->, thick] (9.5*\len, -\len) -- (3*\len, -\len);
\draw (3*\len, -\len-\wid) rectangle (\len, -\len+\wid);
\node at (2*\len, -\len) {Select $\widehat{\bfS} \in \clS$};
\draw[->, thick] (2*\len, -\len+\wid) -- (2*\len, -\len+\wid+\len);
\node at (2*\len+0.5*\wid, -\len+2*\wid) {$\widehat{\bfS}$};
\filldraw[gray!20, draw=black] (3*\len, -\len-\wid-\len-\wid) rectangle (\len, -\len+\wid-\len-\wid);
\node at (2*\len, -2*\len-\wid) {Start};
\draw[->, thick] (2*\len, -2*\len) -- (2*\len, -2*\len+\wid);

\end{scope}

\end{tikzpicture}
}
\caption{A block diagram of the unlabeled sampling and recovery algorithm.}
\label{samp_fig}
\end{figure}

Notice that for each candidate $\widehat{\bfB}$ the relation $\widehat{\bfB} \widehat{\bfx}=\bfy$ is an overdetermined system of linear equations (OSLE) for the unknown $\widehat{\bfx}$. Let us assume that $\widehat{\bfB}$ has full column-rank. This holds w.p. $1$ when the rows of $\bfB$ are sampled from a continuous distribution. Hence, two cases are possible:
\begin{enumerate}
\item The resulting OSLE does not have a solution: In that case, the recovery algorithm neglects $\widehat{\bfB}$, because it is not compatible with the measurements $\bfy$. It selects a new matrix by selecting a new ordered set of $N$ distinct rows of $\bfA$.

\item The resulting OSLE has a solution: In that case, as we assume that $\widehat{\bfB}$ has full column-rank, this solution is the unique solution for the choice of the measurements $\widehat{\bfB}$. 
The algorithm outputs this candidate solution and terminates.
\end{enumerate}
Note that $\bfx$ itself is a candidate solution, which will be returned by the algorithm when $\widehat{\bfB}=\bfB$. 
It is easily seen that the recovery is successful w.p. $1$ if and only if every OSLE of the form $\widehat{\bfB}\widehat{\bfx}=\bfy$ either has no solution or its solution is equal to the initial signal $\bfx$. In the next section we prove that if the rows of $\bfA$ are randomly sampled from an arbitrary continuous distribution and $N \geq 2K$, then for all initial signals $\bfx\in \bR^K$, the brute-force algorithm successfully finds the initial signal $\bfx$ w.p. $1$. 
 

\section{Proof of the Main Result}\label{sec:proof}
In this section we present the proof of the main result of Theorem \ref{thm:main_theorem}. 
The main tool we use in our proof is the following property of polynomials.
For completeness, we provide a proof in the Appendix \ref{app:lem:zeroprob}.

\begin{lemma}\label{lem:zeroprob}
 Let $p(x_1,x_2, \dots, x_n)$ be a nonzero polynomial of the variables $x_1,\dots,x_n$ and let $X_1, X_2, \dots, X_n$ be independent (\textit{not necessarily identically distributed}) continuous random variables. Then $p(X_1,X_2, \dots,X_n) \neq 0$ w.p. $1$. \qed
\end{lemma}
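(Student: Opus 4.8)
The plan is to prove Lemma~\ref{lem:zeroprob} by induction on the number of variables $n$, exploiting the fact that a nonzero univariate polynomial has only finitely many roots, and then using independence to integrate the base case against the remaining variables.

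First I would handle the base case $n=1$. If $p(x_1)$ is a nonzero polynomial of degree $d$, it has at most $d$ roots, so the set $\{x_1 : p(x_1) = 0\}$ is finite, hence has Lebesgue measure zero. Since $X_1$ is a continuous random variable (its law is absolutely continuous with respect to Lebesgue measure), $\Prob\{p(X_1) = 0\} = 0$, i.e. $p(X_1) \neq 0$ w.p. $1$.

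Next, for the inductive step, suppose the claim holds for all nonzero polynomials in $n-1$ variables. Given a nonzero polynomial $p(x_1,\dots,x_n)$, write it as a polynomial in $x_n$ with coefficients that are polynomials in $x_1,\dots,x_{n-1}$:
\[
p(x_1,\dots,x_n) = \sum_{j=0}^{d} c_j(x_1,\dots,x_{n-1})\, x_n^j,
\]
where $d$ is the degree of $p$ in $x_n$ and $c_d$ is not the zero polynomial (such a $j$ exists because $p \not\equiv 0$). By the induction hypothesis applied to $c_d$, we have $c_d(X_1,\dots,X_{n-1}) \neq 0$ w.p.\ $1$. Condition on the values $X_1 = a_1, \dots, X_{n-1} = a_{n-1}$; on the event that $c_d(a_1,\dots,a_{n-1}) \neq 0$, the polynomial $q(x_n) := p(a_1,\dots,a_{n-1},x_n)$ is a nonzero univariate polynomial (it has a nonvanishing leading coefficient), so by the base case $q(X_n) \neq 0$ w.p.\ $1$, using that $X_n$ is continuous and independent of $(X_1,\dots,X_{n-1})$. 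Integrating this conditional probability over the law of $(X_1,\dots,X_{n-1})$ — and using that the exceptional event $\{c_d(X_1,\dots,X_{n-1}) = 0\}$ has probability zero — gives $\Prob\{p(X_1,\dots,X_n) = 0\} = 0$.

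The only mildly delicate point, and the one I would state carefully, is the conditioning/Fubini argument: one must justify that $\Prob\{p(X_1,\dots,X_n)=0\} = \int \Prob\{q(X_n)=0 \mid X_1=a_1,\dots,X_{n-1}=a_{n-1}\}\,d\mu(a_1,\dots,a_{n-1})$, which follows from independence of $X_n$ from $(X_1,\dots,X_{n-1})$ together with Tonelli's theorem applied to the indicator of the (measurable) set $\{p = 0\}$. Everything else is routine. I do not anticipate any real obstacle; the finiteness-of-roots fact does all the work, and independence is exactly what lets the induction go through without needing identical distributions.
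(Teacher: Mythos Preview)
Your proof is correct, but it follows a genuinely different route from the paper's. You induct on the number of variables $n$, reducing to the univariate case via the finiteness of roots of a one-variable polynomial, and then use independence together with Fubini/Tonelli to integrate out $X_n$ conditionally on $(X_1,\dots,X_{n-1})$. The paper instead first proves the purely measure-theoretic statement that the zero set $\clZ\subset\bR^n$ of a nonzero polynomial has $n$-dimensional Lebesgue measure zero, by induction on the \emph{degree} of $p$ and an application of the implicit function theorem to show that $\clZ$ is locally a graph over $n-1$ coordinates; only afterwards does it invoke absolute continuity of the joint law with respect to Lebesgue measure. Your argument is more elementary (no implicit function theorem) and tailored to the product-measure setting of the lemma; the paper's argument uses heavier tools but yields the intermediate fact $\lambda(\clZ)=0$, which applies to \emph{any} continuous joint distribution on $\bR^n$, not only product measures.
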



The proof of the main result follows by carefully considering a number of different possibilities and applying this property of polynomials repeatedly.




The proof strategy is to argue that, w.p. $1$, for any $\widehat{\bfB}$ chosen at some stage of the recovery algorithm if a solution $\widehat \bfx$ exists then $\widehat \bfx = \bfx$.
We also need the following lemma which we will use frequently in the proof. 
The result is immediate from the invertibility of permutation matrices.
\begin{lemma}\label{lem:permokay}
For any $\widehat{\bfB}$ chosen at some stage of the recovery algorithm, a solution $\widehat \bfx$ satisfying $\widehat{\bfB} \widehat \bfx = \bfB \bfx$ exists if and only if ${\bf \Pi} \widehat{\bfB} \widehat \bfx = {\bf \Pi} \bfB \bfx$ for some permutation matrix ${\bf \Pi}$. \qed
\end{lemma}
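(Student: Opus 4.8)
The plan is to derive the statement directly from the fact that a permutation matrix ${\bf \Pi}$ is invertible --- indeed orthogonal, with ${\bf \Pi}^{-1} = {\bf \Pi}^\transp$ --- so that left-multiplication by ${\bf \Pi}$ is a bijection of $\bR^N$ onto itself and therefore both preserves and reflects equalities between vectors. With this observation the two implications are each a one-line computation.

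First, for the forward direction I would assume that some $\widehat{\bfx}$ satisfies $\widehat{\bfB}\,\widehat{\bfx} = \bfB\bfx$, and simply left-multiply both sides by an arbitrary permutation matrix ${\bf \Pi}$ (for instance ${\bf \Pi} = \bfI_N$) to get ${\bf \Pi}\,\widehat{\bfB}\,\widehat{\bfx} = {\bf \Pi}\,\bfB\bfx$, exhibiting the required ${\bf \Pi}$. For the converse, I would assume ${\bf \Pi}\,\widehat{\bfB}\,\widehat{\bfx} = {\bf \Pi}\,\bfB\bfx$ for some permutation matrix ${\bf \Pi}$ and left-multiply by ${\bf \Pi}^{-1}$, which recovers $\widehat{\bfB}\,\widehat{\bfx} = \bfB\bfx$ for the same $\widehat{\bfx}$. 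Hence the solution set of the two equations coincides, and in particular one is nonempty iff the other is.

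There is no genuine obstacle here; the only point worth stating explicitly in the write-up is the invertibility of ${\bf \Pi}$, which is what makes the multiplication reversible. The reason for isolating this as a lemma is purely organisational: in the main proof it lets us simultaneously reorder the rows of $\widehat{\bfB}\,\widehat{\bfx}$ and $\bfB\bfx$ without affecting solvability, so that when comparing two sub-matrices of $\bfA$ we may assume their rows are listed in whatever order is most convenient (e.g. with shared rows grouped together at the top), which streamlines the subsequent polynomial-nonvanishing arguments.
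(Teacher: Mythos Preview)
Your proof is correct and matches the paper's approach exactly: the paper does not give a separate proof but simply states that ``the result is immediate from the invertibility of permutation matrices,'' which is precisely the argument you spell out. Your additional remark about the lemma's organisational role (permitting simultaneous row reordering of $\bfB$ and $\widehat{\bfB}$) is also accurate and anticipates how the paper uses it in the cycle-ordered form.
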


Let $\bfC=[\bfB, \widehat{\bfB}]$ denote the $N\times 2K$ matrix obtained by concatenating the rows of $\bfB$ with the corresponding rows of $\widehat{\bfB}$. 
For most of the proof we will work with $\bfC$ instead of working explicitly with $\bfB$.
We need the following notation. 
For the chosen matrix $\widehat{\bfB}$ and the original matrix $\bfB$ 
we define a cycle as the \textit{longest} sequence $\bfv^1, \bfv^2, \dots, \bfv^n$ of the rows of $\bfA$, for some $n \geq 2$ with the following property:
\begin{enumerate}
\item $\bfv^1, \bfv^2, \dots, \bfv^{n-1}$ belong to the row set of $\bfB$,
\item $\bfv^2, \bfv^3, \dots, \bfv^{n}$ belong to the row set of $\widehat{\bfB}$,
\item for $k\leq n-1$, the row number of $\bfv^k$ in $\bfB$ is the same as the row number of $\bfv^{k+1}$ in $\widehat{\bfB}$,
\item all the elements of the sequence are different except for $\bfv^n$ which can be equal to $\bfv^1$.
\end{enumerate}

\noindent We call a cycle complete if $\bfv^1=\bfv^n$, otherwise we call it incomplete. 
With this definition of a cycle, it is clear that the rows of matrix $\bfC$ can be decomposed into a set of disjoint cycles.
Thus for given matrices $\bfB$ and $\widehat{\bfB}$, there is a unique cycle decomposition in terms of either complete or incomplete cycles. In particular, this decomposition partitions the rows of $\bfB$ and $\widehat{\bfB}$, namely, each row of $\bfB$ or $\widehat{\bfB}$ can be in one and only one complete or incomplete cycle. 

As we argued in Lemma \ref{lem:permokay}, for a given $\bfB$ and $\widehat \bfB$, the existence of a solution is unaffected by identical permutations of the rows of $\bfB$ and $\widehat \bfB$, and hence it follows that permuting the rows of $\bfC$ also does not affect the existence of a solution. 
We will now define a specific ordering of the rows of $\bfC$ that we will use in the proofs for ease of exposition.
For any $\widehat \bfB$ chosen during the algorithm, without affecting the existence of a solution, we can reorder the rows of the corresponding $\bfC$ so that the first $n-1$ rows of $\bfC$ consist of the rows involved in a cycle as  follows:
\begin{align}\label{cycle_shape}
\bfC=[\bfB, \widehat{\bfB}]=\left[ \begin{array}{cc} \bfv^1 & \bfv^2 \\ \bfv^2 & \bfv^3\\ \vdots & \vdots\\ \bfv^{n-1} & \bfv^n \\ \hline  \vdots & \vdots \end{array} \right ].
\end{align}
Note that in this configuration, the rows of $\bfB$ appear in the same order as that in which they appear in the cycle.
We can go one step further and rearrange the rows of $\bfC$ further until we attain a configuration in which the rows of $\bfB$ corresponding to each cycle appear in adjacent rows, and furthermore, these rows are arranged in the same order as in the cycle.
In addition, we ensure that the complete cycles appear before the incomplete cycles.
Furthermore, since all entries of $\bfA$ are drawn i.i.d., without loss of generality, we can assume that the rows of $\bfB$ are numbered $\bfv^1, \bfv^2, \ldots, \bfv^N$.
If the rows of $\bfC$ satisfy these properties, we say that $\bfC$ is in \textit{cycle-ordered form}.
The following example illustrates a $\bfC$ in cycle-ordered form.
\begin{align}\label{cycle_shape2}
\bfC=[\bfB, \widehat{\bfB}]=\left[ \begin{array}{cc} \bfv^1 & \bfv^2 \\ \bfv^2 & \bfv^3\\ \bfv^{3} & \bfv^1 \\ \hline  \bfv^4  & \bfv^5 \\ \  \bfv^5  & \bfv^4 \\ \hline  \bfv^6  & \bfv^7 \end{array} \right ].
\end{align}
In the example of \eqref{cycle_shape2} we have $M \geq 7$ and $N=6$.
In all there are three cycles, two complete cycles of length $3$ and $2$, and an incomplete cycle of length $1$.



The cycle decomposition allows us to split the proof into different cases depending on the number of cycles and complete cycles in the cycle representation of $\bfC$.
The different cases considered are shown in Table~\ref{proof_parts}.

We first consider the trivial case when $\bfx=0$.
If $\bfx=0$, the measurement vector is $\bfy=0$. We prove that in this case the recovery algorithm returns $\widehat{\bfx}=0$ w.p. $1$, where probability is taken with respect to the random construction of the rows of $\bfA$.
\begin{proposition}\label{prop:xzero}
If the desired signal $\bfx=0$, then the brute force algorithm returns the solution $\widehat \bfx =0$ w.p. $1$.
\end{proposition}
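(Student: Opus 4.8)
The plan is to show that when $\bfx = 0$, the only candidate matrix $\widehat{\bfB}$ for which the equation $\widehat{\bfB}\,\widehat{\bfx} = \bfy = 0$ admits a solution with $\widehat{\bfx} \neq 0$ would require $\widehat{\bfB}$ to be rank-deficient, and this happens with probability $0$. Concretely, since $\widehat{\bfB}$ is an $N \times K$ submatrix of $\bfA$ with $N \geq 2K > K$, and its entries are (a subset of) the i.i.d.\ continuous entries of $\bfA$, the matrix $\widehat{\bfB}$ has full column rank $K$ w.p.\ $1$: indeed $\det(\widehat{\bfB}^\transp \widehat{\bfB})$ is a nonzero polynomial in the entries of $\bfA$ (it is, e.g., nonzero when those entries are chosen to make $\widehat{\bfB}$ contain a $K\times K$ identity block), so by Lemma~\ref{lem:zeroprob} it is nonzero w.p.\ $1$. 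Since there are only finitely many choices of the $N$ rows (at most $M!/(M-N)!$ ordered selections), a union bound over this finite collection shows that w.p.\ $1$ \emph{every} candidate $\widehat{\bfB}$ considered by the algorithm has full column rank.

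First I would fix the finite list $\widehat{\bfB}^{(1)}, \dots, \widehat{\bfB}^{(L)}$ of all possible candidate matrices (all ordered $N$-row selections from $\bfA$), and for each one invoke Lemma~\ref{lem:zeroprob} applied to the polynomial $\det(\widehat{\bfB}^{(j)\transp} \widehat{\bfB}^{(j)})$ in the i.i.d.\ continuous variables that are the entries of $\bfA$. This gives, for each $j$, that $\widehat{\bfB}^{(j)}$ has full column rank w.p.\ $1$; intersecting these $L$ almost-sure events (a finite intersection) yields a single almost-sure event on which all candidates have full column rank. On that event, for any candidate $\widehat{\bfB}$, the equation $\widehat{\bfB}\,\widehat{\bfx} = 0$ has the unique solution $\widehat{\bfx} = 0$, because $\widehat{\bfB}$ injective forces $\widehat{\bfx} \in \clN(\widehat{\bfB}) = \{0\}$. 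Hence whichever candidate $\widehat{\bfB}$ the brute-force algorithm lands on first, it returns $\widehat{\bfx} = 0$, which matches the true signal. This also covers the case $\widehat{\bfB} = \bfB$, guaranteeing the algorithm does terminate with the correct answer.

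The only mildly delicate point — and the main thing to get right rather than a genuine obstacle — is the bookkeeping that the polynomial $\det(\widehat{\bfB}^{\transp}\widehat{\bfB})$ is not identically zero. Since $N \geq 2K \geq K$, one can always specialize the relevant entries of $\bfA$ so that $\widehat{\bfB}$ has $K$ rows equal to the $K$ standard basis vectors of $\bR^K$ (and the other entries arbitrary), making $\widehat{\bfB}^\transp\widehat{\bfB} - \bfI_K$ or $\widehat{\bfB}^\transp\widehat{\bfB}$ nonsingular; this witnesses non-vanishing of the polynomial. One should also note that distinct candidates $\widehat{\bfB}$ may reuse overlapping rows of $\bfA$, but this causes no issue: Lemma~\ref{lem:zeroprob} is applied separately to each candidate, each time to a polynomial in (a subset of) the full collection of i.i.d.\ entries of $\bfA$, which are independent continuous variables as required. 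With these observations the proposition follows. $\qed$
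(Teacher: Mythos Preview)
Your proof is correct and follows essentially the same approach as the paper: both show that every candidate $\widehat{\bfB}$ has full column rank w.p.\ $1$ via Lemma~\ref{lem:zeroprob}, so the homogeneous system $\widehat{\bfB}\widehat{\bfx}=0$ forces $\widehat{\bfx}=0$. The only cosmetic differences are that the paper uses the determinant of the top $K\times K$ block of $\widehat{\bfB}$ rather than $\det(\widehat{\bfB}^\transp\widehat{\bfB})$, and you are more explicit about the finite union bound over all candidate selections.
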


\begin{proof}
For $\bfx=0$, at each stage of the algorithm the recovery algorithm solves the OSLE $\widehat{\bfB} \widehat{\bfx}=\bfy=0$ to find a candidate solution for some candidate $\widehat{\bfB}$. 
Note that for $\bfy=0$, this equation has always the trivial solution $\widehat{\bfx}=0$, thus, to prove that there is no other solution, it is sufficient to prove that $\widehat{\bfB}$ has full column-rank. To prove this, we show that a $K\times K$ submatrix of $\widehat{\bfB}$ has nonzero determinant w.p. $1$. Without loss of generality, let $\widehat{\bfB}^1$ be a $K\times K$ matrix consisting of the first $K$ rows of $\widehat{\bfB}$. Note that the determinant of $\widehat{\bfB}^1$ is a nonzero polynomial of the $K^2$ components of $\widehat{\bfB}^1$. As the components of $\widehat{\bfB}^1$ are sampled independently from a continuous distribution, using Lemma \ref{lem:zeroprob}, the determinant of $\widehat{\bfB}^1$ is nonzero w.p. $1$, which implies that $\widehat{\bfB}$ has full column-rank w.p. $1$, thus, $\widehat{\bfx}=0$ is the unique solution for $\widehat{\bfB} \widehat{\bfx}=\bfy=0$.
\end{proof}
\vspace{2mm}

Now let us consider the more interesting case $\bfx\neq 0$. For this case, we use the number of complete cycles in the cycle decomposition in order to break up the proof as summarized in Table \ref{proof_parts}. We divide the proof into two parts:
\begin{enumerate}
\item If the number of complete cycles in $\bfC$ is greater than or equal to $K$, we prove that the OSLE $\widehat{\bfB}\widehat{\bfx}=\bfy$ either has no solution, in which case it is neglected by the recovery algorithm, or its solution is exactly equal to $\bfx$. In both cases, the brute force algorithm does not produce a wrong result for the given $\widehat{\bfB}$. 

\item If the number of complete cycles in $\bfC$  is less than $K$, we prove that the OSLE $\widehat{\bfB}\widehat{\bfx}=\bfy$ does not have a solution w.p. $1$.
\end{enumerate}
In both cases, we do not obtain a result different than $\bfx$. As there is at least one case $\widehat{\bfB}=\bfB$ for which the signal $\widehat{\bfx}=\bfx$ is correctly recovered, the brute force algorithm finds the initial signal $\bfx$ w.p. $1$.

\begin{table}[h]
\centering
\caption{Different parts of the proof}
\label{proof_parts}
\begin{tabular}{|c|c|c|c|c|}
\hline
Different Cases and Sub-cases & Proposition     & Result    & Check      \\
\hline
$\bfx=0$  & Proposition~\ref{prop:xzero}  & $\widehat{\bfx}=0=\bfx$ & \checkmark \\
\hline
\multirow{2}{*}{$\bfx\neq0$ and\myhash\,Complete Cycles in $\bfC$ $\geq K$}    & 
\multirow{2}{*}{Proposition \ref{complete>=K}} & No Solution    &  \checkmark          \\
&         & A Solution:  $\widehat{\bfx}=\bfx$    &  \checkmark          \\
\hline
$\bfx \neq 0$ and\myhash\,Complete Cycles in $\bfC$ $\leq K-1$  
&  Proposition  \ref{complete<K}  & No Solution & \checkmark          \\
\hline
\end{tabular}
\end{table}


\subsection{The number of complete cycles in $\bfC$ is greater than or equal to $K$}
For this case, we prove that when $\bfx\neq 0$, the recovery algorithm either does not have a solution or it returns a solution equal to $\bfx$. 


\begin{proposition}\label{complete>=K}
Let $\bfB$ and $\widehat{\bfB}$ be as before and assume that the number of complete cycles in the cycle representation of $\bfC=[\bfB, \widehat{\bfB}]$ is greater than or equal to $K$. Then, w.p. $1$, for any $\bfx \in \bR^K \setminus \{0\}$, the solution set consisting of all $\widehat \bfx$ satisfying the OSLE $\widehat{\bfB}\widehat{\bfx}=\bfy=\bfB \bfx$ is either the empty set or the singleton $\{\bfx\}$.  
\end{proposition}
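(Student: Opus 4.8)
The plan is to show that each \emph{complete} cycle in the cycle decomposition of $\bfC=[\bfB,\widehat{\bfB}]$ contributes one scalar linear constraint on the difference $\bfx-\widehat{\bfx}$, and that when there are at least $K$ complete cycles these constraints, w.p.\ $1$, pin down $\bfx-\widehat{\bfx}=0$; hence any $\widehat{\bfx}$ solving the OSLE must equal $\bfx$, so its solution set is contained in $\{\bfx\}$, i.e.\ it is either empty or $\{\bfx\}$. To obtain the constraints, I would put $\bfC$ in cycle-ordered form and examine a single complete cycle $\bfv^1,\bfv^2,\dots,\bfv^{n-1},\bfv^n=\bfv^1$, which by the layout \eqref{cycle_shape} contributes the rows $[\bfv^k,\bfv^{k+1}]$, $k=1,\dots,n-1$, to $\bfC$. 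If $\widehat{\bfx}$ satisfies $\widehat{\bfB}\widehat{\bfx}=\bfB\bfx$, then reading off these rows gives $\langle\bfv^k,\bfx\rangle=\langle\bfv^{k+1},\widehat{\bfx}\rangle$ for $k=1,\dots,n-1$. Summing these $n-1$ identities and using $\bfv^n=\bfv^1$ to re-index the right-hand side, the terms telescope and leave $\langle\bfs,\bfx-\widehat{\bfx}\rangle=0$, where $\bfs:=\bfv^1+\cdots+\bfv^{n-1}$ is the sum of the (distinct) rows of $\bfA$ occurring in that cycle.

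Next I would establish linear independence of the cycle-sum vectors. Label the complete cycles $1,\dots,r$ with $r\ge K$ and let $\bfs_1,\dots,\bfs_r$ be the corresponding sums; it suffices that $\bfs_1,\dots,\bfs_K$ are linearly independent. The combinatorial ingredient is that \emph{distinct complete cycles involve disjoint sets of rows of $\bfA$}: every row occurring in a complete cycle occurs there as a row of $\bfB$ (property~(1) of a cycle), the matrix $\bfB$ consists of $N$ distinct rows of $\bfA$, and the cycle decomposition partitions the rows of $\bfB$, so no row of $\bfA$ can belong to two complete cycles. Hence, in the $K\times K$ matrix $\bfS$ whose rows are $\bfs_1,\dots,\bfs_K$, one may pick one representative row of $\bfA$ from each of the first $K$ complete cycles, set it equal to the standard basis vector $\bfe_i$, and set every other row of $\bfA$ appearing in those cycles to $\mathbf{0}$ (these assignments are consistent precisely because the cycles are disjoint); this specialization makes $\bfS=\bfI_K$, so $\det\bfS$ is a \emph{nonzero} polynomial in the entries of $\bfA$. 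By Lemma~\ref{lem:zeroprob}, $\det\bfS\ne0$ w.p.\ $1$, and since the selection matrices defining $\bfB$ and $\widehat{\bfB}$ range over a finite set, a single probability-one event covers all configurations with at least $K$ complete cycles. On that event any solution $\widehat{\bfx}$ satisfies $\langle\bfs_i,\bfx-\widehat{\bfx}\rangle=0$, $i=1,\dots,K$, with $\{\bfs_i\}$ spanning $\bR^K$, whence $\widehat{\bfx}=\bfx$, giving the claim. (The same independence also re-proves that $\widehat{\bfB}$ has full column rank, so no separate uniqueness step is needed; one could alternatively reuse the rank argument from the proof of Proposition~\ref{prop:xzero}.)

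The step I expect to be the main obstacle is not a hard computation but the bookkeeping around it: making sure that summing over a complete cycle cancels everything except $\langle\bfs,\bfx-\widehat{\bfx}\rangle$ (which hinges on the orientation built into the cycle-ordered form and on $\bfv^n=\bfv^1$), and carefully verifying that distinct complete cycles use disjoint rows of $\bfA$, since that disjointness is exactly what makes the clean specialization forcing $\det\bfS\not\equiv0$ go through. Once those two points are pinned down, the rest is a routine invocation of Lemma~\ref{lem:zeroprob} together with a finite union over cycle configurations.
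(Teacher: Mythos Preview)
Your proposal is correct and follows essentially the same approach as the paper: derive one linear constraint $\langle\bfs_\ell,\bfx-\widehat{\bfx}\rangle=0$ per complete cycle by summing the cycle's equations, then use Lemma~\ref{lem:zeroprob} together with the disjointness of complete cycles to show that the first $K$ cycle-sum vectors are linearly independent w.p.\ $1$. The only cosmetic difference is how the determinant polynomial is shown to be nonzero: the paper observes that the entries of the cycle-sum vectors are themselves independent continuous random variables (each a convolution $f^{\star n_\ell}$) and applies Lemma~\ref{lem:zeroprob} directly to those, whereas you give an explicit specialization of the rows of $\bfA$ that makes the matrix of cycle-sums equal to $\bfI_K$.
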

\begin{proof}
We decompose the set of nonzero signals $\bfx$ into two sets: $\clX_0$ and $\clX_1$. The set $\clX_0$ consists of those $\bfx$ for which the OSLE $\widehat{\bfB} \widehat{\bfx}=\bfy=\bfB \bfx$ does not have a solution, thus, for $\clX_0$, the result immediately follows. Let us consider those $\bfx \in \clX_1=\bR^K - (\clX_0 \cup \{0\})$ for which the OSLE does have a solution. We will show that w.p. $1$, for all $\bfx \in \clX_1$, the resulting solution $\widehat{\bfx}$ is equal to $\bfx$. 

Let the number of complete cycles in the cycle decomposition of $\bfB$ and $\widehat{\bfB}$ be $m$. 
Let $\bfv^1,\bfv^2, \dots, \bfv^{n}, \bfv^1$ be any complete cycle of length $n \geq 1$. 
As we argued in \eqref{cycle_shape} and \eqref{cycle_shape2}, without loss of generality we can reorder the rows of the concatenated matrix $\bfC$ in cycle-ordered form such that the rows representing the cycle appear first as shown in \eqref{cycle_shape}.
From $\widehat{\bfB}\widehat{\bfx}=\bfy=\bfB \bfx$, and the cycle representation shown in \eqref{cycle_shape}, it follows that 
\begin{align}
\inp{\bfv^i}{\bfx}=\inp{\bfv^{(i \text{ mod } n) + 1}}{\widehat{\bfx}}, i=1, \dots, n.
\end{align}
Summing up both sides of the above equation over $i=1,2,\dots, n$, we obtain $\inp{\bfw^1}{\bfx-\widehat{\bfx}}=0$, where $\bfw^1 \triangleq \sum_{i=1}^{n} \bfv^i$. Performing the same steps for the other cycles, we obtain vectors $\bfw^\ell$, $\ell=1,2, \dots, m$, with 
\begin{equation}\label{eqn:welldefn}
\inp{\bfw^\ell}{\bfx-\widehat{\bfx}}=0, 
\end{equation}
where $\bfw^\ell$ is the sum of the vectors within the $\ell$-th cycle. As complete cycles consist of disjoint subset of the rows of $\bfA$ and as these vectors are generated independently from one another, it is clear that $\bfw^\ell$ are independent vectors. Moreover, the components of each $\bfw^\ell$ are i.i.d. with a continuous distribution $f_\ell=\underbrace{f \star f \star \dots \star f}_{n_\ell}$, where $n_\ell$ is the length of the complete cycle $\ell$, and where $\star$ denotes the convolution operator. 

Let $\bfG$ be an $m \times K$ matrix whose $\ell$-th row is given by $\bfw^\ell$. By \eqref{eqn:welldefn} we have $\bfG (\bfx-\widehat{\bfx})=0$. The probability that for every $\bfx \in \clX_1$, the solution set for $\widehat{\bfB}\widehat{\bfx}=\bfy=\bfB \bfx$ is either empty or the singleton $\{\bfx\}$, is lower-bounded by the probability that $\bfG$ has full column-rank. We prove that the latter probability is equal to $1$, which implies the desired result. 

To prove this, consider the $K \times K$ matrix $\widetilde \bfG$ consisting of the first $K$ rows of $\bfG$. Note that the determinant of $\widetilde \bfG$ is a nonzero polynomial of its components. As the components of $\bfw^\ell$ are independent with a continuous distribution, using Lemma \ref{lem:zeroprob}, it follows that, w.p. $1$, $\det(\widetilde \bfG) \neq 0$ and thus $\widetilde \bfG$ is invertible. This implies that $\bfG$ is indeed full-rank w.p. $1$ thus completing the proof.
\end{proof}

\subsection{The number of complete cycles in $\bfC$ is less than $K$}\label{sec:tot_comp<K}
In this section, we consider the last case where $\bfx\neq 0$ and the number of complete cycles in the cycle decomposition of $\bfB$ and $\widehat{\bfB}$ is less than $K$ as shown in the third row of Table \ref{proof_parts}. In this case, the recovery algorithm solves the OSLE $\widehat{\bfB} \widehat{\bfx}=\bfy=\bfB \bfx$ to find a candidate solution for $\widehat{\bfx}$. As before let $\bfC=[\bfB, \widehat{\bfB}]$ be the concatenation of $\bfB$ and $\widehat \bfB$. Also let $\bfu=[-\bfx; \widehat{\bfx}]$. Then the OSLE can be equivalently written  as $\bfC \bfu=0$. We claim that if  $N \geq 2K$, then $\bfC$ has full column-rank, which implies that the only possible solution is $\bfu=[-\bfx; \widehat{\bfx}]=0$. As we assume that $\bfx\neq 0$, it follows that the OSLE $\widehat{\bfB} \widehat{\bfx}=\bfy=\bfB \bfx$ cannot have a solution,  and thus the corresponding choice of $\widehat \bfB$ is ignored by the recovery algorithm. 

As before, let us assume, without loss of generality, that the rows of $\bfC$ are in cycle-ordered form, i.e., the rows corresponding to the complete cycles in $\bfC$ are placed first followed by the rows corresponding to the incomplete cycles as illustrated in \eqref{cycle_shape2}.
For the purpose of this proof we introduce some new notation. 
Let $\bfD$ be the matrix comprising the first $2K$ rows of $\bfC$ in cycle-ordered form.
We define cycles, complete cycles, and incomplete cycles for $\bfD$ in the same manner as we defined them for $\bfC$.
Thus all the cycles appearing in the first $2K$ rows of $\bfC$ appear in $\bfD$ as well with the possible exception that {the last cycle in $\bfD$ might be the truncation of the corresponding cycle in $\bfC$}.
From our assumptions on the arrangement of the rows of $\bfC$, it is not difficult to check that the number of complete cycles in $\bfD$ {is less than or equal to the number of complete cycles in $\bfC$ and thus it is still strictly less than $K$. But the total number of cycles, counting both complete and incomplete ones, might be less than, equal to or greater than $K$. 
We will prove that, as long as the number of complete cycles in $\bfD$ is less than $K$, w.p. $1$, $\det(\bfD)\neq 0$, which implies that the initial matrix $\bfC$ is full-rank.


We divide the proof into two parts based on the total number of cycles (complete or incomplete) in $\bfD$. 
The first part addresses the case where the total number of cycles in $\bfD$ is greater than or equal to $K$ and the second part considers the case where the number of cycles is less than $K$.
In both cases, we pursue a standard procedure to prove that $\det(\bfD) \neq 0$ w.p. $1$. More precisely, let $\bfv^1, \bfv^2, \dots, \bfv^t$ be the total number of rows of the initial matrix $\bfA$ that exists in $\bfD$. Note that $\det(\bfD)$ is a polynomial of the entries of these variables. 
Thus if we can find a specific assignment to these variables such that the $\det(\bfD)$ is nonzero then it follows that the polynomial is a non-zero polynomial.
As $\bfv^i$, $i \in [t]$, are randomly sampled according to a continuous distribution, by applying Lemma \ref{lem:zeroprob}, we immediately obtain that $\det(\bfD)\neq 0$ w.p. $1$. Thus, essentially the main idea is to find a suitable assignment with $\det(\bfD)\neq 0$. 
In both cases, we will use induction on the dimension of the signal $K$ to show that for every $K$ such an assignment exists. 


\vspace{3mm}
\subsubsection{Total number of cycles in $\bfD$ is greater than or equal to $K$}
We first prove the case in which the number of complete cycles is exactly $K-1$.

\begin{proposition}\label{total>=K_complete=K-1}
Assume that the number of complete cycles in $\bfD$ is exactly $K-1$ and the total number of cycles (complete or incomplete) is  greater than or equal to $K$. Then, there is an assignment to the variables for which $\det(\bfD) \neq 0$.  
\end{proposition}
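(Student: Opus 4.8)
The plan is to prove this by induction on $K$. The induction hypothesis invoked is that in every lower dimension $K'<K$ and every cycle configuration whose number of complete cycles is at most $K'-1$, some assignment of the vectors makes the corresponding determinant nonzero — i.e.\ the combined content, in dimension $K'$, of the propositions of this subsection. For the base case $K=1$ (and a few further small values of $K$ reached along the way) the matrix $\bfD$ is small enough to treat by hand: when $K=1$, $\bfD$ is $2\times2$ with no complete cycle, hence it is either a single incomplete cycle of length $2$ (rows $[\bfv^1,\bfv^2],[\bfv^2,\bfv^3]$; take $\bfv^1=\bfv^3=1$, $\bfv^2=0$) or two incomplete cycles of length $1$, and in either case the obvious assignment gives $\det(\bfD)=1$.

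The inductive step rests on a peeling construction. Suppose we can single out a sub-collection of the cycles of $\bfD$ whose rows number $2j$ in total. Assign every vector occurring in that sub-collection to lie in the coordinate subspace $\mathrm{span}(\bfe_1,\dots,\bfe_j)$ and every other vector to lie in the complementary subspace $\mathrm{span}(\bfe_{j+1},\dots,\bfe_K)$. Because every row of $\bfD$ has its $\bfB$-part and its $\widehat\bfB$-part each equal to one of the vectors of $\bfA$, and the vectors have been split according to membership in the sub-collection (each vector of $\bfA$ lies in exactly one cycle), the $2j$ columns indexed by coordinates $1,\dots,j$ in the $\bfB$-block and coordinates $1,\dots,j$ in the $\widehat\bfB$-block vanish on every row outside the sub-collection. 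Hence, after reordering, $\bfD$ is block-triangular and $\det(\bfD)=\pm\det(\bfE)\det(\bfD')$, where $\bfE$ is the $2j\times2j$ block from the sub-collection — which a good choice of the $j$-dimensional vectors makes invertible by an explicit small computation — and $\bfD'$ is the $\bfD$-matrix, in dimension $K-j$, of the remaining cycles. One then verifies that $\bfD'$ still meets the induction hypothesis: $K$ drops by $j$, the number of complete cycles drops by exactly the number of complete cycles removed, and one checks the resulting complete-cycle count is at most $(K-j)-1$ while, using the hypothesis that the total number of cycles is at least $K$, an incomplete cycle still remains (or, if not, one lands in the ``total $<$ dimension'' branch of the induction hypothesis).

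The crux is choosing a peel-able sub-collection for which this bookkeeping closes. A single complete cycle can be peeled this way only if its length is even, and the counts match precisely for a complete cycle of length $2$ (then $j=1$: set its two vectors to $\bfe_1$ and $0$, so $\bfE=\bigl[\begin{smallmatrix}1&0\\0&1\end{smallmatrix}\bigr]$ and $\bfD'$ loses exactly one complete cycle). When $\bfD$ has no complete cycle of length $2$, a counting argument from the hypotheses (exactly $2K$ rows, $K-1$ complete cycles, at least one incomplete cycle) shows that for $K\ge3$ there must be a complete cycle of length $1$, and at least two of them unless $K$ is small; one then peels two length-$1$ complete cycles, or a length-$1$ complete cycle together with a length-$1$ incomplete cycle — in each case $j=1$ with an explicit $2\times2$ block $\bfE$ — and the finitely many remaining small-$K$ configurations are handled directly. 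The main obstacle is exactly this case analysis: showing a short peel-able sub-collection always exists and that the reduced instance $\bfD'$ falls under the induction hypothesis; no single computation is hard. Note that Lemma~\ref{lem:zeroprob} is not needed here, since only the existence of one good assignment is claimed — the almost-sure statement is recovered afterward by combining this assignment with that lemma.
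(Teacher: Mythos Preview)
Your inductive, block-splitting strategy is a reasonable alternative to the paper's Laplace-expansion-with-parameter reduction, and the block-diagonal observation is correct: if a sub-collection of whole cycles occupies exactly $2j$ rows and you confine its vectors to $\mathrm{span}(\bfe_1,\dots,\bfe_j)$ and the rest to the complement, then $\bfD$ factors as $\pm\det(\bfE)\det(\bfD')$. The base case and the peel of a single complete cycle of length $2$ are fine.

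However, one of your proposed peels is fatally broken. When you peel \emph{two length-$1$ complete cycles} with $j=1$, the two rows are $[\bfv^1,\bfv^1]$ and $[\bfv^2,\bfv^2]$ with $\bfv^1,\bfv^2\in\bR$, so
\[
\bfE=\begin{bmatrix}\bfv^1&\bfv^1\\ \bfv^2&\bfv^2\end{bmatrix},\qquad \det(\bfE)=\bfv^1\bfv^2-\bfv^1\bfv^2=0
\]
for \emph{every} assignment. Hence this peel never yields $\det(\bfD)\neq 0$, and the case it was meant to cover is left open. Crucially, that case is not confined to ``small $K$'': for any $K\ge 3$ one can take all $K-1$ complete cycles to have length $1$ and a single incomplete cycle of length $K+1$; then there is no complete cycle of length $2$ and no incomplete cycle of length $1$, so neither of your two $j=1$ peels applies (the one that pairs a complete with an incomplete length-$1$ cycle is unavailable, and the two-complete-length-$1$ peel is singular). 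Your appeal to ``finitely many remaining small-$K$ configurations'' therefore does not close the gap.

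The paper handles exactly this situation differently: in its case 3-1 it couples a length-$1$ complete cycle with an incomplete cycle of length $\geq 2$, sets two carefully chosen entries to a free parameter $\lambda$, and expands the determinant along two rows to obtain a quadratic in $\lambda$ whose leading coefficient is the determinant of a $2K\times 2K$ matrix meeting the induction hypothesis. If you want to salvage the block approach, you would need a different peel here---for instance, grouping two length-$1$ complete cycles together with a length-$2$ piece of the incomplete structure at $j=2$---but then you must rework the bookkeeping for $\bfD'$ and verify the $4\times4$ block explicitly; as written, the argument does not go through.
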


\begin{proof}
We use induction on $K$. 
For $K=1$, {(all the rows of $\bfA$ are scalars)} and so we use $v_i$ to denote the $i$-th entry of $\bfA$. 
In this case, there should be no complete cycles and the total number of cycles must be greater than or equal to one. 
Thus only the following two cases can happen:
\begin{align}\label{total>=K:eq1}
\bfD^1=\left [ \begin{array}{cc} v_1 & v_2 \\ v_3 & v_4 \end{array} \right ], \ \bfD^2=\left [ \begin{array}{cc} v_1 & v_2 \\ v_2 & v_3 \end{array} \right ],
\end{align}
where $\bfD^1$ and $\bfD^2$ have two and one incomplete cycles respectively. We can simply check that $p_1=\det(\bfD^1)=v_1 v_4 -v_2 v_3$ and $p_2=\det(\bfD^2)=v_1v_3 -v_2^2$ are both nonzero polynomials, thus, a suitable assignment trivially exists.

Now we assume that the induction hypothesis holds for $K$ and we extend it to $K+1$. Consider a $2(K+1)\times 2(K+1)$ matrix $\bfD$ such that $\bfD$ contains exactly $K$ complete cycles and that the total number of cycles in $\bfD$ is greater than or equal to $K+1$. The proof requires checking several cases that have been listed below:

\begin{enumerate}
\item There is a complete cycle of length $2$ in $\bfD$.
\item There are no complete cycles of length $2$ but there is a complete cycle of length more than $2$. 
\item All the complete cycles have length $1$. This contains the following two sub-cases:
\begin{itemize}
\item [3-1)] There is an incomplete cycle of length $2$ or more.
\item [3-2)] All the incomplete cycles have length $1$.
\end{itemize}
\end{enumerate}  
  
\noindent 1) If there is a complete cycle of length $2$, the matrix $\bfD$ will be as in Equation \eqref{total>=K:eq2}.

\begin{align}\label{total>=K:eq2}
\bfD=\left [ \begin{array}{ccccccccc} \bfv_1^1 & \bfv^1_2 & \cdots & \bfv^1_{K+1} &\Big |& \bfv^2_1 & \bfv^2_2 & \cdots & \bfv^2_{K+1}\\
\bfv^2_1&\bfv^2_2 & \cdots & \bfv^2_{K+1} &\Big | &\bfv_1^1&\bfv^1_2 & \cdots & \bfv^1_{K+1}\\
\hline 
& & \vdots & & \Big | & &  & \vdots & \\
\end{array} \right ]
\end{align}
As an assignment, we set $\bfv^1_{K+1}=1$ and set all the remaining components of $\bfv^1, \bfv^2$ equal to zero. By expanding the determinant with respect to the first and the second row, we can see that $\det(\bfD)=\pm \det(\widetilde{\bfD})$ where $\widetilde{\bfD}$ is a $2K \times 2K$ matrix obtained by removing the first and the second row of $\bfD$ along with the columns $K+1$ and $2(K+1)$. It is not difficult to see that $\widetilde{\bfD}$ is a $2K\times 2K$ matrix and has $K-1$ complete cycles because one complete cycle was removed by the determinant expansion. Also, its total number of cycles is greater or equal to $K$, thus, using the induction hypothesis, there is an assignment to the remaining variables in $\widetilde{\bfD}$ with $\det(\widetilde{\bfD}) \neq 0$, which along with $\bfv^1_{K+1}=1$ and $\bfv^i_{K+1}=0$ for all $i > 2$, gives a suitable assignment to the matrix $\bfD$.  

\noindent 2) If there is no complete cycle of length $2$ but there is at least one complete cycle of length more than $2$, we follow a similar procedure. In this case, the matrix $\bfD$ can be represented as in \eqref{total>=K_one_cyclegt2}.

\begin{align}\label{total>=K_one_cyclegt2}
\bfD=\left [ \begin{array}{ccccccccc} \bfv_1^1 & \bfv^1_2 & \cdots & \bfv^1_{K+1} &\Big |& \bfv^2_1 & \bfv^2_2 & \cdots & \bfv^2_{K+1}\\
\bfv^2_1&\bfv^2_2 & \cdots & \bfv^2_{K+1} &\Big | &\bfv^3_1&\bfv^3_2 & \cdots & \bfv^3_{K+1}\\
& & \vdots & & \Big | & &  & \vdots & \\
& & \vdots & & \Big | &\bfv_1^1&\bfv^1_2 & \cdots & \bfv^1_{K+1}\\
\hline 
& & \vdots & & \Big | & &  & \vdots & \\
\end{array} \right ]
\end{align}

As an assignment, we set $\bfv^1_{K+1}$ equal to $\lambda$ and all the remaining components of $\bfv^1$ equal to zero. 
We also set $\bfv^i_{K+1} = 0$ for all $i >1$.
We keep $\lambda$ as a parameter and specify its value later. By expanding the determinant with respect to the first and the last row in the cycle, we can check that $\det(\bfD)$ is a quadratic function of $\lambda$, where the coefficient of $\lambda^2$ is given by $\pm \det(\widetilde{\bfD})$, where $\widetilde{\bfD}$ is the $2K\times 2K$ matrix obtained after removing the rows corresponding to $\bfv^1$ in the cycle (the first and the last row of the cycle) along with the columns $K+1$ and $2(K+1)$. It is not difficult to see that $\widetilde{\bfD}$ is as in \eqref{total>=K:eq3}, where $\widetilde{\bfv}^i \in \bR^K$  denotes a $K$-dimensional vector obtained by removing the last component of $\bfv^i$. 

\begin{align}\label{total>=K:eq3}
\widetilde{\bfD}=\left [ \begin{array}{ccc} 
\widetilde{\bfv}^2 & \Big | & \widetilde{\bfv}^3 \\
\widetilde{\bfv}^3 & \Big |& \widetilde{\bfv}^4 \\
\vdots & \Big | & \vdots \\
\hline 
 \vdots &\Big | & \vdots \\
\end{array} \right ]
\end{align}

We still need to assign values to $\widetilde{\bfv}^i$. Note that $\widetilde{\bfD}$ satisfies the induction hypothesis, i.e., it is a $2K\times 2K$ matrix with exactly $K-1$ complete cycles (one complete cycles was removed by the truncation) with total number of cycles  greater than $K$. Thus, there is an assignment to the variables with a nonzero $\det(\widetilde{\bfD})$. 
This implies that given the assignments thus far, $\det(\bfD)$ is a quadratic function of $\lambda$ such that the coefficient of $\lambda^2$ is non-zero. 
Therefore, we can find at least one value of $\lambda=\lambda^*$ such that $\det(\bfD)\neq 0$, which is the desired assignment for $\bfD$.

\noindent 3) Finally, we consider the case where all the complete cycles have length $1$. Then the matrix $\bfD$ has $K$ complete cycles of length $1$, where each complete cycle occupies one row of the matrix $\bfD$. Consider the $K+2$ remaining rows of $\bfD$. Again we need to consider two different cases.

\noindent 3-1) Assume that there is an incomplete cycle of length $2$ or more. This has been shown in \eqref{total>=K:ic_cycle>2},  where $\bfv^1$ denotes a specific complete cycle and where, for ease of illustration, we assume that the incomplete cycle has length $2$ and, without loss of generality, it is adjacent to the complete first complete cycle of length $1$. 
The proof can be extended to an incomplete cycle of length greater than $2$. 

\begin{align}\label{total>=K:ic_cycle>2}
\widetilde{\bfD}=\left [ \begin{array}{ccccccccc}
 \bfv_1^1 & \bfv^1_2 & \cdots & \bfv^1_{K+1} &\Big |& \bfv_1^1 & \bfv^1_2 & \cdots & \bfv^1_{K+1}\\
\hline
\bfv^2_1&\bfv^2_2 & \cdots & \bfv^2_{K+1} &\Big | &\bfv^3_1&\bfv^3_2 & \cdots & \bfv^3_{K+1}\\
\bfv^3_1&\bfv^3_2 & \cdots & \bfv^3_{K+1} &\Big | &\bfv^4_1&\bfv^4_2 & \cdots & \bfv^4_{K+1} \\
\hline 
& & \vdots & & \Big | & &  & \vdots & \\
\end{array} \right ]
\end{align}

We set $\bfv^1_{K+1}$ and $\bfv^4_{K+1}$ equal to $\lambda$. We also set all the other components of $\bfv^1$ and $\bfv^4$ equal to zero. By expanding the determinant with respect to the first and the third row of $\bfD$, it can be checked that $\det(\bfD)$ is again quadratic in terms of $\lambda$, where the coefficient of $\lambda^2$ is given by the determinant of $\widetilde{\bfD}$ where $\widetilde{\bfD}$ is given by \eqref{total>=K:eq4}, 
\begin{align}\label{total>=K:eq4}
\widetilde{\bfD}=\left [ \begin{array}{ccc} 
\widetilde{\bfv}^2 & \Big | & \widetilde{\bfv}^3 \\
\hline 
 \vdots &\Big | & \vdots \\
\end{array} \right ]
\end{align}
where $\widetilde{\bfv}^i \in \bR^K$  again denotes a $K$-dimensional vector obtained by removing the last component of $\bfv^i$. 
Note that $\widetilde{\bfD}$ satisfies the induction hypothesis because it is a $2K\times 2K$ matrix with $K-1$ complete cycles (one complete cycle was removed by truncation) whose total number of cycles is greater or equal to $K$. We can now build a suitable assignment for $\lambda$ and $\widetilde{\bfD}$ {by using the induction hypothesis and} following the same procedure outlined after \eqref{total>=K:eq3}.
Also note that if the incomplete cycle has length more than $2$, then we obtain a matrix $\widetilde{\bfD}$ similar to that in \eqref{total>=K:eq4} but with more than $1$ row in its modified incomplete cycle, and a similar argument allows us to determine a valid assignment to the variables.

\noindent 3-2) The only case that remains to be checked is when all the incomplete cycles are of length $1$. {Recall that all the complete cycles are also of length $1$.} In this case, there are at least two incomplete cycles $\bfv^1, \bfv^2$ and $\bfv^3, \bfv^4$. As we are looking for a suitable assignment with $\det(\bfD)\neq 0$, we can always assign equal values to $\bfv^2$ and $\bfv^3$. This essentially implies that we can merge these two incomplete cycles to build the incomplete cycle $\bfv^1, \bfv^2=\bfv^3, \bfv^4$. By this modification, we obtain an incomplete cycle of length $2$. Note that {as all the cycles in this case have length $1$,} the total number of cycles after this modification is $2(K+1)-1=2K+1$, which is still  greater or equal to $K+1$ for all $K=1,2,\dots$. As we have already argued in part 3-1), i.e., the case for which there is at least one incomplete cycle of length $2$ or more, that a suitable assignment is possible, the proof is complete.
\end{proof}

 Now we can prove the more general case, in which the total number of complete cycles in $\bfD$ is less than $K$.

\begin{proposition}\label{total>=K_complete<K}
Assume that the number of complete cycles in $\bfD$ is less than $K$ and the total number of cycles (complete or incomplete) is  greater than or equal to $K$. Then, there is assignment to the variables for which $\det(\bfD) \neq 0$.  
\end{proposition}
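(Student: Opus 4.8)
The plan is to reduce this proposition to Proposition~\ref{total>=K_complete=K-1} by converting incomplete cycles of $\bfD$ into complete cycles, one at a time. The key observation I would exploit is that the statement only asks for the existence of a single assignment with $\det(\bfD)\neq 0$, so I am free to impose additional coincidences among the rows of $\bfA$ that appear in $\bfD$: if $\det(\bfD)$, viewed as a polynomial in the entries of those rows, does not vanish identically on the affine subspace cut out by such coincidences, then it is a nonzero polynomial, and any point of that subspace at which it is nonzero is an assignment of the required kind.

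Concretely, write $c$ for the number of complete cycles in $\bfD$ and $t\geq K$ for the total number of cycles. If $c=K-1$ we are already in the setting of Proposition~\ref{total>=K_complete=K-1}, so assume $c\leq K-2$; then the number of incomplete cycles is $t-c\geq K-c\geq 2$, so at least one incomplete cycle is present. I would pick an incomplete cycle $\bfv^1,\bfv^2,\dots,\bfv^{b+1}$ (with $\bfv^1\neq\bfv^{b+1}$), occupying the $b$ rows $(\bfv^1,\bfv^2),(\bfv^2,\bfv^3),\dots,(\bfv^b,\bfv^{b+1})$ of $\bfD=[\bfB,\widehat{\bfB}]$, and impose the coincidence $\bfv^{b+1}=\bfv^1$. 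Because the cycle is incomplete and maximal, $\bfv^1$ occurs in $\bfD$ only in the $\bfB$-block of its first row and $\bfv^{b+1}$ only in the $\widehat{\bfB}$-block of its last row; hence this substitution does not affect any of the other cycles and simply turns the chosen incomplete cycle into a complete cycle of length $b$. The resulting $2K\times 2K$ matrix has $c+1$ complete cycles, still a total of $t$ cycles, and still satisfies $c+1\leq K-1$ and $t\geq K$.

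Iterating this substitution $K-1-c$ times produces a $2K\times 2K$ matrix whose cycle decomposition consists of exactly $K-1$ complete cycles and $t\geq K$ cycles in total; at each intermediate step with $j\leq K-2$ complete cycles there are $t-j\geq K-j\geq 2$ incomplete cycles available to convert, so the procedure never gets stuck. Proposition~\ref{total>=K_complete=K-1} then supplies an assignment to the rows of this final matrix making its determinant nonzero, and combining that assignment with the coincidences $\bfv^{b+1}=\bfv^1$ imposed along the way yields an assignment to the rows of $\bfA$ appearing in $\bfD$ with $\det(\bfD)\neq 0$.

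I expect the only delicate point to be the claim that imposing $\bfv^{b+1}=\bfv^1$ really converts an incomplete cycle into a genuine complete cycle without merging it with, or otherwise disturbing, the remaining cycles. This rests on the maximality clause in the definition of a cycle --- which forces, in an incomplete cycle, the first vector to appear in $\bfD$ only on the $\bfB$-side and the last vector only on the $\widehat{\bfB}$-side --- together with the fact that distinct rows of $\bfA$ are almost surely distinct, so that the cycle decomposition is a genuine partition of the rows of $\bfB$ and $\widehat{\bfB}$. Once this bookkeeping is settled, the inductive content of the argument is entirely carried by Proposition~\ref{total>=K_complete=K-1}.
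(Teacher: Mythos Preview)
Your proposal is correct and follows essentially the same approach as the paper: both reduce to Proposition~\ref{total>=K_complete=K-1} by setting the first and last vectors of incomplete cycles equal, thereby converting enough incomplete cycles into complete ones to reach exactly $K-1$ complete cycles. Your version is in fact more careful than the paper's, since you explicitly verify that at each step an incomplete cycle remains available to convert and that the coincidence $\bfv^{b+1}=\bfv^1$ does not disturb the other cycles.
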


\begin{proof}
We argue that, in this case, without loss of generality, we can assume that the number of complete cycles is exactly $K-1$, thus, using Proposition \ref{total>=K_complete=K-1}, we obtain the proof. To explain this, suppose that $\bfv^1, \bfv^2, \dots, \bfv^\ell$ is an incomplete cycle. As we are essentially looking for a suitable assignment, we can always assign equal values to the first and the last vector in this cycle, i.e., we can set $\bfv^1=\bfv^\ell$. With this assignment, an incomplete cycle can be treated like a complete cycle. Consequently, if the number of complete cycles is less than $K-1$, we can always convert some of the incomplete cycles into complete ones to keep the number of complete cycles equal to $K-1$. Moreover, the total number of cycles after conversion is still  greater or equal to $K$. Thus, using Proposition \ref{total>=K_complete=K-1}, we obtain the proof. 
\end{proof}

\vspace{3mm}
\subsubsection{Total number of cycles in $\bfD$ is less than $K$}
First note that in this case, without loss of generality we can assume that all the cycles are complete. More precisely, consider an incomplete cycle $\bfv^1, \bfv^2, \dots, \bfv^\ell$. As we are looking for a suitable assignment for these variables, we can always assign equal values to $\bfv^1$ and $\bfv^\ell$, which implies that we can treat this incomplete cycle like a complete one. Hence, we can always assume that all the cycles are complete. We need to prove the following proposition. 

\begin{proposition}\label{total<K_complete<K}
Let $\bfD$ be a $2K \times 2K$ matrix as before. Assume that all the cycles of $\bfD$ are complete, and the number of these complete cycles is strictly less than $K$. Then, there is an assignment to the variables in $\bfD$ with $\det(\bfD)\neq 0$. 
\end{proposition}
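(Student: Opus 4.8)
My plan follows the template of Propositions~\ref{total>=K_complete=K-1} and~\ref{total>=K_complete<K}: since $\det(\bfD)$ is a polynomial in the entries of the distinct rows of $\bfA$ occurring in $\bfD$, by Lemma~\ref{lem:zeroprob} it suffices to exhibit one assignment of those rows with $\det(\bfD)\neq 0$. I would argue by induction on $K$. The case $K=1$ is vacuous, as a $2\times 2$ matrix cannot have fewer than one cycle; for $K=2$ the only admissible configuration is a single complete cycle of length $4$, and the assignment $\bfv^1=\bfe_1$, $\bfv^2=0$, $\bfv^3=\bfe_2$, $\bfv^4=0$ turns $\bfD$ into a permutation matrix, so $\det(\bfD)=\pm1$.

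For the inductive step, let $m$ be the number of (complete) cycles of $\bfD$, with lengths $\ell_1,\dots,\ell_m$ summing to $2K$. The hypothesis $m<K$ forces some $\ell_j\ge 3$: if all $\ell_j\le 2$, writing $a$ and $b$ for the number of cycles of length $2$ and $1$, one has $2a+b=2K$ and $a+b=m<K$, which gives $b<0$, a contradiction. I would first dispose of the clean case in which \emph{every} cycle of $\bfD$ has even length. Writing $\tau$ for the permutation of the $2K$ rows induced by the cycle structure (so the $i$-th row of $\bfD$ is $(\bfv^i,\bfv^{\tau(i)})$), evenness of all cycles is equivalent to the existence of a set $\clZ$ of rows with $|\clZ|=K$ and $\tau(\clZ)$ equal to the complement of $\clZ$; setting $\bfv^i=0$ for $i\in\clZ$ and assigning the remaining $\bfv^i$ pairwise distinct standard basis vectors makes $\bfD$ a generalized permutation matrix, whence $\det(\bfD)=\pm 1\neq 0$. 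So we may henceforth assume $\bfD$ has a cycle of odd length.

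The reduction proper splits on the smallest cycle length present. If $\bfD$ has a cycle of length $2$, with vectors $\bfv^1,\bfv^2$ and rows $(\bfv^1,\bfv^2),(\bfv^2,\bfv^1)$, set $\bfv^1=\bfe_K$, $\bfv^2=0$, and force the $K$-th coordinate of every other row of $\bfA$ occurring in $\bfD$ to vanish; then columns $K$ and $2K$ of $\bfD$ each contain a single nonzero entry, both equal to $1$ and located in the two rows of this cycle, so expanding along these columns gives $\det(\bfD)=\pm\det(\widetilde\bfD)$ for a $2(K-1)\times 2(K-1)$ matrix $\widetilde\bfD$ of the same kind with $m-1<K-1$ complete cycles, and the inductive hypothesis applies. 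If instead every cycle of $\bfD$ has length $\ge 3$, apply the quadratic ($\lambda^2$) expansion from the proof of Proposition~\ref{total>=K_complete=K-1}, part~2, to one such cycle: this shortens it to length $\ge 1$, and the coefficient of $\lambda^2$ in $\det(\bfD)$ equals $\pm\det(\widetilde\bfD)$, where $\widetilde\bfD$ is a $2(K-1)\times 2(K-1)$ matrix of the same kind (once the shortened cycle is made complete by identifying its first and last vectors) with still $m$ complete cycles; since $3m\le\sum\ell_j=2K$ we get $m\le K-2$, hence $m<K-1$, except for the single small configuration $K=3$ with two $3$-cycles, which I would check directly, and then the inductive hypothesis finishes the argument.

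The remaining case---$\bfD$ has a cycle of length $1$ (a fixed point, i.e.\ a row $(\bfv,\bfv)$) but no cycle of length $2$---is the delicate one, and is where I expect the real difficulty to be. A fixed point cannot be removed in isolation (its vector occupies both blocks of a single row, so zeroing it out leaves a vanishing column), so it must be paired with a longer cycle, which exists by the observation above (indeed of some odd length $n\ge 3$). When $n=3$---or, more generally, whenever $m\le K-2$---one removes the $1+n$ rows and $1+n$ columns occupied by the fixed point together with that cycle, using an assignment analogous to the length-$2$ case so those columns interact only with those rows, and lands on a smaller matrix of the same kind to which induction applies. The genuinely tight subcase is $m=K-1$: since there is no length-$2$ cycle the profile is then heavily constrained (the number of length-$1$ cycles is forced to be at least $(K-3)/2$, and apart from finitely many small profiles one is left with a single long cycle together with several fixed points), and one eliminates the fixed points against the long cycle two at a time---each step lowering both $m$ and the matrix size so as to strictly preserve the inequality (number of complete cycles) $<$ (half the number of rows)---with the residual small profiles verified by hand. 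Keeping that inequality strict through every reduction in the presence of fixed points is the principal obstacle; all other ingredients are already available from the preceding propositions.
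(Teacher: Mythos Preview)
Your case analysis is more elaborate than needed and leaves a genuine gap in what you call the ``delicate'' case B3. Three specific problems:
\begin{itemize}
\item The assertion that the longer cycle paired with a fixed point has ``some odd length $n\ge 3$'' is false: at $K=5$ the profile $(1,1,4,4)$ lies in B3 but contains no odd cycle of length $\ge 3$.
\item The claim that for $m=K-1$ ``apart from finitely many small profiles one is left with a single long cycle together with several fixed points'' is also false: at $K=10$ the profile $(1,1,1,1,3,3,3,3,4)$ has $m=9=K-1$, no length-$2$ cycle, and five cycles of length $\ge 3$.
\item The reduction ``removing the $1+n$ rows and $1+n$ columns'' is not made precise; if interpreted as a block factorisation it requires $1+n$ even, and the $(1+n)\times(1+n)$ block then has two cycles and half-size $(1+n)/2$, so the needed inequality fails at $n=3$.
\end{itemize}

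The missing idea is that fixed points never need special treatment. Since $m<K$ forces some cycle of length $\ge 3$, you can \emph{always} apply the $\lambda^2$ expansion of Proposition~\ref{total>=K_complete=K-1}, part~2, to that cycle. The resulting $2(K-1)\times 2(K-1)$ matrix $\widetilde\bfD$ has $m-1$ complete cycles and one \emph{incomplete} cycle (the shortened one), for a total of $m$ cycles. Now:
\begin{itemize}
\item If $m\le K-2$: identify the endpoints of the incomplete cycle to make it complete, and apply the induction hypothesis (all cycles complete, fewer than $K-1$ of them).
\item If $m=K-1$: $\widetilde\bfD$ has total cycle count $K-1$ with exactly $K-2$ complete cycles, which is precisely the regime of Proposition~\ref{total>=K_complete<K} (total $\ge K-1$, complete $<K-1$), so invoke it directly rather than the induction hypothesis.
\end{itemize}
This single reduction disposes of your B2, B3, and the sporadic $K=3$ check uniformly; your Case~A and Case~B1 then become unnecessary. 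This is exactly the paper's argument: the already-proved Proposition~\ref{total>=K_complete<K} absorbs the boundary $m=K-1$ that you were trying to handle from scratch.
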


\begin{proof}
We prove this result using induction on $K$.
For $K=1$, the number of cycles should be $0$, thus, we need to start the induction from $K=2$.  For $K=2$, we can have one complete cycle as in  \eqref{total<K:K=2}, where $\bfv^\ell$ are vectors in $\bR^2$:
\begin{align}\label{total<K:K=2}
\bfD=\left[ \begin{array}{cc} \bfv^1 & \bfv^2 \\ \bfv^2 & \bfv^3\\ \bfv^3 & \bfv^4\\ \bfv^4 & \bfv^1 \end{array} \right ].
\end{align}
Consider the following assignment to the variables: $\bfv^1=[1, 0]$, $\bfv^3=[0,1]$ and $\bfv^2=\bfv^4=[0, 0]$. We can simply check that for this assignment, matrix $\bfD$ is a matrix with exactly one $1$ in each row and each column, and zeros elsewhere. Thus, it is a permutation matrix and $\det(\bfD) \neq 0$.

%
%

Now, assume that the result holds for some {$K\geq 2$} and consider the statement for $K+1$. Let $\bfD$ be a $2(K+1)\times 2(K+1)$ matrix with less than $K$ cycles such that all the cycles are complete. There should be a complete cycle of length at least $3$ in $\bfD$ otherwise the total number of the rows of the matrix must be less than or equal to $2K$, which lead to{ a contradiction}. We now provide the proof argument for the case in which the complete cycle has length $3$ 
as shown in \eqref{total<K:cycle3}, where for ease of illustration we assume that this cycle is the first cycle in $\bfD$. The same idea can be modified for the case in which the complete cycle has a length greater than $3$.

\begin{align}\label{total<K:cycle3}
\bfD=\left [ \begin{array}{ccccccccc} \bfv_1^1 & \bfv^1_2 & \cdots & \bfv^1_{K+1} &\Big |& \bfv^2_1 & \bfv^2_2 & \cdots & \bfv^2_{K+1}\\
\bfv^2_1&\bfv^2_2 & \cdots & \bfv^2_{K+1} &\Big | &\bfv^3_1&\bfv^3_2 & \cdots & \bfv^3_{K+1}\\
\bfv^3_1&\bfv^3_2 & \cdots & \bfv^3_{K+1} &\Big | &\bfv_1^1&\bfv^1_2 & \cdots & \bfv^1_{K+1}\\
\hline 
& & \vdots & & \Big | & & & \vdots  & \\
\end{array} \right ]
\end{align}

We assign value $\lambda$ to $\bfv^1_{K+1}$, where $\lambda$ will be specified later. By expanding the determinant with respect to the first and the third row of $\bfD$, it is seen that $\det(\bfD)$ is a quadratic function of $\lambda$, where the coefficient of $\lambda^2$ is given by the determinant of $\widetilde{\bfD}$ given by \eqref{total<K:cycle3_truncated}.
\begin{align}\label{total<K:cycle3_truncated}
\widetilde{\bfD}=\left [ \begin{array}{ccc}
\widetilde{\bfv}^2 &\Big | &\widetilde{\bfv}^3 \\
\hline 
\vdots &\Big | &\vdots
\end{array} \right ] , 
\end{align}
where $\widetilde{\bfv}^i$ is $\bfv^i$ after removing the last component. Note that $\widetilde{\bfD}$ is a $2K\times 2K$ matrix whose number of complete cycles is less than $K$. However the number of cycles in $\bfD$ and $\widetilde{\bfD}$ is the same because one incomplete cycle is created after the truncation. 

We already know that the total number of cycles in the initial matrix $\bfD$ is less than $K+1$. Therefore, two cases can happen. If the total number of cycles in $\bfD$ is less than $K$, then $\widetilde{\bfD}$ satisfies the induction hypothesis since it will have less than $K$ cycles, where the newly created incomplete cycle can be treated as a complete one. Consequently, we can find an assignment to the variables in $\widetilde{\bfD}$ with a nonzero $\det(\widetilde{\bfD})$. We also assign zero to $\bfv^i_{K+1}$ for $i>1$. Thus, there exists at least one $\lambda=\lambda^*$ such that $\det(\bfD) \neq 0$. Assigning this value to $\bfv^1_{K+1}$, together with the values assigned to the remaining variables yields a suitable assignment for $\bfD$ that leads to $\det(\bfD) \neq 0$. 


Finally, if the number of cycles in $\bfD$ is exactly $K$, then $\widetilde{\bfD}$ 
{will have $K-1$ complete cycles and one incomplete one.} Using Proposition \ref{total>=K_complete<K}, there must be an assignment to $\widetilde{\bfD}$ with $\det(\widetilde{\bfD}) \neq 0$. Following similarly to the previous case, we can find a suitable value $\lambda=\lambda^*$ which along with $\widetilde{\bfD}$ gives a suitable assignment to the initial matrix $\bfD$. This completes the proof. 
\end{proof}

Putting together the results of Propositions \ref{total>=K_complete<K} and \ref{total<K_complete<K}, we can now argue that if the total number of complete cycles in $\bfC$ is less than $K$ then for any non-zero $\bfx$, the recovery algorithm cannot produce a solution.

\begin{proposition}\label{complete<K}
Let $\bfB$ and $\widehat{\bfB}$ be as before and assume that the number of complete cycles in the cycle representation of $\bfC=[\bfB, \widehat{\bfB}]$ is less than $K$. Then, w.p. $1$, for any $\bfx \in \bR^K \setminus \{0\}$, there does not exist a solution $\widehat{\bfx}$ to the OSLE $\widehat{\bfB}\widehat{\bfx}=\bfy=\bfB \bfx$.  
\end{proposition}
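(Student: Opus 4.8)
The plan is to reduce the claim to a rank statement about the concatenated matrix $\bfC$ and then invoke the two determinant propositions already established. Writing $\bfu = [-\bfx;\widehat{\bfx}]$, the OSLE $\widehat{\bfB}\widehat{\bfx} = \bfy = \bfB\bfx$ is equivalent to $\bfC\bfu = 0$ with $\bfC = [\bfB,\widehat{\bfB}]$. Hence it suffices to show that, w.p. $1$, $\bfC$ has full column-rank $2K$: in that case the only solution of $\bfC\bfu=0$ is $\bfu = 0$, forcing $\bfx = 0$, which contradicts $\bfx\neq 0$, so no $\widehat{\bfx}$ can exist. This makes the statement a pure property of the random matrix $\bfA$, with $\bfx$ and $\widehat{\bfx}$ no longer entering the argument.

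Since $N\geq 2K$ and, by Lemma \ref{lem:permokay}, permuting the rows of $\bfC$ does not affect solvability, I would put $\bfC$ in cycle-ordered form and let $\bfD$ be its first $2K$ rows; it is enough to prove $\det(\bfD)\neq 0$ w.p. $1$, since this already forces $\bfC$ to have full column-rank. The matrix $\bfD$ inherits a cycle decomposition from $\bfC$ (with the last cycle possibly truncated), and as noted in the text the number of complete cycles of $\bfD$ is at most that of $\bfC$, hence strictly less than $K$. I would then split according to the total number of cycles (complete or incomplete) in $\bfD$: if it is $\geq K$, Proposition \ref{total>=K_complete<K} directly supplies an assignment to the rows of $\bfA$ appearing in $\bfD$ for which $\det(\bfD)\neq 0$; if it is $<K$, then after turning each incomplete cycle into a complete one by assigning equal values to its first and last rows (which leaves the number of cycles unchanged), Proposition \ref{total<K_complete<K} does the same.

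In either case $\det(\bfD)$, regarded as a polynomial in the entries of the distinct rows $\bfv^1,\dots,\bfv^t$ of $\bfA$ that occur in $\bfD$, is a nonzero polynomial. These entries are i.i.d. draws from the continuous distribution $f$, so Lemma \ref{lem:zeroprob} yields $\det(\bfD)\neq 0$ w.p. $1$; consequently $\bfC$ has full column-rank w.p. $1$, which gives $\bfu=0$ and the claimed non-existence of $\widehat{\bfx}$. Finally, since the selection matrices determining $\bfB$ and $\widehat{\bfB}$ range over a finite set, a union bound shows this holds simultaneously for all admissible $(\bfB,\widehat{\bfB})$ w.p. $1$.

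I do not expect a genuine obstacle here, as the heavy lifting is already carried out in Propositions \ref{total>=K_complete<K} and \ref{total<K_complete<K}; the only points requiring care are bookkeeping ones: checking that passing to cycle-ordered form and truncating to $2K$ rows genuinely preserves the inequality "number of complete cycles $<K$" (so the hypotheses of those two propositions are met), and verifying that the rows of $\bfA$ appearing in $\bfD$ are pairwise distinct, so that $\det(\bfD)$ really is a polynomial in independent continuous variables and Lemma \ref{lem:zeroprob} applies.
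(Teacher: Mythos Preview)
Your proposal is correct and follows essentially the same route as the paper: reduce the OSLE to $\bfC\bfu=0$, pass to the $2K\times 2K$ submatrix $\bfD$ in cycle-ordered form, invoke Propositions \ref{total>=K_complete<K} and \ref{total<K_complete<K} (splitting on the total cycle count of $\bfD$) to exhibit an assignment with $\det(\bfD)\neq 0$, and then apply Lemma \ref{lem:zeroprob}. Your write-up is in fact slightly more explicit than the paper's own proof, which compresses the case split and the application of Lemma \ref{lem:zeroprob} into a single sentence; your added union-bound remark over the finitely many $(\bfB,\widehat{\bfB})$ pairs is also appropriate for the global statement of Theorem \ref{thm:main_theorem}.
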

\begin{proof}
From the results of Propositions \ref{total>=K_complete<K} and \ref{total<K_complete<K} it follows that if the number of complete cycles in the cycle representation of $\bfC=[\bfB, \widehat{\bfB}]$ is less than $K$, then there is an assignment to the variables in $\bfD$ such that $\det(\bfD)\neq 0$.
Thus, if the entries of $\bfA$ are drawn i.i.d. from a continuous distribution, then by applying Lemma \ref{lem:zeroprob}, we immediately obtain that $\det(\bfD)\neq 0$ w.p. $1$. 
As a result the null space of $\bfD$ and $\bfC$ are equal to the singleton $\{0\}$ w.p. $1$.
{Thus if $\widehat{\bfx}$ is a solution to the OSLE $\widehat{\bfB}\widehat{\bfx}=\bfy=\bfB \bfx$, then, $\bfu=[-\bfx;\widehat{\bfx}]$ must lie in the null-space of $\bfC$, which forces $\bfx$ to be $0$, which is a contradiction since we assumed that $\bfx\neq 0$. Hence, OSLE $\widehat{\bfB}\widehat{\bfx}=\bfy=\bfB \bfx$ can not have any solution.}
\end{proof}

Combining the results of Propositions \ref{prop:xzero}, \ref{complete>=K}, and \ref{complete<K}, we conclude that if $N \geq 2K$, then for any $\bfx \in \bR^K$ any solution to the OSLE $\widehat{\bfB}\widehat{\bfx}=\bfy=\bfB \bfx$ satisfies $\widehat{\bfx} = \bfx$.
This completes the proof of Theorem \ref{thm:main_theorem}.

\section{Converse result for the minimum number of measurements}\label{sec:converse}
In this section, we prove that if\footnote{In the degenerate case of $K=1$, recovery is possible with just $N=1$ measurement as there is no ambiguity in the ordering of the measurements.} $K >1$, then $2K$ is the minimum number of random measurements required for universal recovery of all signals in $\bR^K$. 
To show the result for $K>1$, we prove that if $N<2K$, then w.p. $1$, for any realization of $\bfA$, there is at least one signal $\bfx\neq 0$ for which the brute force algorithm of Section \ref{sec:rec_alg} produces
an incorrect solution. We consider the simple case $M=N$, in which case $\bfB$ is obtained by simply permuting the rows of $\bfA$. From the description of the algorithm in Section \ref{sec:rec_alg}, it is immediate that the converse result for this case provides a lower bound also on the required number of measurements in the more general case $M>N$.

As in Section \ref{sec:intro}, let $\bfB$ be the $N\times K$ matrix of measurements and let $\bfy=\bfB \bfx$ be the $N\times 1$ vector of measurements. As described earlier, the recovery algorithm considers the set of all {$N\times N$} permutation matrices ${\bf \Pi}$ and seeks a solution $\widehat{\bfx}$ to the linear equation $\bfy={\bf \Pi} \bfB \widehat{\bfx}$. 
Thus if we find a signal $\widehat{\bfx}$ with $\widehat{\bfx}\neq \bfx$ and a fixed permutation ${\bf \Pi}$ such that $\bfB \bfx={\bf \Pi} \bfB \widehat{\bfx}$, then it follows that the recovery algorithm could fail. 
In the rest of this section we prove the following result which is a converse to Theorem \ref{thm:main_theorem}.

\begin{theorem}\label{thm:converse}
For $N<2K$ and $K > 1$ let $\bfB$ be an $N\times K$ matrix whose components are drawn i.i.d. at random from a continuous distribution. Then, w.p. $1$, the problem of recovering $\bfx$ from the unordered entries of $\bfy$ does not in general admit a unique solution, i.e., w.p. 1, there exists a permutation matrix ${\bf \Pi}$ and vectors $\bfx, \widehat{\bfx}$ such that $\bfx \neq \widehat{\bfx}$  and $\bfB \bfx = {\bf \Pi} \bfB \widehat \bfx$. \qed
\end{theorem}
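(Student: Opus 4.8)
The plan is to exhibit a \emph{single} explicit permutation matrix and show that it already produces the required ambiguity with probability $1$. Take ${\bf \Pi}$ to be the $N\times N$ cyclic shift (for $N=1$ this is just the identity). Introduce the linear subspace
\[
\clV \;:=\; \{(\bfx,\widehat{\bfx})\in\bR^K\times\bR^K:\ \bfB\bfx={\bf \Pi}\bfB\widehat{\bfx}\}\;=\;\clN\big([\,\bfB,\,-{\bf \Pi}\bfB\,]\big),
\]
the null-space of the $N\times 2K$ matrix $[\bfB,-{\bf \Pi}\bfB]$. Since $N<2K$ we always have the deterministic bound $\dim\clV\geq 2K-N\geq 1$. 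Any pair $(\bfx,\widehat{\bfx})\in\clV$ yields exactly the assertion of Theorem~\ref{thm:converse} \emph{provided} $\bfx\neq\widehat{\bfx}$: then $\bfy:=\bfB\bfx={\bf \Pi}\bfB\widehat{\bfx}$ has the same unordered list of entries as $\bfB\widehat{\bfx}$. So it suffices to prove that, w.p.\ $1$, $\clV$ is not contained in the ``diagonal'' subspace $\clV_0:=\{(\bfx,\bfx):\bfx\in\bR^K\}$.

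The next step is to compute $\dim(\clV\cap\clV_0)$. A pair $(\bfx,\bfx)$ lies in $\clV$ iff $({\bf \Pi}-\bfI_N)\bfB\bfx=0$, so $\clV\cap\clV_0\cong\clN\big(({\bf \Pi}-\bfI_N)\bfB\big)$ has dimension $K-\rank\big(({\bf \Pi}-\bfI_N)\bfB\big)$. The rows of $({\bf \Pi}-\bfI_N)\bfB$ are, up to sign and reindexing, the consecutive differences $\bfb^{2}-\bfb^{1},\,\bfb^{3}-\bfb^{2},\,\dots,\,\bfb^{1}-\bfb^{N}$ of the rows $\bfb^{1},\dots,\bfb^{N}$ of $\bfB$; these $N$ vectors sum to $0$, hence the rank is at most $\min(N-1,K)$. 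The key lemma I would prove is that equality holds w.p.\ $1$: setting $r:=\min(N-1,K)$, an $r\times r$ minor of the matrix of consecutive differences is a polynomial in the i.i.d.\ entries of $\bfB$, and under the assignment $\bfb^{1}=0$ and $\bfb^{j}=\bfe_{1}+\dots+\bfe_{j-1}$ for $2\leq j\leq r+1$ (so that $\bfb^{j+1}-\bfb^{j}=\bfe_{j}$ for $1\le j\le r$) one such minor equals $\pm1$. Lemma~\ref{lem:zeroprob} then gives $\rank\big(({\bf \Pi}-\bfI_N)\bfB\big)=\min(N-1,K)$, hence $\dim(\clV\cap\clV_0)=K-\min(N-1,K)$, w.p.\ $1$.

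Finally I would close with the arithmetic: it remains to check $\dim\clV\ge 2K-N>K-\min(N-1,K)$, i.e.\ $K+\min(N-1,K)>N$. For $N\le K+1$ the left-hand minus right-hand side equals $K-1$, and for $N\ge K+2$ it equals $2K-N$; both are strictly positive under the hypotheses $K>1$ and $N<2K$. Therefore, w.p.\ $1$, $\dim\clV>\dim(\clV\cap\clV_0)$, so there exists $(\bfx,\widehat{\bfx})\in\clV$ with $\bfx\neq\widehat{\bfx}$, which is the claim. If one further wants $\bfx\neq0$ (as in the discussion preceding the theorem), observe that $\clV\cap\clV_0$, $\{0\}\times\clN(\bfB)$ and $\clN(\bfB)\times\{0\}$ are all proper subspaces of $\clV$ w.p.\ $1$ --- the same minor/Lemma~\ref{lem:zeroprob} argument gives $\dim\clN(\bfB)=\max(K-N,0)<2K-N$ --- and a real vector space is never a finite union of proper subspaces, so a pair with $\bfx\neq\widehat{\bfx}$, $\bfx\neq0$ and $\widehat{\bfx}\neq0$ exists; since $\bfy={\bf\Pi}\bfB\widehat\bfx=\bfB\bfx=\bfB\cdot 0=0$ would contradict $\widehat\bfx\neq0$ only when $\bfB$ is injective, in all cases one obtains a genuine non-zero signal on which the brute-force algorithm of Section~\ref{sec:rec_alg} fails.

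The only genuine computation --- and the step I expect to be the main obstacle --- is the rank lemma for the consecutive-difference matrix $({\bf \Pi}-\bfI_N)\bfB$; everything else is dimension counting. (Alternatively, since the cyclic ${\bf \Pi}$ makes $[\bfB,-{\bf \Pi}\bfB]$ a single complete cycle in the sense of Section~\ref{sec:proof}, one could instead recycle the determinant/assignment machinery developed there, but the direct polynomial-nonvanishing argument above appears shortest.)
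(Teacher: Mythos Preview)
Your argument is correct and takes a genuinely different, more streamlined route than the paper. The paper proves the converse by treating even $N$ and odd $N$ separately: for $N=2K-2$ it builds $\bfG=[\bfB,{\bf\Pi}\bfB]$ with a single-cycle ${\bf\Pi}$, removes two columns to obtain a square $(2K-2)\times(2K-2)$ block, and invokes the cycle machinery of Section~\ref{sec:proof} (specifically Proposition~\ref{complete<K}) to show invertibility and then explicitly solve for $(\bfx,\widehat\bfx)$; for $N=2K-1$ it performs a more delicate calculation (Lemmas~\ref{odd_lemma}--\ref{odd_lemma_2}) showing that a certain rational function of the entries of $\bfB$ is nonzero w.p.\ $1$; the remaining cases are obtained by zero-padding. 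In contrast, your proof handles all $N<2K$ in one stroke by comparing the two dimensions $\dim\clV\ge 2K-N$ and $\dim(\clV\cap\clV_0)=K-\rank\big(({\bf\Pi}-\bfI_N)\bfB\big)$, and the only analytic input you need is the easy rank lemma $\rank\big(({\bf\Pi}-\bfI_N)\bfB\big)=\min(N-1,K)$, proved by one explicit assignment plus Lemma~\ref{lem:zeroprob}. This is cleaner and avoids the even/odd split and the recursion through Section~\ref{sec:proof}; what the paper's approach buys is that it reuses infrastructure already in place, and it produces slightly more explicit witnesses $(\bfx,\widehat\bfx)$.

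Two small remarks. First, the theorem as stated only asks for $\bfx\neq\widehat\bfx$, so your main dimension count already finishes the proof; the last paragraph about avoiding $\bfx=0$ is a nice add-on but not needed (and the final sentence there is a bit tangled --- I would drop it). Second, in your assignment for the rank lemma you should say explicitly what you do with the remaining rows $\bfb^{r+2},\dots,\bfb^{N}$ when $r=K<N-1$ (any choice works since you are exhibiting a single nonvanishing $r\times r$ minor), just to make the application of Lemma~\ref{lem:zeroprob} airtight.
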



In analogy to the property of \eqref{eqn:nullspace}, this result can be equivalently stated as a property of random matrices. 
In essence for the random matrix $\bfB$ and for the given choice of parameters, there exists, w.p. 1, a permutation matrix $\Pi$ such that 
\[
\clN([\bfB, \bf \Pi \bfB]) \not \subset \clN([\bfI_K, \bfI_K]).
\]

We prove the converse result for even and odd values of $N<2K$ separately.
We begin with a result for a special case.

\begin{proposition}\label{converse_even_1}
Let $\bfB$ be an $N\times K$ matrix with $N=2K-2$ whose components are drawn i.i.d. at random from a continuous distribution. Then, w.p. $1$, 
there exists a permutation matrix ${\bf \Pi}$ and vectors $\bfx, \widehat{\bfx}$ such that $\bfx \neq \widehat{\bfx}$  and $\bfB \bfx = {\bf \Pi} \bfB \widehat \bfx$. 
\end{proposition}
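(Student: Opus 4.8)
The plan is to fix one judiciously chosen permutation and then show that it does the job with probability $1$. Let ${\bf \Pi}$ be the permutation matrix of a single $N$-cycle acting on the $N=2K-2$ rows of $\bfB$ (e.g.\ the cyclic shift of the rows), so that ${\bf \Pi}\neq\bfI_N$. Form $\bfC=[\bfB,{\bf \Pi}\bfB]$, a $(2K-2)\times 2K$ matrix. Since it has two more columns than rows, $\dim\clN(\bfC)\ge 2K-(2K-2)=2$ for \emph{every} realization of $\bfB$, with no probabilistic qualification. A vector $\bfu=[-\bfx;\widehat{\bfx}]$ lies in $\clN(\bfC)$ exactly when $\bfB\bfx={\bf \Pi}\bfB\widehat{\bfx}$, and it is a ``trivial'' pair precisely when $\bfx=\widehat{\bfx}$, i.e.\ when $\bfu\in\clN([\bfI_K,\bfI_K])$. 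So the proposition reduces to showing that, with probability $1$, $\clN(\bfC)$ is \emph{not} contained in $\clN([\bfI_K,\bfI_K])$.

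I would then control the trivial subspace $T:=\clN(\bfC)\cap\clN([\bfI_K,\bfI_K])$. A vector $[-\bfa;\bfa]$ satisfies $\bfC[-\bfa;\bfa]=({\bf \Pi}-\bfI_N)\bfB\bfa$, so $T=\{[-\bfa;\bfa]:\bfa\in\clN((\bfI_N-{\bf \Pi})\bfB)\}$ and $\dim T=K-\rank\big((\bfI_N-{\bf \Pi})\bfB\big)$. Here is where the choice of ${\bf \Pi}$ matters: because ${\bf \Pi}$ is a single cycle, its fixed space $\clN(\bfI_N-{\bf \Pi})=\mathrm{span}(\onev)$ is only one-dimensional, so for \emph{any} $(2K-2)\times K$ matrix $\bfB_0$ of full column rank one has $\rank\big((\bfI_N-{\bf \Pi})\bfB_0\big)=K-\dim\big(\clR(\bfB_0)\cap\mathrm{span}(\onev)\big)\ge K-1$.

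To finish, I would observe that every $(K-1)\times(K-1)$ minor of $(\bfI_N-{\bf \Pi})\bfB$ is a polynomial in the i.i.d.\ continuous entries of $\bfB$, and that the previous bound shows at least one of these polynomials is not identically zero (it is nonzero at $\bfB_0$); Lemma \ref{lem:zeroprob} then gives that this minor is nonzero with probability $1$, hence $\rank\big((\bfI_N-{\bf \Pi})\bfB\big)\ge K-1$ and $\dim T\le 1$ with probability $1$. On that probability-$1$ event, $\dim T\le 1<2\le\dim\clN(\bfC)$, so $T$ is a proper subspace of $\clN(\bfC)$; choosing any $\bfu=[-\bfx;\widehat{\bfx}]\in\clN(\bfC)\setminus T$ yields $\bfx\neq\widehat{\bfx}$ with $\bfB\bfx={\bf \Pi}\bfB\widehat{\bfx}$, which is exactly the assertion of the proposition.

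The only genuinely delicate point is the choice of ${\bf \Pi}$. An involution such as a single transposition, or a product of two disjoint transpositions, fixes a subspace of $\bR^{2K-2}$ of dimension $N-1$ or $N-2$, which turns out to be just large enough to force $\dim T=\dim\clN(\bfC)$ — so the trivial pairs exhaust the null space and the argument collapses. Using a full $N$-cycle (more generally, any permutation consisting of a single cycle, so that $\bfI_N-{\bf \Pi}$ has at most a one-dimensional kernel) is precisely what keeps $\dim T\le 1$. Once ${\bf \Pi}$ is fixed this way, the rest is a dimension count together with one invocation of Lemma \ref{lem:zeroprob}, and no separate treatment of even versus odd $N$, or of small $K$, is needed.
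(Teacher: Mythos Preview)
Your proof is correct and takes a genuinely different, more elementary route than the paper. Both arguments fix $\bf\Pi$ to be a single $N$-cycle, but from there they diverge. The paper deletes two columns of $\bfG=[\bfB,{\bf\Pi}\bfB]$ to obtain a square $(2K-2)\times(2K-2)$ matrix $\widetilde{\bfG}$, and then appeals to the heavy forward-direction machinery (Proposition~\ref{complete<K}, whose proof occupies several pages of cycle-by-cycle case analysis) to conclude that $\widetilde{\bfG}$ is invertible; once $\widetilde{\bfG}$ is invertible, setting $\bfx_K=0$, $\widehat{\bfx}_K=1$ and solving for the remaining coordinates finishes the job. Your argument is a pure dimension count: $\dim\clN(\bfC)\ge 2$ for free, while the ``trivial'' subspace $T$ is identified with $\clN\big((\bfI_N-{\bf\Pi})\bfB\big)$, whose dimension is at most $1$ because a single cycle makes $\clN(\bfI_N-{\bf\Pi})=\mathrm{span}(\onev)$ one-dimensional; one invocation of Lemma~\ref{lem:zeroprob} on a single $(K-1)\times(K-1)$ minor suffices.

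What your approach buys is self-containment: you never touch Proposition~\ref{complete<K}, so the converse no longer depends on the achievability analysis, and the edge case $K=2$ (where the paper's appeal to ``less than $K-1$ cycles'' is a bit awkward since one cannot have fewer than one cycle) is handled uniformly. What the paper's approach buys is an explicit construction of the offending pair $(\bfx,\widehat{\bfx})$ via a linear solve, whereas you produce it nonconstructively by picking any vector in $\clN(\bfC)\setminus T$. Both are valid; yours is cleaner here. One minor remark: your closing commentary about transpositions forcing $\dim T=\dim\clN(\bfC)$ is heuristic rather than exact (for a single transposition $\rank(\bfI_N-{\bf\Pi})=1$ gives $\dim T\ge K-1$, while $\dim\clN(\bfC)$ can also exceed $2$), but this is motivation only and does not affect the proof.
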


\begin{proof}
For $K=1$, the result is trivial, thus, we assume that $K\geq 2$. Consider the matrix $\bfG=[\bfB, {\bf \Pi} \bfB]$ and choose a permutation ${\bf \Pi}$ such that  $\bfG$ has a cycle representation with less than $K-1$ cycles, e.g., consider the permutation ${\bf \Pi}$ with only one cycle given by ${\bf \Pi}_{i,j}=1$ only for those $(i,j)\in\{(1,2),(2,3), \dots, (N-1,N), (N,1)\}$. Note that since $M=N$, all the cycles will be necessarily complete by our definition. 


Let $\bfg^1$ and $\bfg^2$ denote the $K$-th and $2K$-th columns of $\bfG$.
Clearly $\bfg^2={\bf \Pi} \bfg^1$. 
Let $\widetilde{\bfG}$ be the $(2K-2)\times (2K-2)$ matrix obtained by removing $\bfg^1$ and $\bfg^2$ from $\bfG$. 
It is not difficult to check that matrix $\widetilde{\bfG}$ has still less than $K-1$ cycles as removing these two columns  does not change the number of cycles. 
As all the vectors are randomly sampled from a continuous distribution, $\|\bfg^2\|>0$ w.p. $1$. Moreover, since $\widetilde{\bfG}$ has only complete cycles and their count is less than $K-1$, from Proposition \ref{complete<K}, it must be invertible w.p.  $1$. 

Let $\bfx, \widehat{\bfx} \in \bR^K$. 
We introduce some notation.
Let $\underline{\bfx}, \widehat{\underline{\bfx}}$ denote $(K-1)$-dimensional vectors consisting of the first $K-1$ components of $\bfx$ and $\widehat{\bfx}$, and $\bfx_K$ and $\widehat{\bfx}_K$ denote the last component of $\bfx$ and $\widehat{\bfx}$ respectively, so that we have $\bfx=[\underline{\bfx}\,; \bfx_K]$ and $\widehat{\bfx}=[\widehat{\underline{\bfx}}\,; \widehat{\bfx}_K]$.
Let us define $\bfz=[-\bfx\, ; \widehat{\bfx}]$ and $\widetilde{\bfz}=[-\underline{\bfx}\,; \widehat{\underline{\bfx}}]$. Then we have  
\begin{align}
\bfG \bfz= \widetilde{\bfG} \widetilde{\bfz} + (\widehat{\bfx}_K \bfg^2- \bfx_K \bfg^1).
\end{align}
Clearly, as $\widetilde{\bfG}$ is invertible, and $\|\bfg^2\| > 0$  if we set $\widehat{\bfx}_K=1$ and $\bfx_K=0$, we can always find a nonzero solution for $\widetilde{\bfz}$, such that $\bfG \bfz = 0$.
This choice of $\widetilde{\bfz}$ gives a corresponding choice of values for $\underline{\bfx}$ and $\widehat{\underline{\bfx}}$ such that $\bfG \bfz = 0$.
As a result we can build a signal $\bfx=[\underline{\bfx}\,; 0]$ and an estimate $\widehat{\bfx}=[\widehat{\underline{\bfx}}\,; 1]$ that satisfy $\bfB\bfx={\bf \Pi} \bfB \widehat{\bfx}$. But then $\|\widehat{\bfx} - \bfx\| \geq |\widehat{\bfx}_K-\bfx_K|=1>0$ and thus $\widehat{\bfx} \neq \bfx$. This completes the proof.
\end{proof}

\begin{proposition}\label{prop:converse_even_2}
Let $\bfB$ be an $N\times K$ matrix where $N$ is an even number less than $2K$. Suppose that the components of $\bfB$ are drawn i.i.d. randomly from a continuous distribution. Then, w.p. $1$, the problem does not in general admit a unique solution, i.e., w.p. 1, there exists a permutation matrix ${\bf \Pi}$ and vectors $\bfx, \widehat{\bfx}$ such that $\bfx \neq \widehat{\bfx}$ and $\bfB \bfx = {\bf \Pi} \bfB \widehat{\bfx}$. 
\end{proposition}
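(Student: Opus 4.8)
The plan is to derive this from Proposition~\ref{converse_even_1} (which already settles the case $N = 2K-2$) by a restriction-and-padding argument. Since $N$ is even and $1 \le N < 2K$, write $N = 2L$ with $1 \le L \le K-1$; the key algebraic observation is that then $N = 2(L+1) - 2$. First I would let $\bfB'$ be the $N \times (L+1)$ submatrix of $\bfB$ formed by its first $L+1$ columns; since $L+1 \le K$ this is well defined, and its entries are i.i.d.\ draws from the same continuous distribution as those of $\bfB$. Applying Proposition~\ref{converse_even_1} with $K$ replaced by $L+1$ then gives: w.p.~$1$ there exist a permutation matrix ${\bf \Pi}$ and vectors $\bfx', \widehat{\bfx}' \in \bR^{L+1}$ with $\bfx' \neq \widehat{\bfx}'$ and $\bfB' \bfx' = {\bf \Pi} \bfB' \widehat{\bfx}'$.

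Next I would lift this pair back to $\bR^K$ by zero-padding. Set $\bfx = [\bfx'\,; \zerov]$ and $\widehat{\bfx} = [\widehat{\bfx}'\,; \zerov]$ in $\bR^K$, where $\zerov$ denotes the zero vector of length $K-L-1$ (which is empty when $L = K-1$, in which case $\bfB' = \bfB$ and no padding is needed). Since the last $K-L-1$ columns of $\bfB$ get multiplied by zeros, $\bfB \bfx = \bfB' \bfx'$ and $\bfB \widehat{\bfx} = \bfB' \widehat{\bfx}'$, hence $\bfB \bfx = {\bf \Pi} \bfB \widehat{\bfx}$; and $\bfx \neq \widehat{\bfx}$ because they already disagree in their first $L+1$ coordinates. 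This is exactly the conclusion sought.

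The remaining point is the probability-$1$ bookkeeping: the event produced above depends only on the first $L+1$ columns of $\bfB$, i.e.\ on $\bfB'$, and Proposition~\ref{converse_even_1} guarantees it has probability $1$ under the law of $\bfB'$; since that law is the marginal of the law of $\bfB$ on those columns, the event also has probability $1$ under the law of $\bfB$. I do not expect a genuine obstacle here — the whole argument is a routine reduction; the only things to verify carefully are that the dimension count $N = 2L = 2(L+1) - 2$ matches the hypothesis of Proposition~\ref{converse_even_1} and that $L+1 \le K$, so that the truncated matrix $\bfB'$ actually exists.
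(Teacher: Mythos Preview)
Your proposal is correct and follows essentially the same approach as the paper: both arguments write $N=2K-2r$ (equivalently $N=2L$ with $L=K-r$), restrict $\bfB$ to its first $K-r+1=L+1$ columns, apply Proposition~\ref{converse_even_1} to that submatrix, and then pad the resulting vectors with zeros in the remaining $r-1$ coordinates. Your write-up is in fact slightly more careful than the paper's about the dimension count $N=2(L+1)-2$ and the probability-$1$ bookkeeping.
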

\begin{proof}

For $K=1$ the result is trivial, thus, we assume that $K\geq 2$.
The proof simply follows from the proof of Proposition \ref{converse_even_1}. 
Suppose $N=2K-2r$ for some $1\leq r \leq K-1$. For $r=1$ the result follows from Proposition \ref{converse_even_1}. 
For $r\geq 2$, we can write $\bfB$ as $\bfB = \left[ \mathbf{\widetilde B}^1, \mathbf{\widetilde B}^2 \right]$ where $\mathbf{\widetilde B}^1$ is of dimension $N \times (K-r+1)$.
We know from Proposition \ref{converse_even_1} that, w.p.  $1$, there exists a permutation $\bf{\Pi}$ and distinct $(K-r+1)$-dimensional signals $ \underline{\bfx}$ and $ \underline{\widehat {\bfx}}$ such that $\mathbf{\widetilde B}^1 \underline{\bfx} = {\bf \Pi}\mathbf{\widetilde B}^1 \underline{\widehat {\bfx}} $.
Thus if we choose $\bfx$ such that the first $K-r+1$ entries of $\bfx$ equal $\underline{\bfx} $ and  $\widehat \bfx$ such that the first $K-r+1$ entries of $\widehat \bfx$ equals $\underline{\widehat {\bfx}}$ and set the last $r-1$ entries of both $\bfx$ and  ${\widehat {\bfx}}$ to $0$ it follows that $\bfB \bfx = {\bf \Pi} \bfB \widehat{\bfx}$. 
\end{proof}

Now we extend the converse result to the case of odd number of measurements $N$ where $N$ is less than $2K$. We first prove the following lemma.

\begin{lemma}\label{odd_lemma}
Let $\bfB$ be an $N\times K$ matrix with $N=2K-1$ being an odd number less than $2K$. Let ${\bf \Pi}$ be an $N\times N$ permutation matrix with only one cycle, e.g., ${\bf \Pi}_{i,j}=1$ only for those $(i,j) \in \{(1,2), (2,3), \dots, (N-1,N), (N,1)\}$. Let $\bfG=[\bfB, {\bf \Pi} \bfB]$ and let $\widetilde{\bfG}$ be the $(2K-1)\times (2K-1)$ submatrix of $\bfG$ obtained by dropping the last column of $\bfG$. Then, there is an assignment to the elements of matrix $\bfB$ such that $\det(\widetilde{\bfG})\neq 0$. 
\end{lemma}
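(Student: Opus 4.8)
We have $N = 2K-1$, $\bfB$ is $N \times K$, $\bfPi$ is the single $N$-cycle, $\bfG = [\bfB, \bfPi\bfB]$ is $N \times 2K$, and $\widetilde\bfG$ is $N \times N = (2K-1)\times(2K-1)$, obtained by dropping the last ($2K$-th) column of $\bfG$. So $\widetilde\bfG = [\bfB, \bfPi\bfB^{(-K)}]$ where $\bfB^{(-K)}$ denotes $\bfB$ with its last column removed. We want an assignment of the entries of $\bfB$ making $\det(\widetilde\bfG) \neq 0$; by Lemma~\ref{lem:zeroprob} this then holds w.p.~$1$ for random $\bfB$.

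**Plan.** The natural route is to reduce this odd case to an already-handled even case. With $\bfPi$ the single $N$-cycle, the matrix $\bfG$ is a single complete cycle $\bfv^1, \bfv^2, \dots, \bfv^N, \bfv^1$ of length $N = 2K-1$ (one complete cycle, and $1 < K$ for $K \ge 2$; the case $K=1$ gives $N=1$ and is trivial since then $\widetilde\bfG$ is a $1\times 1$ matrix equal to the single entry of $\bfB$). Dropping the last column of $\bfPi\bfB$ breaks this structure slightly: in the row where $\bfv^k$ appears in $\bfB$ and $\bfv^{k+1}$ in $\bfPi\bfB$, we now only keep the first $K-1$ coordinates of $\bfv^{k+1}$. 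First I would reorder the rows of $\widetilde\bfG$ (which only changes $\det$ by a sign) into the cycle-ordered form of \eqref{cycle_shape}, so that $\widetilde\bfG$ looks like
\begin{align}
\widetilde\bfG = \left[\begin{array}{cc} \bfv^1 & \bfv^2 \\ \bfv^2 & \bfv^3 \\ \vdots & \vdots \\ \bfv^{N-1} & \bfv^N \\ \bfv^N & \bfv^1 \end{array}\right],
\end{align}
where the right-hand blocks now only have $K-1$ columns.

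**Key step: the assignment / column expansion.** The cleanest approach is to pick a distinguished coordinate — say the $K$-th coordinate of the vectors $\bfv^i$ — and zero out all but one of them, then expand. Concretely, set $v^1_K = 1$ and $v^i_K = 0$ for all $i \ge 2$; this is only possible because the $K$-th coordinate of $\bfv^i$ appears in $\widetilde\bfG$ exactly once on the left (in the block $[\bfB]$), never on the right (we dropped that column), so column $K$ of $\widetilde\bfG$ becomes $e_1$ (a single $1$ in the row where $\bfv^1$ sits in $\bfB$). Expanding $\det(\widetilde\bfG)$ along column $K$ then gives $\pm\det(\widetilde\bfH)$, where $\widetilde\bfH$ is $(2K-2)\times(2K-2)$: it is obtained by deleting column $K$ and the one row where $\bfv^1$ occurs in $\bfB$ (i.e. the first row in the display above). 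After this deletion, the remaining rows involve the truncated vectors $\widetilde\bfv^i \in \bR^{K-1}$ (drop the $K$-th coordinate) and form a matrix of the shape $[\,\widetilde\bfB', \bfPi'\widetilde\bfB'^{(-(K-1))}\,]$ for a $(2K-2)\times(K-1)$ matrix $\widetilde\bfB'$ and an appropriate permutation $\bfPi'$ — but the dimensions $2K-2 = 2(K-1)$ and $K-1$ now match the \emph{even} setting: this is precisely a $\bfG$-type matrix with its redundant columns present, i.e. a full $[\,\bfB', \bfPi'\bfB'\,]$-type square matrix for signal dimension $K-1$. Actually one must check carefully whether we have $2(K-1)$ columns or $2(K-1)-1$ — expansion removed one left column and we kept $K-1$ right columns, giving $(K-1) + (K-1) = 2(K-1)$; good, it is square of the right even size. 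I would then invoke Proposition~\ref{converse_even_1} (or the cycle-count analysis behind Proposition~\ref{complete<K}) — since $\widetilde\bfH$ still has fewer than $K-1$ cycles, in fact one complete cycle plus one incomplete cycle created by the truncation — to conclude there is an assignment of $\widetilde\bfv^2, \dots, \widetilde\bfv^N$ making $\det(\widetilde\bfH) \neq 0$. Combined with $v^1_K = 1$, $v^i_K = 0$ ($i \ge 2$), this gives the desired assignment.

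**Main obstacle.** The delicate point is bookkeeping: carefully tracking what the block structure of $\widetilde\bfH$ is after the row/column deletion, verifying its size is exactly $2(K-1)$, confirming the cycle count (it should be strictly less than $K-1$, with one cycle having been "opened" by dropping the coordinate, so that Proposition~\ref{total>=K_complete<K}/\ref{total<K_complete<K} or Proposition~\ref{converse_even_1} applies), and making sure the truncation of one coordinate in an odd-length cycle produces a legitimate incomplete-or-complete cycle structure rather than something degenerate. The base case $K=2$ ($N=3$) I would just check by hand: write out the $3\times 3$ matrix $\widetilde\bfG$ for the $3$-cycle $\bfPi$ and exhibit an explicit assignment (e.g. making it a permutation matrix) with nonzero determinant. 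Everything else is routine once the reduction is set up correctly; no new probabilistic input is needed beyond Lemma~\ref{lem:zeroprob}, which is applied only at the very end (here we only need the existence of one good assignment).
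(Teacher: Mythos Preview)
Your approach is correct in spirit and reaches the goal, but it differs from the paper's proof and contains one small imprecision worth flagging.

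\textbf{Comparison with the paper.} The paper does not use induction or a reduction at all: it simply writes down an explicit assignment---$\bfv^1=\bfd^K$, $\bfv^i=\bfzero$ for even $i$, and $\bfv^i=\bfd^{(i-1)/2}$ for odd $i\ge 3$---and observes that with this choice $\widetilde\bfG$ has exactly one $1$ in every row and column, hence is a permutation matrix with $|\det(\widetilde\bfG)|=1$. This is a one-line verification once the assignment is written; no expansion, no appeal to the earlier propositions, no base case to check. Your route---set $\bfv^1_K=1$, $\bfv^i_K=0$ for $i\ge 2$, expand along column $K$, and reduce to a $(2K-2)\times(2K-2)$ block $\widetilde\bfH$ to which Propositions~\ref{total>=K_complete<K}/\ref{total<K_complete<K} apply---is a perfectly valid alternative and is consistent with the inductive style used elsewhere in Section~\ref{sec:proof}, but it is heavier machinery than the problem requires and forces you to do the cycle bookkeeping you identify as the ``main obstacle.'' The paper's explicit assignment sidesteps that entirely.

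\textbf{One imprecision.} After deleting row~$1$ and column~$K$, the matrix $\widetilde\bfH$ has exactly \emph{one} cycle, and it is incomplete (the chain $\overline\bfv^2\to\overline\bfv^3\to\cdots\to\overline\bfv^N\to\overline\bfv^1$, which does not close since $\overline\bfv^1$ never appears in the left block). Your description ``one complete cycle plus one incomplete cycle created by the truncation'' is not right---the original single complete cycle is simply opened into a single incomplete one. This does not break your argument (zero complete cycles and one total cycle still lands you in Proposition~\ref{total>=K_complete<K} for $K'=K-1=1$ and in Proposition~\ref{total<K_complete<K} for $K'\ge 2$), but the count should be stated correctly. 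Also, Proposition~\ref{converse_even_1} is not the right citation here; the invertibility statement you need is precisely the content of Propositions~\ref{total>=K_complete<K} and~\ref{total<K_complete<K}.
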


\begin{proof}
To simplify the proof, note that we can equivalently represent the submatrix $\widetilde{\bfG}$ as follows
\begin{align}
\widetilde{\bfG}=\left[ \begin{array}{cc} \bfv^1 & \overline{\bfv}^2 \\ \bfv^2 & \overline{\bfv}^3\\ \vdots & \vdots \\ \bfv^{2K-2} & \overline{\bfv}^{2K-1}\\  \bfv^{2K-1} & \overline{\bfv}^1 \end{array} \right ],
\end{align}
where $\bfv^i$, $i \in [2K-1]$, are $K$-dimensional vectors corresponding to the rows of $\bfB$ and where $\overline{\bfv}^i$ is the $(K-1)$-dimensional vector obtained after dropping the last component of $\bfv^i$. We can simply check that the following assignment gives a nonzero determinant for $\widetilde{\bfG}$. We take $\bfv^1=\bfd^K$, and for $i=2,3,\dots, 2K-1$, we set
\begin{align}
\bfv^i=\left \{ \begin{array}{cc} {\bf 0} & \text{$i$ even,}\\ \bfd^{\frac{i-1}{2}} & \text{$i$ odd.}\end{array} \right.
\end{align}
where $\{\bfd^k\}_{k=1}^K$ denotes a standard basis for $\bR^K$ with $\bfd^k \in \bR^K$ having an entry of $1$ as its $k$-th component and $0$ elsewhere. We can check that for this assignment $\widetilde{\bfG}$ has exactly one $1$ in each row and column and all other elements equal to $0$. Thus, it is a permutation matrix and as a result it is invertible.
\end{proof}

\begin{lemma}\label{odd_lemma_2}
Assume that conditions of Lemma \ref{odd_lemma} hold. Suppose that the components of $\bfB$ are sampled i.i.d. from a continuous distribution, then $\det(\widetilde{\bfG})\neq 0$ w.p.  $1$.
\end{lemma}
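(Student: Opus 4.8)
The statement to prove is Lemma \ref{odd_lemma_2}: under the hypotheses of Lemma \ref{odd_lemma} (so $N = 2K-1$, ${\bf \Pi}$ the single $N$-cycle, $\bfG = [\bfB, {\bf \Pi}\bfB]$, and $\widetilde{\bfG}$ the $(2K-1)\times(2K-1)$ submatrix obtained by dropping the last column of $\bfG$), if the entries of $\bfB$ are i.i.d.\ from a continuous distribution then $\det(\widetilde{\bfG}) \neq 0$ w.p.\ $1$.

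The plan is a direct appeal to the polynomial/zero-set machinery already in place. First I would observe that $\widetilde{\bfG}$ is a fixed matrix-valued function of the $N \cdot K = (2K-1)K$ entries of $\bfB$ — indeed its entries are among the entries of $\bfB$ (the left block supplies the $K$-dimensional rows $\bfv^i$, the right block supplies the permuted $(K-1)$-dimensional truncations $\overline{\bfv}^{i+1}$), possibly with zeros in no position, so each entry of $\widetilde{\bfG}$ is either a coordinate variable of $\bfB$ or is structurally $0$. Consequently $\det(\widetilde{\bfG})$, expanded by the Leibniz formula, is a polynomial $p$ in the entries of $\bfB$. Second, Lemma \ref{odd_lemma} exhibits an explicit assignment to the entries of $\bfB$ making $\widetilde{\bfG}$ a permutation matrix, hence $\det(\widetilde{\bfG}) = \pm 1 \neq 0$ at that point; therefore $p$ is not the zero polynomial. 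Third, since the entries of $\bfB$ are independent and drawn from a continuous distribution, Lemma \ref{lem:zeroprob} applies verbatim and gives $p(\bfB) \neq 0$ w.p.\ $1$, i.e.\ $\det(\widetilde{\bfG}) \neq 0$ w.p.\ $1$, which is exactly the claim.

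**Writing it out.**

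\begin{proof}
By construction each entry of $\widetilde{\bfG}$ is either one of the entries of $\bfB$ or the constant $0$; hence, expanding by the Leibniz formula for the determinant, $\det(\widetilde{\bfG})$ is a polynomial, call it $p$, in the $(2K-1)K$ entries of $\bfB$. By Lemma \ref{odd_lemma} there is an assignment to the entries of $\bfB$ for which $\widetilde{\bfG}$ becomes a permutation matrix, so that $\det(\widetilde{\bfG}) = \pm 1 \neq 0$ at that point; therefore $p$ is a nonzero polynomial. Since the entries of $\bfB$ are independent random variables with a continuous distribution, Lemma \ref{lem:zeroprob} yields $p \neq 0$ w.p.\ $1$, i.e., $\det(\widetilde{\bfG}) \neq 0$ w.p.\ $1$.
\end{proof}

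**On the difficulty.**

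There is essentially no obstacle here: this lemma is the routine ``promotion'' of the explicit combinatorial assignment from Lemma \ref{odd_lemma} to an almost-sure statement, exactly paralleling how Proposition \ref{prop:xzero}, Proposition \ref{complete>=K}, and Proposition \ref{complete<K} each conclude by invoking Lemma \ref{lem:zeroprob}. The only point deserving a sentence of care is the first one — verifying that $\det(\widetilde{\bfG})$ really is a polynomial in the entries of $\bfB$ (and not, say, a polynomial in auxiliary variables) — which holds because every entry of $\widetilde{\bfG}$ is literally an entry of $\bfB$ or a structural zero, so no convolution of densities or change of variables is needed, unlike in the proof of Proposition \ref{complete>=K}. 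Everything else is a one-line citation.
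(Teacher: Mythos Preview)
Your proof is correct and is essentially identical to the paper's own argument: observe that $\det(\widetilde{\bfG})$ is a polynomial in the entries of $\bfB$, invoke Lemma~\ref{odd_lemma} to conclude it is a nonzero polynomial, and then apply Lemma~\ref{lem:zeroprob}. (A minor quibble: every entry of $\widetilde{\bfG}$ is in fact an entry of $\bfB$, with no structural zeros, but your phrasing ``or the constant $0$'' is harmless since it only weakens the claim.)
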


\begin{proof}
The proof simply follows from the proof of Lemma \ref{odd_lemma}. Note that $\det(\widetilde{\bfG})$ is a polynomial of the components of $\bfB$. From Lemma \ref{odd_lemma}, there is an assignment to the matrix $\bfB$ with a nonzero $\det(\widetilde{\bfG})$. As the components of $\bfB$ are sampled i.i.d. from a continuous distribution, from Lemma \ref{lem:zeroprob}, it results that $\det(\widetilde{\bfG})\neq 0$ w.p. $1$.
\end{proof}

Now we prove the converse result for odd number of measurements $N<2K$. 

\begin{proposition}\label{prop:converse_odd}
Let $\bfB$ be an $N\times K$ matrix with $K>1$ and $N<2K$ being an odd number whose components are drawn i.i.d. at random from a continuous distribution. Then, w.p. $1$, 
there exists a permutation matrix ${\bf \Pi}$ and vectors $\bfx, \widehat{\bfx}$ such that $\bfx \neq \widehat{\bfx}$ and $\bfB\, \bfx = {\bf \Pi} \bfB \widehat \bfx$. 
\end{proposition}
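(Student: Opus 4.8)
The plan is to reduce the general case of odd $N<2K$ to the single base case $N=2K-1$, and then settle that base case by combining Lemma~\ref{odd_lemma_2} with the argument already used in the proof of Proposition~\ref{converse_even_1}. The only probabilistic ingredient throughout is Lemma~\ref{lem:zeroprob}.

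For the reduction, write an odd $N<2K$ as $N=2K'-1$ with $K'=(N+1)/2$, so $2\le K'\le K$. Partition $\bfB=[\widetilde{\bfB}^1,\widetilde{\bfB}^2]$ where $\widetilde{\bfB}^1$ collects the first $K'$ columns of $\bfB$; then $\widetilde{\bfB}^1$ is an $N\times K'$ matrix with i.i.d.\ continuous entries, i.e.\ exactly the base case in dimension $K'$. That base case produces, w.p.\ $1$, a permutation ${\bf \Pi}$ and distinct $\underline{\bfx},\underline{\widehat{\bfx}}\in\bR^{K'}$ with $\widetilde{\bfB}^1\underline{\bfx}={\bf \Pi}\widetilde{\bfB}^1\underline{\widehat{\bfx}}$. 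Padding $\underline{\bfx}$ and $\underline{\widehat{\bfx}}$ with $K-K'$ trailing zeros yields distinct $\bfx,\widehat{\bfx}\in\bR^K$ with $\bfB\bfx={\bf \Pi}\bfB\widehat{\bfx}$, precisely as in the proof of Proposition~\ref{prop:converse_even_2}.

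For the base case $N=2K-1$, take ${\bf \Pi}$ to be the single-cycle permutation of Lemma~\ref{odd_lemma} and set $\bfG=[\bfB,{\bf \Pi}\bfB]$, an $(2K-1)\times 2K$ matrix. By Lemma~\ref{odd_lemma_2}, the submatrix $\widetilde{\bfG}$ consisting of the first $2K-1$ columns of $\bfG$ is invertible w.p.\ $1$. On that event $\clN(\bfG)$ is one-dimensional: writing $\bfz=[\widetilde{\bfz};z_{2K}]$ and letting $\bfg$ be the last (dropped) column of $\bfG$, the equation $\bfG\bfz=0$ becomes $\widetilde{\bfG}\widetilde{\bfz}=-z_{2K}\bfg$, so $\bfz^\star:=[-\widetilde{\bfG}^{-1}\bfg;1]$ spans $\clN(\bfG)$ and has nonzero last coordinate. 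Splitting $\bfz^\star=[-\bfx^\star;\widehat{\bfx}^\star]$ with $\bfx^\star,\widehat{\bfx}^\star\in\bR^K$ gives $\widehat{\bfx}^\star_K=1$ and $\bfB\bfx^\star={\bf \Pi}\bfB\widehat{\bfx}^\star$; it remains only to certify that $\bfx^\star\neq\widehat{\bfx}^\star$, i.e.\ that $\bfz^\star\notin\clN([\bfI_K,\bfI_K])$.

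This last point is the real obstacle, and I would handle it with a rank argument. If $\bfx^\star=\widehat{\bfx}^\star$ then $(\bfI_N-{\bf \Pi})\bfB\bfx^\star=\bfB\bfx^\star-{\bf \Pi}\bfB\widehat{\bfx}^\star=0$; since for a single $N$-cycle one has $\clN(\bfI_N-{\bf \Pi})=\mathrm{span}(\onev)$, injectivity of $(\bfI_N-{\bf \Pi})\bfB$ is equivalent to the $(2K-1)\times(K+1)$ matrix $[\onev,\bfB]$ having full column rank $K+1$ (using that $\bfB$ itself has full column rank w.p.\ $1$). Because $K\ge 2$ forces $2K-1\ge K+1$, an explicit assignment of the entries of $\bfB$ (choosing its columns so as to complete $\onev$ to a triangular block) makes a $(K+1)\times(K+1)$ minor of $[\onev,\bfB]$ equal to $1$, hence that minor is a nonzero polynomial in the entries of $\bfB$, and Lemma~\ref{lem:zeroprob} gives rank $K+1$ w.p.\ $1$. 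Therefore $(\bfI_N-{\bf \Pi})\bfB$ is injective w.p.\ $1$, which would force $\bfx^\star=0$ — impossible since $\bfx^\star_K=\widehat{\bfx}^\star_K=1$. Hence $\bfx^\star\neq\widehat{\bfx}^\star$ w.p.\ $1$, and $({\bf \Pi},\bfx^\star,\widehat{\bfx}^\star)$ is the desired triple. I expect the bookkeeping in this final rank argument (ruling out $\bfx^\star=\widehat{\bfx}^\star$) to be the only delicate step; the rest is a direct assembly of Lemmas~\ref{odd_lemma}, \ref{odd_lemma_2} and \ref{lem:zeroprob}.
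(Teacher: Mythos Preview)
Your proof is correct, and it follows the paper closely up to the point where one must rule out $\bfx^\star=\widehat{\bfx}^\star$; there the two arguments diverge. The paper proves directly that $\bfx^\star_K-\widehat{\bfx}^\star_K=\bft_K-1\neq 0$ w.p.\ $1$ by treating $\Delta=\bft_K-1$ as a rational function of the entries of $\bfB$ and exhibiting the explicit standard-basis assignment from Lemma~\ref{odd_lemma} under which $\Delta=-1$. Your route is more structural: you observe that $\bfx^\star=\widehat{\bfx}^\star$ would force $\bfB\bfx^\star\in\clN(\bfI_N-{\bf \Pi})=\mathrm{span}(\onev)$, reduce this to $[\onev,\bfB]$ having full column rank $K+1$ (which is where $K\ge 2$, i.e.\ $2K-1\ge K+1$, is used), and invoke Lemma~\ref{lem:zeroprob} on a single $(K+1)\times(K+1)$ minor. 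This is arguably cleaner than the paper's calculation, since the needed minor assignment is trivial and you avoid manipulating the rational expression $(\bfu^K)^\transp\widetilde{\bfG}^{-1}\bfg-1$. The paper's approach, on the other hand, pins down the specific coordinate on which $\bfx^\star$ and $\widehat{\bfx}^\star$ differ. The reduction step (padding with zeros to pass from odd $N<2K-1$ to $N=2K'-1$) is identical in both proofs.
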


\begin{proof}

Similar to the proof of even number of measurements as in Proposition \ref{prop:converse_even_2}, we will show that we can always find a permutation ${\bf \Pi}$, and two vectors $\bfx$ and $\widehat{\bfx}$ for which $\bfB\bfx={\bf \Pi} \bfB \widehat{\bfx}$ but $\bfx \neq \widehat \bfx$. 
We first consider the simple case where $N=2K-1$. 
We fix the permutation matrix ${\bf \Pi}$ as the one described in the statement of Lemma \ref{odd_lemma}, which we recall has only one cycle. 
Let $\bfG=[\bfB, {\bf \Pi} \bfB]$. 
Let $\bfg$ be the last column of $\bfG$ and let $\widetilde{\bfG}$ be the $(2K-1)\times (2K-1)$ matrix obtained from $\bfG$ after removing $\bfg$, thus, $\bfG=[\widetilde{\bfG}, \bfg]$. 

Let $\bft = \widetilde{\bfG}^{-1} \bfg$.
We will first show that the $K$-th component of $\bft$ satisfies $\bft_K \neq 1$ w.p.  $1$.
Let $\Delta := \bft_K-1$. 
We will first verify the following claim.

\medskip

\noindent \emph{Claim:} The vector $\bft$ is well-defined and $\Delta \neq 0$ w.p. 1.

\noindent We prove the claim as follows.
Clearly, from the definition of $\bft$ we have $\Delta= (\bfu^K) ^\transp \widetilde{\bfG}^{-1} \bfg - 1$, where $\{\bfu^\ell\}_{\ell=1}^{2K-1}$ denotes the standard basis for $\bR^{2K-1}$ with $\bfu^\ell$ being a unit vector with only one $1$ as its $\ell$-th component and zero elsewhere. 
It is not difficult to see that $\Delta$ is a rational function of the components of the matrix $\bfB$, i.e., $\Delta=\frac{p(\bfB)}{q(\bfB)}$ where $p$ and $q$ are polynomials of the components of $\bfB$. 
Furthermore, without loss of generality, $p$ and $q$ are relatively prime. 
Moreover, as the inverse of the matrix is $\widetilde{\bfG}^{-1} = \frac{\mbox{Adjoint of }\widetilde{\bfG}}{\det(\widetilde{\bfG})}$, it follows from the definition of $\Delta$ that $\det(\widetilde{\bfG})$ expressed as a polynomial of $\bfB$ must be divisible by $q(\bfB)$. 
As the components of $\bfB$ are sampled i.i.d. from a continuous distribution, from Lemma \ref{odd_lemma_2}, it results that $\det(\widetilde{\bfG})\neq 0$ w.p. $1$, which implies that $q(\bfB)\neq 0$ w.p. $1$. 
Hence, $\bft$ and $\Delta$ are almost surely well-defined. 
Now it remains to prove that $\Delta \neq 0$ w.p. $1$. To prove this, from Lemma \ref{lem:zeroprob}, we simply need to find an assignment to $\bfB$ such that {$p(\bfB)\neq 0$ which implies that }$\Delta \neq 0$. 
Note that for our choice of $\bf \Pi$ the matrix $\widetilde{\bfG}$ is given by
\begin{align}
\widetilde{\bfG}=\left[ \begin{array}{cc} \bfv^1 & \overline{\bfv}^2 \\ \bfv^2 & \overline{\bfv}^3\\ \vdots & \vdots \\ \bfv^{2K-2} & \overline{\bfv}^{2K-1}\\  \bfv^{2K-1} & \overline{\bfv}^1 \end{array} \right ],
\end{align}
where $\bfv^i$, {$i\in[2K-1]$}, denote the rows of the matrix $\bfB$, and where $\overline{\bfv}^i$ is a $(K-1)$-dimensional vector obtained after dropping the last component of $\bfv^i$. 
Similar to the proof of Lemma \ref{odd_lemma}, we consider the following assignment. We take $\bfv^1=\bfd^K$, and for $i=2,3,\dots, 2K-1$, we set
\begin{align}
\bfv^i=\left \{ \begin{array}{cc} {\bf 0} & \text{$i$ even,}\\ \bfd^{\frac{i-1}{2}} & \text{$i$ odd.}\end{array} \right.
\end{align}
where $\{\bfd^k\}_{k=1}^K$ is the standard basis for $\bR^K$. From the proof of Lemma \ref{odd_lemma}, it immediately results that for this assignment $\widetilde{\bfG}$ is a permutation matrix. Consequently, we have $|\det(\widetilde{\bfG})|=1\neq 0$ and $\widetilde{\bfG}^{-1}=\widetilde{\bfG}^\transp$. Moreover, for the same assignment, we have $\bfg=[{\bf 0}_{2K-2}\, ; 1]=\bfu^{2K-1}$, and it is not  difficult to also check that the row $(K-1)$ in $\widetilde{\bfG}^\transp$ is equal to $(\bfu^{2K-1})^\transp$. As the rows $\widetilde{\bfG}^\transp$ are orthonormal, it results that $\widetilde{\bfG}^{-1} \bfg= \widetilde{\bfG}^\transp \bfu^{2K-1}= \bfu^{K-1}$. Hence, we obtain $(\bfu^K) ^\transp \widetilde{\bfG}^{-1} \bfg= (\bfu^K)^\transp \bfu^{K-1}=0$. This implies that $\Delta= (\bfu^K) ^\transp \widetilde{\bfG}^{-1} \bfg - 1 = -1$.
Thus there is an assignment to the entries of $\bfB$ such that $\Delta \neq 0$ and hence $p(\bfB) \neq0$.
By Lemma \ref{lem:zeroprob} this implies that $p(\bfB) \neq0$ w.p. $1$. 
Since in addition, we have already established that $q(\bfB) \neq0$ w.p. $1$ it follows that $\Delta \neq0$ w.p. $1$. 
This completes the proof of the claim.

Let us define vectors $\bfx$ and $\widehat \bfx$ as follows.
Let 
\begin{eqnarray}
\bfx_i  = \bft_i, \mbox{ for } { i \in [K]} \qquad \mbox{ and }\qquad  \widehat \bfx_i= \left \{ 
\begin{array}{cc}
-\bft_{K+i}, & \mbox{for } {i \in [K-1]}\\
1, & \mbox{for }i = K
\end{array}
\right.
\end{eqnarray}
With this assignment, it is clear that, w.p. 1,
\[
\bfB \bfx - \bf \Pi \bfB \widehat \bfx = \widetilde{\bfG} \bft - g = 0.
\]
Furthermore, by the result of the claim $\bfx_K - \widehat \bfx_K=\bft_K-1 =\Delta \neq 0$, and thus $\bfx \neq \widehat \bfx$  w.p. 1. This completes the proof of the proposition for $N = 2K-1$.

If $N=2K-2r-1$ is an odd number less than $2K-1$, using a similar idea as in Proposition \ref{prop:converse_even_2}, we can build a signal $\bfx$ by setting the last $r$ components of $\bfx$ equal to zero and applying our proof for $N=2K-2r-1$ and $K'=K-r$ to show that the recovery algorithm fails to find the correct solution w.p.  $1$.
This completes the proof.

\end{proof}


Combining the results of Propositions \ref{prop:converse_even_2} and \ref{prop:converse_odd} we complete the proof of the converse result {given in} Theorem \ref{thm:converse}. 

\section{Local stability under additive noise}\label{sec:stability}
In practice, the observation $\bfy$ is typically corrupted by noise.
Consider a noisy version of the linear system of \eqref{eqn:permobs2}.
Let
\[
\bfy=\bfB \bfx^0 + \bfw
\]
be a noisy measurement of the signal $\bfx^0$ with an additive noise $\bfw$. We assume that $\bfB=\bfS^0 \bfA$ for some selection matrix $\bfS^0 \in \clS$ where $\clS$ 
denotes the set of all selection matrices, i.e., the set of all matrices comprising $N$ distinct rows of the $M \times M$ identity matrix arranged in any arbitrary order.
We define the \textit{signal-to-noise ratio} (SNR) for the given measurement $\bfy$ by $\snr=\frac{\|\bfB \bfx^0\|^2}{\|\bfw\|^2}$. 
In the noisy case, a natural reconstruction algorithm is the following robust version of the original algorithm 
\begin{align}\label{rob_rec_alg}
\widehat{\bfx}^0=\argmin_{\bfx \in \bR^K} \min_{\bfS \in \clS} \|\bfy - \bfS \bfA \bfx\|.
\end{align}
The reconstruction error is given by $\|\bfx^0 - \widehat{\bfx}^0\|$. 
We call the recovery algorithm \eqref{rob_rec_alg} locally stable if for an arbitrary signal $\bfx^0 \in \bR^K$ and measurement noise $\bfw$ with an $\snr=\frac{\|\bfB \bfx^0\|^2}{\|\bfw\|^2}$, we have $\lim_{\snr \to \infty} \|\bfx^0 -\widehat{\bfx}^0\|=0$.
We will now argue that for the random design of $M\times K$ matrix $\bfA$ introduced earlier, the recovery algorithm \eqref{rob_rec_alg} is locally stable.

From Theorem~\ref{thm:main_theorem} we know that w.p.  $1$ any choice of $\bfA$ satisfies the two properties below. \

\begin{assumption}
\item \label{ass:unique}If the relation $\bfS' \bfA \bfx'=\bfS'' \bfA \bfx''$ is satisfied for some $\bfS',\bfS'' \in \clS$, then $\bfx'=\bfx''$. 

\item \label{ass:rank}Columns of $\bfA$ are linearly independent.
\end{assumption}
Suppose that $\bfA$ is such a matrix. We define a distance over $\clS$ as follows. For a given $\bfS', \bfS'' \in \clS$, 
We define the minimum principal angle between the subspaces $\range(\bfS'\bfA)$ and $\range(\bfS''\bfA)$ as follows 
\begin{align}
\theta^*({\bfS' \bfA,\bfS''\bfA})=\cos^{-1}\left(\max \{\langle \bfu', \bfu'' \rangle: \bfu' \in \range(\bfS'\bfA),  \bfu'' \in \range(\bfS''\bfA), \|\bfu'\| = \|\bfu''\| = 1\} \right).
\end{align}
Using $\theta^*$, we define the distance between $\range(\bfS' \bfA)$ and $\range(\bfS'' \bfA)$ as $d(\bfS' \bfA,\bfS''\bfA)=\sin(\theta^*({\bfS' \bfA,\bfS''\bfA}))$. 
By the symmetry of the definition, $d$ is also symmetric, i.e., $d(\bfS'\bfA,\bfS''\bfA)=d(\bfS''\bfA,\bfS'\bfA)$, and $d(\bfS'\bfA,\bfS''\bfA)=0$ if and only if the subspaces $\range(\bfS'\bfA)$ and $\range(\bfS''\bfA)$ have a nontrivial intersection, i.e., if their intersection is a subspace of dimension $1$ or larger. 
However $d$ is not strictly a distance measure as it does not satisfy the triangle inequality in general.

\begin{proposition}
Let $\bfx^0 \in \bR^K$ and $\bfB = \bfS \bfA$ where $\bfS$ is a selection matrix as before and $\bfA$ satisfies assumptions \Ass{unique} and \Ass{rank}.
Let $\widehat{\bfx}^0$ be the output of the algorithm \eqref{rob_rec_alg} for the measurement $\bfy=\bfB \bfx^0 +\bfw$, where $\bfw$ is an arbitrary noise vector with $\|\bfw\| = \|\bfB \bfx^0\| \snr^{-\half}$, {where $\snr$ is the \textit{signal-to-noise ratio}}. Then, $\lim_{\snr \to \infty} \|\bfx^0 -\widehat{\bfx}^0\|=0$.

\end{proposition}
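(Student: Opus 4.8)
\medskip
\noindent\textbf{Proof plan.}
The plan is to reduce the statement to a short compactness argument exploiting the finiteness of $\clS$ together with Assumptions \Ass{unique} and \Ass{rank}. First I would dispose of the degenerate case $\bfx^0=0$: the hypothesis $\|\bfw\|=\|\bfB\bfx^0\|\,\snr^{-\half}$ then forces $\bfw=0$, hence $\bfy=0$, and since every matrix $\bfS\bfA$ with $\bfS\in\clS$ has full column rank --- a consequence of \Ass{unique} applied with $\bfS'=\bfS''$, since $\bfS\bfA\bfx=0=\bfS\bfA\cdot 0$ would give $\bfx=0$ --- the least-squares problem $\min_{\bfx}\|\bfy-\bfS\bfA\bfx\|$ has the unique minimizer $\bfx=0$ for every $\bfS$; thus $\widehat{\bfx}^0=0=\bfx^0$ and there is nothing to prove. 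From now on I assume $\bfx^0\neq 0$, so that $\|\bfB\bfx^0\|>0$ and therefore $\|\bfw\|\to 0$ as $\snr\to\infty$. I would also note that the outer minimum in \eqref{rob_rec_alg} is attained: the map $\bfx\mapsto\min_{\bfS\in\clS}\|\bfy-\bfS\bfA\bfx\|$ is continuous and coercive, being a minimum of finitely many coercive functions. Let $(\widehat{\bfx}^0,\widehat{\bfS})$ be a minimizing pair.

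The key elementary estimate is a suboptimality bound. Since $(\bfx^0,\bfS^0)$ is feasible in \eqref{rob_rec_alg}, the output obeys $\|\bfy-\widehat{\bfS}\bfA\widehat{\bfx}^0\|\le\|\bfy-\bfS^0\bfA\bfx^0\|=\|\bfw\|$, and combining this with $\bfy=\bfB\bfx^0+\bfw$ gives
\[
\|\bfB\bfx^0-\widehat{\bfS}\bfA\widehat{\bfx}^0\|\;\le\;\|\bfy-\widehat{\bfS}\bfA\widehat{\bfx}^0\|+\|\bfw\|\;\le\;2\|\bfw\|.
\]
Thus the decoded point $\widehat{\bfS}\bfA\widehat{\bfx}^0$ tracks the true noiseless measurement $\bfB\bfx^0$ to within $2\|\bfw\|$, regardless of which selection matrix $\widehat{\bfS}$ the algorithm settles on.

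Next I would argue by contradiction. If the conclusion failed, there would exist $\varepsilon>0$ and noise levels $\snr_n\to\infty$ whose corresponding outputs satisfy $\|\bfx^0-\widehat{\bfx}^0_n\|\ge\varepsilon$, with associated minimizing selection matrices $\widehat{\bfS}_n$. Since $\clS$ is finite, after passing to a subsequence I may assume $\widehat{\bfS}_n=\bfS^\star$ for a fixed $\bfS^\star\in\clS$. The suboptimality bound then gives $\bfS^\star\bfA\widehat{\bfx}^0_n\to\bfB\bfx^0$ as $n\to\infty$. Because $\bfS^\star\bfA$ has full column rank it admits a bounded left inverse, e.g. the pseudoinverse $(\bfS^\star\bfA)^{\dagger}$ with $(\bfS^\star\bfA)^{\dagger}\bfS^\star\bfA=\bfI_K$; applying it gives $\widehat{\bfx}^0_n=(\bfS^\star\bfA)^{\dagger}\big(\bfS^\star\bfA\widehat{\bfx}^0_n\big)\to(\bfS^\star\bfA)^{\dagger}\bfB\bfx^0=:\bfx^\star$. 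Passing to the limit in $\bfS^\star\bfA\widehat{\bfx}^0_n\to\bfB\bfx^0$ yields $\bfS^\star\bfA\bfx^\star=\bfB\bfx^0=\bfS^0\bfA\bfx^0$, whence \Ass{unique} forces $\bfx^\star=\bfx^0$. Therefore $\widehat{\bfx}^0_n\to\bfx^0$, contradicting $\|\bfx^0-\widehat{\bfx}^0_n\|\ge\varepsilon$, and the proposition follows.

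The one delicate point --- and the reason I would prefer this compactness route over a hands-on quantitative estimate --- is the case $\range(\widehat{\bfS}\bfA)\neq\range(\bfS^0\bfA)$ in which these two $K$-dimensional subspaces nonetheless intersect nontrivially, a situation that genuinely occurs (cf.\ the discussion around \eqref{eqn:nullspace}). There $\bfB\bfx^0$ may lie arbitrarily close to, or even inside, $\range(\widehat{\bfS}\bfA)$ without $\widehat{\bfS}=\bfS^0$, so one cannot separate the ``wrong'' subspaces from $\bfB\bfx^0$ by a fixed gap. A direct argument would have to combine a uniform lower bound on the smallest singular value of $\bfS\bfA$ over the finitely many $\bfS\in\clS$ with the principal-angle metric $d(\cdot,\cdot)$ defined above --- bounding $\|\widehat{\bfx}^0-\bfx^0\|$ by $O(\|\bfw\|)$ when the relevant ranges meet only at the origin, and invoking \Ass{unique} (which makes the two decoding maps coincide on the intersection) when they do not. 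The contradiction argument sidesteps this bookkeeping entirely: freezing $\widehat{\bfS}$ along a subsequence and then applying \Ass{unique} pins down the limiting decoded vector exactly, so no explicit rate is needed. Should a convergence rate be desired, carrying out that principal-angle estimate would be the only technically involved step.
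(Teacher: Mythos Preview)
Your proof is correct and takes a genuinely different route from the paper's. The paper argues geometrically: it introduces the minimum principal angle $\theta^*(\,\cdot\,,\,\cdot\,)$ and the associated ``distance'' $d$, partitions $\clS$ into $\clS_1=\{\bfS:\bfy^0\in\range(\bfS\bfA)\}$ and $\clS_2=\clS\setminus\clS_1$, shows that the optimizer $\widehat{\bfS}$ of \eqref{rob_rec_alg} is precisely the minimizer of $d(\bfy,\bfS\bfA)$, and then uses a bisector-type argument (once $\snr$ is large enough that $\|\bfw\|<\|\bfy^0\|\sin(\theta_{\min}/2)$) to force $\widehat{\bfS}\in\clS_1$; from there \Ass{unique} gives $\bfy^0=\widehat{\bfS}\bfA\bfx^0$ and the explicit formula $\widehat{\bfx}^0=\bfx^0+(\widehat{\bfS}\bfA)^\dagger\bfw$. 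You instead bypass the principal-angle machinery entirely with the two-line suboptimality bound $\|\bfB\bfx^0-\widehat{\bfS}\bfA\widehat{\bfx}^0\|\le 2\|\bfw\|$, then run a compactness/subsequence argument that freezes $\widehat{\bfS}$ and invokes \Ass{unique} in the limit. Your route is shorter and more elementary, and it handles the delicate case of nontrivially intersecting ranges without any explicit bookkeeping (as you note). The paper's route, on the other hand, is closer to yielding a quantitative threshold on $\snr$ and an $O(\|\bfw\|)$ error bound, since it pins down exactly when the ``wrong'' subspaces in $\clS_2$ are excluded; your contradiction argument gives convergence but no rate, which you also correctly flag.
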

\begin{proof}
Let $\bfy^0=\bfB \bfx^0$. For simplicity, for an arbitrary $\bfS$, we use slightly abusive notations of $\theta^*(\bfy^0, \bfS \bfA)$ and $d(\bfy^0, \bfS \bfA)$ to denote the angle and distance between the subspaces $\range(\bfy^0)$ and $\range(\bfS \bfA)$ respectively. Consider the partition $\clS=\clS_1\cup\clS_2$, where $\clS_1=\{\bfS \in \clS: d(\bfy^0, \bfS \bfA)=0\}$ and $\clS_2=\{\bfS \in \clS: d(\bfy^0, \bfS \bfA)>0\}$, and let $d_{\min}(\clS_2)=\min_{\bfS \in \clS_2} d(\bfy^0, \bfS \bfA)$. Note that $d_{\min}(\clS_2)=\sin(\theta_{\min})$, where $\theta_{\min}$ is the minimum angle between the subspace $\range(\bfy^0)$ and any subspace $\range(\bfS\bfA)$ such that $\bfS$ is in $\clS_2$. 

%

Let $\widehat{\bfS}$ denote the optimal choice of the selection matrix in the optimization of \eqref{rob_rec_alg} under the optimal choice of $\bfx =\widehat{\bfx}^0$.
Then we have
\begin{align*}
\widehat{\bfS}&=\argmin_{\bfS \in \clS} \min_{\bfx \in \bR^K} \|\bfy - \bfS \bfA \bfx\|\\
&=\argmin_{\bfS \in \clS} \min_{\bfx \in \bR^K} \|\bfy\|^2 - 2\inp{\bfy}{\bfS\bfA \bfx} + \|\bfS \bfA \bfx\|^2\\
&=\argmin_{\bfS \in \clS} \min_{\bfu \in \range(\bfS\bfA)} \{ - 2\inp{\bfy}{\bfu} + \|\bfu\|^2\}\\
&=\argmin_{\bfS \in \clS} \min _{\lambda \in \bR_+} \Big \{ \min_{\bfu \in \range(\bfS\bfA): \|\bfu\|=\lambda}\big \{- 2\inp{\bfy}{\bfu} + \|\bfu\|^2\big \}\Big \}\\
&=\argmin_{\bfS \in \clS}  \min _{\lambda \in \bR_+} \lambda^2 - 2\lambda \|\bfy\| \cos(\theta^*(\bfy, \bfS \bfA))\\
&=\argmin_{\bfS \in \clS} - \|\bfy\|^2 \cos^2(\theta^*(\bfy,\bfS \bfA))\\
&=\argmin_{\bfS \in \clS} d(\bfy,\bfS \bfA).
\end{align*}
Hence, it results that the optimal selection matrix $\widehat{\bfS}$ minimizes $d(\bfy, \bfS \bfA)$. 
With a geometric argument, it is not difficult to show that as long as $\frac{\|\bfy^0\|}{\|\bfw\|}> \frac{1}{\sin(\theta_{\min}/2)}$, the selection matrix $\widehat{\bfS}$ can not belong to $\clS_2$, since at least for the selection matrix $\bfS^0 \in \clS_1$, we have $d(\bfy, \bfS^0 \bfA)<\sin(\theta_{\min}/2) <d(\bfy, \bfS \bfA)$ for every $\bfS \in \clS_2$. This has been illustrated in Fig.~\ref{sub_dist} for the simple case of $1$-dimensional subpaces.

\begin{figure}[h]
\centering
\begin{tikzpicture}[scale=0.8]
\draw[->] (-4,0) -> (4,0);
\draw[->] (0,-4) -> (0,4);
\draw[thick, red] (-2,-4) -- (2,4) node[above, red] {$\range(\bfy_0)$};
\draw[thick, blue] (-4,-2) -- (4,2) node[right, blue]{$\range(\bfS \bfA)$};
\draw[thick, ->] (1,0.5) arc (30:65:1.2);
\node at (1.7,0.4) {$\theta_{\min}$};
\draw[dashed] (0,0) -- (4,4);

\draw[fill=black] (1.5,3) circle (.1) node[above left] {$\bfy^0$};
\draw[black] (1.5,3) circle (0.8);
\draw[thick, ->] (1.5,3) -- (2.3,3);
\node at (2.5, 3.5) {$\bfw$};
\end{tikzpicture}
\caption{A geometric view of distance from the subspace $\range(\bfy_0)$ to another subspace $\range(\bfS \bfA)$ for some $\bfS \in \clS_2$.}
\label{sub_dist}
\end{figure}


This implies that for sufficiently large $\snr$, we need to consider only those $\bfS \in \clS_1$. 
For sufficiently high $\snr$, let $\widehat{\bfS} \in \clS_1$ be the optimal selection matrix obtained in the optimization problem in \eqref{rob_rec_alg} under the optimal choice of $\bfx =\widehat{\bfx}^0$. Then it results that 
\begin{align}
\widehat{\bfx}^0=\argmin_{\bfx \in \bR^K} \| \bfy^0 + \bfw - \widehat{\bfS} \bfA \bfx\|. 
\end{align}
Note that since $\widehat{\bfS} \in \clS_1$, we have $d(\bfy^0, \widehat{\bfS} \bfA)=0$. This implies that $\bfy^0$ is in the column span of $\widehat{\bfS} \bfA$. In other words, there is some $\bfx'\in \bR^K$ such that 
$\bfy^0=\bfB \bfx^0=\bfS \bfA \bfx^0=\widehat{\bfS} \bfA \bfx'$. From assumption \Ass{unique}, it immediately follows that $\bfx'=\bfx^0$ and thus, $\bfy^0=\widehat{\bfS}\bfA \bfx^0$. Hence we have 
\begin{align}
\widehat{\bfx}^0&=\argmin_{\bfx \in \bR^K} \| \bfy^0 + \bfw - \widehat{\bfS} \bfA \bfx\|=\argmin_{\bfx \in \bR^K} \| \widehat{\bfS} \bfA \bfx^0 + \bfw - \widehat{\bfS} \bfA \bfx\|=(\widehat{\bfS} \bfA)^\dagger(\bfy^0 + \bfw)
\end{align}
where we use the fact that by \Ass{rank} the matrix $\bfA$ and thus $\widehat{\bfS} \bfA$ has full column rank.
Thus
\begin{align}
\widehat{\bfx}^0= (\widehat{\bfS} \bfA)^\dagger  (\widehat{\bfS} \bfA \bfx^0 + \bfw)
= \bfx^0 + (\widehat{\bfS} \bfA)^\dagger \bfw,
\end{align}
which implies that 
$
\|\bfx^0-\widehat{\bfx}^0\| = \|(\widehat{\bfS} \bfA)^\dagger \bfw\|.
$
Since $\widehat{\bfS} \in \clS_1$ and there are only finitely many possible subset selections in $\clS_1$, this implies that as $\snr$ tends to infinity, $\|\bfx^0-\widehat{\bfx}^0\|$ tends to zero, which confirms the local stability of the recovery algorithm.
\end{proof}
Thus we have proved the local stability of the recovery algorithm to noise.
In particular, for any $\bfx$, the estimate $\widehat \bfx$ converges to the true $\bfx$ as SNR tends to infinity. 
It remains to be seen whether global stability is satisfied.
It might be possible to show, for instance, that as SNR tends to infinity, the worst case error among all possible $\bfx$ tends to zero.
However, we leave this issue for further investigations.

\section{Conclusions}\label{sec:conc}
We studied the problem of unlabeled sensing defined as the problem of inverting a linear system with unlabeled observations.
We showed that an oversampled linear system with random coefficients can be inverted from unlabeled measurements provided the oversampling ratio is $2$ or higher.
Moreover, for oversampling ratios greater than $2$ any $2K$ measurements are sufficient to recover an unknown vector of $K$ elements.
We also obtained the converse result that $2K$ is the minimum number of measurements needed for the result to hold and demonstrated local stability of the recovery algorithm to noise.
In essence, the main result presented here is a geometric property of random matrices. 
It is of interest to see whether the geometry of the result can be further understood.
These results also raise a number of interesting follow-up questions, including, whether a faster algorithm exists for recovering $\bfx$ and whether it is possible to learn $\bfA$ using a training phase with unlabeled observations from known $\bfx$'s.
It is also of interest to identify examples of deterministic designs of $\bfA$ that admit recovery from unlabeled measurements.
Another interesting question is whether the recovery algorithm satisfies global stability in the presence of noise, as discussed in the concluding parts of Section \ref{sec:stability}.
These are topics of current research.


\bibliographystyle{IEEEbib}
\bibliography{unlabeledrefs}

\appendices
\section{Proof of Lemma \ref{lem:zeroprob}}\label{app:lem:zeroprob}
In this section, we prove Lemma \ref{lem:zeroprob}. First we need to define some notation. We denote by $\clV_n=\{x_1,x_2, \dots, x_n\}$ the set of $n$ variables. A monomial in $\clV_n$ is a product of the form $\prod_{k=1}^n x_k ^{i_k}$ where all the exponents $i_1, i_2, \dots, i_n$ are nonnegative integers.  For simplicity, we define $I=(i_1,i_2, \dots, i_n)$ and set $x^I=\prod_{k=1}^n x_k ^{i_k}$. The degree of this multinomial is defined by $|I|=\sum_{k=1}^n i_k$. A multinomial in $\clV_n$ with coefficients in $\bR$ is a finite linear combination of monomials. We write a multinomial $p$ in the form 
\begin{align}
p(x_1, x_2, \dots, x_n)=\sum_{I} a_I x^I, \ a_I \in \bR.
\end{align}
We define the degree of $p$ as the maximum degree of its constituents monomials and denote it by $\mathsf{deg}(p)$. We denote by $\partial_k p=\frac{\partial p}{\partial x_k}$, $k=1,2,\dots,n$, the multinomial obtained by taking the partial derivative of $p$ with respect to $x_k$.

\begin{proposition}\label{prop:zerovolume}
Let $p$ be a nonzero multinomial over the variables $\clV_n$. Let $\clZ$ denote the zero-set of $p$, namely, 
\begin{align}
\clZ=\{(x_1,x_2,\dots,x_n) \in \bR^n: p(x_1,x_2,\dots,x_n)=0\}.
\end{align}
Then $\clZ$ is Lebesque-measurable with $\lambda(\clZ)=0$, where $\lambda$ denotes the Lebesgue measure in $\bR^n$. 
\end{proposition}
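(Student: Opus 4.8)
The plan is to prove Proposition~\ref{prop:zerovolume} by induction on the number of variables $n$, slicing along the last coordinate and invoking Tonelli's theorem to reduce the $n$-variable statement to the $(n-1)$-variable statement together with the elementary one-variable case. Before that, note that $\clZ$ is automatically Lebesgue-measurable: $p$ is continuous on $\bR^n$, so $\clZ = p^{-1}(\{0\})$ is closed, hence Borel. Thus only the assertion $\lambda(\clZ)=0$ requires an argument.

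For the base case $n=1$, a nonzero univariate multinomial of degree $d=\mathsf{deg}(p)$ has at most $d$ real roots, so $\clZ$ is finite and $\lambda(\clZ)=0$. For the inductive step, suppose the result holds in $n-1$ variables and let $p\neq 0$ be a multinomial in $\clV_n$. Collecting monomials according to the power of $x_n$, write
\begin{align}
p(x_1,\dots,x_n)=\sum_{j=0}^{d} q_j(x_1,\dots,x_{n-1})\, x_n^{\,j},
\end{align}
where $d$ is the largest power of $x_n$ occurring in $p$; then $q_d$ is a nonzero multinomial in the variables $x_1,\dots,x_{n-1}$. Let $\clZ'\subset\bR^{n-1}$ be the zero-set of $q_d$; by the induction hypothesis $\lambda_{n-1}(\clZ')=0$. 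For every fixed $\mathbf{u}=(x_1,\dots,x_{n-1})\notin\clZ'$ the coefficient $q_d(\mathbf{u})$ of $x_n^{\,d}$ is nonzero, so $t\mapsto p(\mathbf{u},t)$ is a nonzero polynomial of degree exactly $d$; hence its root set, which is precisely the slice $\clZ_{\mathbf{u}}:=\{t\in\bR:(\mathbf{u},t)\in\clZ\}$, has one-dimensional Lebesgue measure zero.

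Applying Tonelli's theorem to the nonnegative measurable function $\mathbf{1}_{\clZ}$ on $\bR^{n-1}\times\bR$ yields
\begin{align}
\lambda(\clZ)=\int_{\bR^{n-1}}\lambda_1(\clZ_{\mathbf{u}})\, d\mathbf{u}=\int_{\clZ'}\lambda_1(\clZ_{\mathbf{u}})\, d\mathbf{u}=0,
\end{align}
since the integrand vanishes for $\mathbf{u}\notin\clZ'$ and $\clZ'$ is itself $\lambda_{n-1}$-null. This closes the induction and proves the proposition. The only points needing care are the routine bookkeeping that after regrouping $q_d$ really is a nonzero multinomial in the remaining variables, and the measurability of the slices needed for Tonelli (automatic here because $\clZ$ is Borel), so no genuine obstacle arises. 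Finally, Lemma~\ref{lem:zeroprob} follows immediately: the joint law of $(X_1,\dots,X_n)$ is a product of laws each absolutely continuous with respect to one-dimensional Lebesgue measure, hence absolutely continuous with respect to $\lambda$ on $\bR^n$, so it assigns probability zero to the $\lambda$-null set $\clZ$; that is, $p(X_1,\dots,X_n)\neq 0$ w.p.\ $1$.
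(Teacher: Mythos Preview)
Your proof is correct but follows a genuinely different route from the paper's. You induct on the number of variables $n$: expand $p$ in powers of $x_n$, apply the induction hypothesis to the leading coefficient $q_d$ in the remaining variables, and finish with Tonelli on the slices. The paper instead inducts on the \emph{degree} of $p$: it picks a nonvanishing partial derivative $p_1=\partial_1 p$ (which has strictly smaller degree), applies the induction hypothesis to the zero-set of $p_1$, and on the complement $\clZ\cap\clZ_1^c$ invokes the implicit function theorem to write $\clZ$ locally as the graph of a differentiable function, covering by countably many rational rectangular balls and integrating out one coordinate.

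Your argument is the more elementary and standard one: it needs only that a nonzero univariate polynomial has finitely many roots, plus Tonelli for nonnegative measurable functions. The paper's approach is more geometric and self-contained in the sense that it exposes the local smooth-hypersurface structure of $\clZ$ away from its singular locus, but at the cost of the implicit function theorem and a countable-covering argument. Either route establishes the proposition; yours is shorter, and the remark you add at the end deriving Lemma~\ref{lem:zeroprob} from absolute continuity is exactly what the paper does in its Proposition~\ref{prop:zeroprob}.
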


\begin{proof}
First notice that $p$ is a continuous function from $\bR^n$ to $\bR$. Moreover, the zero-set $\clZ$ can be simply written as $p^{-1}(\{0\})$, where $p^{-1}$ denotes the inverse image of $p$. As $\{0\}$ is a closed set in $\bR$, from the continuity of $p$, it results that $\clZ$ is a closed set in $\bR^n$. Hence, it is Lebesgue-measurable.
Now, we use  induction on the degree of $p$ to show that $\lambda(\clZ)=0$. 

If $\mathsf{deg}(p)=0$, then $p=a_0$ is a constant term with $a_0\neq 0$. In this case the zero-set of $p$ is empty and the result holds. If $\mathsf{deg}(p)=1$ then $p=a_0 + \sum_{k=1}^n a_k x_k$ and the zero-set of $p$ is a hyperplane in dimension $n$. 
As the Lebesgue measure is rotation and translation invariant, the measure of the zero-set of $p$ is equal to the measure of the set $\{(x_1,x_2,\dots,x_n)\in \bR^n: x_1=0\}$, which we know has zero Lebsegue measure.

Now assume that $\mathsf{deg}(p)\geq 2$. Let us define $n$ multinomials $p_k=\partial _k p$ for $k=1,2, \dots,n$. As $\mathsf{deg}(p)\geq 2$, there should be at least one nonzero $p_k$, say $p_1$, where $\mathsf{deg}(p_1)=\mathsf{deg}(p)-1\geq 1$. Let $\clZ_1$ denote the zero-set of $p_1$. From the induction hypothesis, it results that $\lambda(\clZ_1)=0$ and as a result $\lambda(\clZ \cap \clZ_1)=0$. Also let $\clW=\clZ\cap \clZ_1^c$ be the set of all points in the zero-set of $p$ that are not included in the zero-set of $p_1$. If $\clW$ is empty then $\lambda(\clW)=0$. Otherwise, let $x^*:=(x^*_1,x^*_2, \dots, x^*_n)\in \clW$ be an arbitrary point. Since $p(x^*)=0$ but $\frac{\partial}{\partial x_1}p(x^*)=p_1(x^*) \neq 0$, from the implicit function theorem, there is an open neighbourhood $\clO \subset \bR^{n-1}$ containing $(x^*_2, x^*_3, \dots, x^*_n)$ and an open interval $\clI \subset \bR$ containing $x^*_1$, and a differentiable function $g: \clO \to \clI$ such that $x^*_1=g(x^*_2, x^*_3, \dots, x^*_n)$, and $p\big (g(x_2,x_3, \dots, x_n), x_2, x_3, \dots, x_n\big )=0$, for all $(x_2,x_3, \dots, x_n)\in \clO$. 
Let $\clB$ be a rectangular open ball, i.e., $\clB=\{x \in \bR^n: \|x-c\|_\infty < r\}$ for some $c \in \bR^n$ and some $r>0$, that contains $x^*$ and is contained in the open set $\clI \times \clO$. Without loss of generality, assume that the center of this ball $c$ has rational coordinates and its radius $r$ is also rational. We have
\begin{align}
\lambda(\clW\cap \clB)&= \int _{\clW\cap \clB} \ind_{x_1=g(x_2, \dots, x_n)} dx_1 dx_2 \dots dx_n \leq \int _{\clB} \ind_{x_1=g(x_2, \dots, x_n)} dx_1 dx_2 \dots dx_n\\
&= \int_{\widetilde{\clB}} \Big \{\int _{c_1 -r}^{c_1 +r} \ind_{x_1=g(x_2, \dots, x_n)} dx_1 \Big\} dx_2 \dots dx_n=\int_{\widetilde{\clB}} 0\ dx_2 \dots dx_n=0,
\end{align}
where $\widetilde{\clB}=\{ (x_2, \dots, x_n) \in \bR^{n-1}: |x_i -c_i| <r, \ i=2,3, \dots, n\}$ denotes the projection of the $n$-dimensional ball $\clB$ on its last $n-1$ components. 
Let us denote the set of all such rectangular balls corresponding to the points of $\clW$ with $\frkB$. Note that $\frkB$ is a countable set whose elements can be enumerated by $\frkB=\{\clB_1,\clB_2, \dots\}$, where for every $i=1,2,\dots$, we have $\lambda(\clW\cap \clB_i)=0$. Moreover, $\cup_{i=1}^\infty \clB_i$ covers the set $\clW$ since every point $x^*\in\clW$ is contained in at least one of these balls, thus, we have
\begin{align}
\lambda(\clW) =\lambda \Big (\clW \cap \, \cup_{i=1}^\infty \clB_i \Big ) = \lambda\Big (\cup_{i=1}^\infty (\clW \cap \clB_i) \Big )\leq \sum_{i=1}^\infty \lambda(\clW\cap \clB_i)=0. 
\end{align}
Hence, we obtain $\lambda(\clW)=\lambda(\clZ \cap \clZ_1^c)=0$, which together with $\lambda(\clZ \cap \clZ_1)=0$ implies that $\lambda(\clZ)=0$. This completes the induction step and proves the result.
\end{proof}

In this paper, we are interested in a probabilistic version of Proposition \ref{prop:zerovolume} stated in the following proposition.

\begin{proposition}\label{prop:zeroprob}
Let $p$ be a nonzero multinomial over the variables $\clV_n$ with the zero-set $\clZ$. Let $\bP$ be an arbitrary continuous probability distribution over $\bR^n$. Then $\bP(\clZ)=0$.
\end{proposition}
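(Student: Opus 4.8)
The plan is to deduce Proposition~\ref{prop:zeroprob} as an immediate corollary of Proposition~\ref{prop:zerovolume} together with the meaning of a continuous probability distribution. First I would invoke Proposition~\ref{prop:zerovolume}: since $p$ is a nonzero multinomial over $\clV_n$, its zero-set $\clZ$ is a closed — hence Borel-measurable — subset of $\bR^n$ with $\lambda(\clZ)=0$, where $\lambda$ denotes the Lebesgue measure on $\bR^n$.

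Next I would unwind the hypothesis. As in the footnote in Section~\ref{sec:problem}, ``$\bP$ is a continuous probability distribution over $\bR^n$'' is taken to mean that $\bP$ is absolutely continuous with respect to $\lambda$; equivalently, by the Radon--Nikodym theorem, there is a nonnegative density $f\in L^1(\bR^n,\lambda)$ such that $\bP(E)=\int_E f\,d\lambda$ for every Borel set $E\subseteq\bR^n$. Specializing to $E=\clZ$ and using $\lambda(\clZ)=0$ gives
\[
\bP(\clZ)=\int_{\clZ} f\,d\lambda = 0,
\]
since the integral of an integrable function over a $\lambda$-null set vanishes. This is exactly the assertion of the proposition, so the proof is complete.

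For completeness I would also record how this feeds back into Lemma~\ref{lem:zeroprob}: if $X_1,\dots,X_n$ are independent continuous random variables with densities $f_1,\dots,f_n$ on $\bR$, then the joint law of $(X_1,\dots,X_n)$ on $\bR^n$ has density $\prod_{k=1}^n f_k$ with respect to $\lambda$ by independence, hence is a continuous probability distribution in the above sense; applying Proposition~\ref{prop:zeroprob} with $\bP$ the joint law and $\clZ$ the zero-set of the nonzero polynomial $p$ yields $\bP(\clZ)=0$, i.e.\ $p(X_1,\dots,X_n)\neq 0$ w.p.\ $1$.

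As for difficulty: there is essentially none once Proposition~\ref{prop:zerovolume} is available — the entire content here is the passage from ``Lebesgue-null'' to ``$\bP$-null''. The only point that requires care is interpretive rather than technical: the statement genuinely needs $\bP\ll\lambda$ and would fail for a merely non-atomic $\bP$ (e.g.\ one supported on a hyperplane, tested against $p(x)=x_1$), so one should make sure ``continuous distribution'' is read in the absolutely continuous sense before quoting the result. (An alternative, self-contained route that avoids a.c.\ altogether works only for product measures: write $p$ as a polynomial in $x_n$ with leading coefficient $q_d(x_1,\dots,x_{n-1})\not\equiv 0$, condition on $X_1,\dots,X_{n-1}$, use that a nonzero univariate polynomial has finitely many roots, and finish by Fubini and induction; but this proves a different statement than the one asked for and is not needed.)
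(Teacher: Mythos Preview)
Your proof is correct and follows essentially the same approach as the paper: invoke Proposition~\ref{prop:zerovolume} to get $\lambda(\clZ)=0$, then use that a continuous distribution is by definition dominated by Lebesgue measure to conclude $\bP(\clZ)=0$. Your write-up is slightly more explicit (spelling out Radon--Nikodym and the integral over a null set), and your remark about the need to read ``continuous'' as ``absolutely continuous'' is a sensible caveat, but the substance is identical to the paper's one-line argument.
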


\begin{proof}
From Proposition \ref{prop:zerovolume}, we obtain that $\lambda(\clZ)=0$. As $\bP$ is a continuous probability distribution, by definition, it is dominated by the Lebesgue measure $\lambda$, thus, $\lambda(\clZ)=0$ implies that $\bP(\clZ)=0$.
\end{proof}

In the special case where $\bP$ is the product of one dimensional continuous distributions, we obtain the proof of Lemma \ref{lem:zeroprob}. \vspace{1mm} 

{\bf Proof of Lemma \ref{lem:zeroprob}:} Let $p$ be the nonzero multinomial, let $X_1, X_2, \dots, X_n$ be the sequence of $n$ independent (not necessarily identically distributed) real-valued random variables, and denote by $\bP_i$ the probability distribution of $X_i$. Since $X_i$ are independent, it results that $(X_1, X_2, \dots, X_n)$ has the product distribution $\bP=\bP_1\times \bP_2 \times \dots \times \bP_n$. As each $\bP_i$ is continuous with respect to the $1$-dimensional Lebesgue measure, $\bP$ will be continuous with respect to the $n$-dimensional Lebesgue measure $\lambda$, which implies that $(X_1, X_2, \dots, X_n)$ has a continuous distribution. Thus, using the Proposition \ref{prop:zeroprob}, we obtain the result.  \hfill $\square$

\end{document}